\documentclass[12pt]{article}

\usepackage{amsfonts}
\usepackage{multicol}
\usepackage{amsthm}
\usepackage{graphics}
\usepackage{graphicx}
\usepackage{amsmath}
\usepackage{amsxtra}
\usepackage{amstext}
\usepackage{amssymb}
\usepackage{latexsym}
\DeclareFontFamily{U}{lasy}{}
\DeclareFontShape{U}{lasy}{m}{n}{<-> s * lasy10}{}
\DeclareFontShape{U}{lasy}{b}{n}{<-> s * lasyb10}{}
\usepackage[dvipsnames]{xcolor}
\usepackage{comment}

\usepackage{pgfplots}
\pgfplotsset{compat=1.18}

\usepackage{newpxtext,newpxmath}

\usepackage[authoryear]{natbib}
\usepackage[pdftex,bookmarks=true,bookmarksopen=true,pdfpagemode=UseOutlines,pdfstartview=FitH]{hyperref}

\usepackage{bbm}
\usepackage{dsfont}

\usepackage{chngcntr}
\usepackage{aliascnt}
\usepackage{apptools}
\AtAppendix{\counterwithin{lemma}{section}}
\AtAppendix{\counterwithin{proposition}{section}}

\usepackage{caption}
\usepackage{subcaption}
\usepackage{tikz}
\usepgflibrary{shadings}
\usetikzlibrary{shapes.geometric}
\usepackage[T1]{fontenc}
\definecolor{uncblue}{RGB}{75, 165, 211}
\definecolor{nberblue}{RGB}{0, 90, 155}
\definecolor{resgreen}{RGB}{34, 84, 67}
\definecolor{pennblue}{RGB}{0, 44, 119}
\definecolor{pennred}{RGB}{152, 30, 50}

\usepackage{setspace}

\usepackage{amsmath}

\DeclareMathOperator*{\argmin}{arg\,min}

\newtheorem{theorem}{Theorem}

\newtheorem{claim}{Claim}

\newtheorem{corollary}{Corollary}

\theoremstyle{definition}
\newtheorem{definition}{Definition}
\theoremstyle{plain}

\newtheorem{lemma}{Lemma}

\newtheorem{proposition}{Proposition}
\theoremstyle{remark}
\newtheorem{remark}[theorem]{Remark}
\theoremstyle{plain}

\hypersetup{%backref,
	pdfborder = {0 0 0},
	urlbordercolor = {0 0 0},
	colorlinks=true,
	linkcolor=pennblue,
	urlcolor=pennred,
citecolor=pennblue}

\usepackage{float}

\begin{document}
\newcommand{\reals}{\ensuremath{\mathbb{R}}}
\newcommand{\R}{\ensuremath{\mathbb{R}}}
\newcommand{\Z}{\ensuremath{\mathbb{Z}}}
\newcommand{\E}{\ensuremath{\mathbb{E}}}
\newcommand{\expect}{\ensuremath{\mathbb{E}}}
\newcommand{\dint}{\ensuremath{\,\mathrm{d}}}
\newcommand{\signalprofile}{signal profile\xspace}
\newcommand{\noise}{noisemaker\xspace}
\newcommand{\signal}{\ensuremath{s}}
\newcommand{\shirk}{\ensuremath{y}}
\newcommand{\signals}{\ensuremath{\mathbf{s}}}
\newcommand{\shirks}{\ensuremath{\mathbf{y}}}

\newcommand{\action}{\ensuremath{a}}
\newcommand{\actions}{\ensuremath{\mathbf{a}}}
\newcommand{\actionspace}{\ensuremath{\mathcal{A}}}
\newcommand{\permutation}{\ensuremath{\mathbf{i}}}
\newcommand{\contracts}{\ensuremath{\mathcal{W}}}
\newcommand{\contract}{\ensuremath{w}}

\newcommand{\state}{\ensuremath{x}}
\newcommand{\statedist}{\ensuremath{F}}

\newcommand{\FL}{\color{pennred} }
\newcommand{\DZ}{\color{nberblue} }
\newcommand{\SI}{\color{orange} }

\title{Monitoring Hierarchies in Knowledge Production
\thanks{Ichihashi: Queen's University (\href{mailto:s.ichihashi@queensu.ca}{\texttt{s.ichihashi@queensu.ca}}); Li: University of North Carolina at Chapel Hill (\href{mailto:lifei@email.unc.edu}{\texttt{lifei@email.unc.edu}}); Zou: University of North Carolina at Chapel Hill (\href{mailto:dihan@email.unc.edu}{\texttt{dihan@email.unc.edu}}). We have benefited from helpful discussions with Yu Awaya, Arjada Bardhi, Gary Biglaiser, Nina Bobkova, Aislinn Bohren, Doruk Cetemen, Yi-Chun Chen, Jaden Chen, Joyee Deb, Rahul Deb, Francesc Dilmé, Miaomiao Dong, Marina Halac, Wei He, Ashwin Kambhampati, Jacob Kohlhepp, Maciej H. Kotowski, Tan Gan, Jiangtao Li, Erik Madsen, Margaret Meyer, Peter Norman, Ichiro Obara, Paula Onuchic, Marcin Pęski, Ferdinand Pieroth, David Rahman, Joao Ramos, Anna Sanktjohanser, Satoru Takahashi, Can Tian, Curtis Taylor, Huseyin Yildirim, Kun Zhang, and audiences at numerous conferences and seminars. We are especially indebted to Steven Callander, whose constructive and detailed suggestions greatly improved the paper.}} 

\author{Shota Ichihashi \and Fei Li  \and Dihan Zou}

\date{\today}
\maketitle

\begin{abstract}
We study peer monitoring design in knowledge production. A principal leads agents who work on different but related tasks. By working on their own tasks, agents acquire information that is useful for evaluating their peers’ performance. We consider robust contracts under which effort by all agents is the unique rationalizable outcome. The optimal contract induces a hierarchy of monitoring authority: agents easier for the principal to monitor directly are assigned greater authority to evaluate other agents. Task diversity improves information value but weakens peer monitoring. This tension distorts optimal public task assignment toward the principal’s core expertise. Confidential random assignment relaxes this distortion by decoupling information production from the perceived monitoring structure.
	\thispagestyle{empty}

\end{abstract}

\clearpage

\pagenumbering{arabic} 

\section{Introduction}

Knowledge production accounts for a large share of economic activity in modern economies.\footnote{See \url{https://ncses.nsf.gov/pubs/nsb20247/executive-summary}.}  Much of this activity involves complex projects, which are divided into tasks and assigned to different agents. Providing incentives requires evaluating agents' performance, yet such evaluation is difficult because it depends on task-specific knowledge. At the same time, agents acquire expertise through their own work and are often well positioned to assess closely related tasks.
This makes peer monitoring a natural organizational response.

One example is software engineering. Engineers who work on related components of a system often understand the relevant code, dependencies, and failure modes better than outside evaluators. Code review is a common institutionalized way to bring such peer expertise into the assessment of technical work \citep{sadowski2018modern}. 
Another example is experimental knowledge production, such as materials science or drug development. Researchers who work under closely related experimental conditions are often better able to judge whether results are plausible, procedures were followed, or anomalies reflect mistakes rather than discoveries.

The use of peer monitoring for incentive provision raises several challenges. First, agents acquire different knowledge through their own work and hence differ in their ability to evaluate other tasks. How should \emph{monitoring authority} be structured across agents? Second, the information used for peer monitoring is itself produced through agents’ costly effort. Consequently, an agent’s incentive to work may depend on whether she expects her evaluators to work, whose incentives may in turn depend on still other agents. How can an organization establish \emph{credible monitoring} when agents’ incentives to acquire information are interdependent?  Third, the effectiveness of peer monitoring depends on how closely agents’ tasks are related. What implications does this have for the way organizations \emph{assign tasks}?

We propose a principal-agent model of spatial learning.
The principal is a lead scientist or engineer who manages a team of agents. Each agent is assigned to a task, represented as a location in a unit interval. The one-dimensional representation of the task space is a parsimonious way to capture a dimension of task relatedness. Each location has a stochastic local state, representing the uncertainty that knowledge production seeks to resolve.\footnote{In software development example, this dimension may be a user segment, a market, or a component architecture; in experimental research, it may be an experimental condition such as temperature, pressure, or compound composition.} An agent who exerts costly effort learns the state at her location and reports it to the principal; if she shirks, her report is uninformative. The principal observes the state at location zero, which represents her own core expertise within the project, but does not observe the states of agents’ tasks. Local states are generated by a Markov process, so nearby tasks are more likely to have the same state.
Thus, distance in the task space measures task relatedness.

Agents’ work creates no direct production externalities, but it generates information useful for evaluating related tasks. Leveraging this informational spillover, the principal designs the peer-monitoring structure through the contract: agent $j$ monitors agent $i$ if $j$’s report affects $i$’s wage. Peer monitoring, however, creates strategic interdependence. An agent may be reluctant to exert effort if she is unsure whether the peer assigned to evaluate her will also work and acquire the knowledge needed to assess her task. We therefore study robust contracts under which effort by all agents is the unique rationalizable outcome, equivalently, the unique outcome surviving iterated elimination of strictly dominated strategies (IESDS). This establishes monitoring credibility without relying on equilibrium coordination.

We first show that robust implementation induces a hierarchical allocation of monitoring authority. Reports from the principal and higher-layer agents may affect the wages of lower-layer agents, but not conversely. The hierarchy corresponds to the order of reasoning generated by IESDS. Agents whose shirking strategies are eliminated earlier must work regardless of how agents whose shirking strategies are eliminated later behave. They therefore occupy higher layers of the monitoring hierarchy. Once their incentives are secured, their reports become credible sources of information for disciplining later agents. Conversely, conditioning an earlier-eliminated agent’s wage on the report of a lower-layer, later-eliminated agent is never useful in a cheapest robust contract.

We then characterize the optimal monitoring hierarchy. Because the principal’s signal is more informative about tasks closer to her location, agents are ranked by direct monitorability from the principal. The optimal robust contract follows this order. The principal first disciplines the agent whose task is closest to her own expertise; that agent’s signal becomes credible and is used to discipline the next agent; and so on. The resulting optimal monitoring structure is a chain. In applications, this is a hierarchy of evaluative authority: a lead scientist or tech lead uses credible information from agents working near her core expertise to evaluate agents working on more distant or specialized tasks. Under this chain structure, the high-dimensional contract-design problem is reduced to a sequence of local incentive problems. The robust requirement makes the resulting incentive cost additive: each agent’s contribution depends only on the distance between her task and her immediate monitor's task.

Robust implementation creates a wedge between informational authority and monitoring authority. Along the target outcome, all agents work, so a lower-layer agent’s report may be informative about the performance of an agent above her in the hierachy. From the perspective of the sufficient-statistic principle, one would like to use such information in evaluation. Robustness prevents this because monitoring authority must flow downward in the hierarchy. The resulting underuse of information makes robust contracts more expensive than contracts that rely on upward or mutual monitoring, which can reduce payments but create circular incentive dependence and equilibrium multiplicity.

We next derive the implications of this hierarchy for task design. Task allocation plays a dual role: it determines both the value of the information produced by the team and the effectiveness of peer monitoring. To model the value of information, we suppose the principal must make a decision at every location in the task space and incurs a loss when her decision differs from the local state. The first-best allocation spreads agents across the task space to reduce overlap in their expertise and improve informational coverage.

Robust incentives push in the opposite direction. Incentive costs are lower when agents are closer to their monitors and closer to the principal. Deterministic task assignment therefore exposes a trade-off between diversity in knowledge production and agency costs. Because the hierarchy characterized above makes robust incentive cost separable across neighboring monitoring links, this trade-off can be expressed tractably. The optimal deterministic allocation still equalizes neighboring distances, but the common distance is smaller than in the first best.  
Robust incentives therefore distort knowledge production toward the principal’s core expertise: the organization over-invests in familiar, easily monitored parts of the project and under-explores more novel regions of the task space. This concentration result is a consequence of robust implementation rather than a generic implication of moral hazard. Under the Bayesian Nash benchmark, the contract can use cyclic peer monitoring among agents and task allocation can be distorted in qualitatively different ways.

A transparent deterministic assignment leads to distortion because it cannot separate the two roles of task allocation: knowledge production and incentive provision. On the other hand, confidential random assignment can decouple these roles. Because each agent observes only her own realized task, she might remain uncertain about whether other agents have been assigned to nearby tasks, and hence about the realized monitoring hierarchy.

With high probability, agents are assigned to first-best locations, which preserves broad exploration. With small probability, some agents are assigned to colocated or closely related tasks and are promised large bonuses. Even an agent assigned to her first-best location cannot rule out being in a rare monitoring state in which a nearby peer can credibly evaluate her. As these rare monitoring states become less likely and the associated bonuses grow, expected payments approach the minimum possible incentive cost while information loss approaches the first best. Therefore, confidentiality relaxes the hierarchy-induced distortion by allowing the principal to design not only who monitors whom, but also what monitoring hierarchy agents perceive as possible.

\paragraph{Related literature.} 
Our paper contributes to the literature on peer monitoring in information production and elicitation. A common theme is that when agents acquire correlated information, compensation based on agreement in reports can be used to provide incentive.  
See, for example, \cite{maskin1999nash}, \cite{pesendorfer2003second}, \cite{bohren2019peer}, and \cite{azrieli2021monitoring}.  
However, agreement-based mechanisms can induce strategic risk, which makes information production vulnerable to self-fulfilling multiplicity. We therefore study how contracts can eliminate this risk by making the desirable outcome uniquely rationalizable, and characterize the resulting monitoring structure and task allocation. A related concern is emphasized by \cite{pei2025robust}, who show that agreement-based elicitation schemes can be fragile under a different notion of robustness based on small perturbations of agents’ preferences.

We join the growing literature on robust incentive provision in team production, where agents’ incentives are pinned down sequentially through an iterated-elimination logic. In \cite{winter2004incentives} and \cite{halac2021rank}, agents’ efforts jointly determine team success, which is the only contractible performance measure. These papers study how such technological complementarities shape the robust compensation structure, including differential pay and wage confidentiality. Recent work on robust contracting in team production assumes that the principal has access to a rich set of contractible monitoring signals. See, e.g., \cite{cusumano2023misaligning}, \cite{halac2023monitoring}, and \cite{camboni2023monitoring}. In contrast, we abstract from production externalities and instead focus on informational complementarities arising from task similarity. Information acquired while performing one task is also informative about related tasks, allowing the principal to design peer-monitoring relationships. This extends the robust contracting framework to encompass the endogenous design of monitoring authority and yields new insights into the optimal organization of peer monitoring.

Also related is the spatial learning literature pioneered by \cite{jovanovic1990long}, \cite{callander2011searching} and \cite{garfagnini2016social}. 
Recent developments (e.g., \cite{bardhi2022attributes}, \cite{Arjadanina}, \cite{dong2023does}, and \cite{aybas2023efficient}) study various agency problems under spatial learning. See \cite{bardhi2026learning} for a survey. 
Because our goal is to model how task relatedness shapes peer monitoring, rather than the full complexity of the learning process, we depart from the Gaussian-process approach common in this literature and introduce a simpler Markov-chain specification.

Finally, our paper complements the literature on knowledge hierarchies in organizations, initiated by \cite{garicano2000hierarchies}. That literature studies how organizations allocate problem-solving authority when knowledge is dispersed: lower-level agents solve routine problems, while rarer and more difficult cases are referred upward. Our hierarchy is different. Tasks differ not in problem frequency, but in how effectively they can be monitored. The central question is how to provide incentives when knowledge generated on one task helps evaluate nearby tasks. Thus, we study a hierarchy of monitoring authority, in which higher-level agents influence the compensation of lower-level agents through their reports. This highlights a distinct role of task allocation: it shapes not only who solves which problems, but also who can credibly monitor whom.

\section{Model}\label{sec_model}

A principal organizes a team of agents to work on related knowledge-production tasks. He both manages the team and participates directly in production. By working on her own task, an agent produces task-specific information that is also useful for evaluating the quality of others' work.
The principal's problem is to use these informational spillovers and design monitoring and incentives.

\subsection{Environment}

\paragraph{Agents and tasks.} We represent tasks as locations on the interval $[0,1]$. Each location $t\in[0,1]$ is associated with a binary local state $x(t)\in\{A,B\}$. Agent $i\in\mathcal{N}\triangleq\{1,\ldots,n\}$ is assigned to location $t_i\in[0,1]$. Agent $i$'s knowledge-production task is to learn the local state $x(t_i)$, which requires costly effort. Without loss of generality, we assume $t_1\leq\cdots\leq t_n$ and denote $\tau=(t_1,...,t_n)$ as agents' task profile. 
The main analysis takes task allocation as exogenous; Section~\ref{sec_endo} endogenizes it.

To capture task relatedness, we model $x(\cdot)$ as the realized path of a Markov process on $[0,1]$, so that nearby locations are more likely to have the same local state. Specifically, define the overall \emph{state} $x:[0,1]\rightarrow \{A,B\}$ 
as a realized path of the following Markov process: 
The initial state, $x(0)$, is a uniform random draw from $\{A,B\}$. Moving along the interval, switches in the state occur according to a Poisson process with rate $\lambda>0$: at each location $t$, the distance to the next shock is exponentially distributed with mean $1/\lambda$. 
If no shock occurs at $t$, the path $x(\cdot)$ is continuous at that location. If a shock arrives, then $x(t)$ is drawn independently and uniformly from $\{A,B\}$. See Figure \ref{fig_state} for the illustration of a realized state. The distribution of the state $(x(t))_{t \in [0,1]}$ is commonly known, and through its marginal distributions, it induces the distribution of local states relevant to agents, $(x(t_i))_{i=1}^n$.

\begin{figure}[ht]
\begin{center}
\footnotesize
\begin{tikzpicture}[scale=4.5]

\node at (1.95,0) {$t$};

\node at (1.8,-0.05) {$1$};
\node at (0,-0.05) {$0$};
\node at (0.5, -0.05) {$t'$};
\node at (1, -0.05) {$t''$};
\node at (1.4, -0.05) {$t'''$};

\node at (0,0.85) {$x(t)$};
\node at (-0.05,0.6) {$B$};
\node at (-0.05,0) {$A$};

\draw [->, thick,gray] (0,0) -- (0,0.8);

\draw [->, thick,gray] (0,0) -- (1.9,0);

\draw [ultra thick,nberblue] (0,0.6) -- (0.5,0.6);
\draw [gray, dashed] (0.5,0) -- (0.5,0.6);
\draw [ultra thick,nberblue] (0.5,0) -- (1,0);
\draw [gray, dashed] (1,0) -- (1,0.6);
\draw [ultra thick,nberblue] (1,0.6) -- (1.8,0.6);
\draw [gray, dashed] (1.4,0) -- (1.4,0.6);

\end{tikzpicture}

\end{center}
\caption{Illustration of the State}
\bigskip
{\scriptsize Notes: In this example, $A$ and $B$ are real numbers such that $A<B$. The state realization is such that $x(t) = A$ if $t \in [t', t'')$ and $x(t) = B$ otherwise. It corresponds to any realized path of the Markov process such that the initial state is $x(0)=B$ and state transitions occur at $t'$ and $t''$. In this example, shocks arrive at $t', t''$, and $t'''$, but state transitions occur only at $t'$ and $t''$.}

\label{fig_state}
\end{figure}

Modeling the state as a realized path of a Markov process allows us to formalize the \emph{similarity of tasks}. 
As we show in Section \ref{sectionKey}, closer locations are more likely to have the same local state.
Thus, information generated by agent $i$’s work is more informative about the local states of nearby tasks.

We assume that the principal observes the realized state $x(0)$ for $t=0$ at no cost and can use it for contracting, but he does not directly observe any other local states.
Location $0$ thus serves as a benchmark problem the principal can handle directly, while locations farther from $0$ represent tasks that are less familiar to the principal. This assumption captures environments in which the principal not only manages the organization but also has direct expertise in a core part of its work, such as a lead engineer or principal investigator.

Although the underlying economics we want to highlight does not require placing the principal’s expertise at $t=0$, this normalization yields a natural ordering of tasks by their proximity to the principal’s expertise: 
the principal’s information is more informative about tasks closer to $0$ and less informative about tasks farther away. The ordering $t_1\leq\cdots\leq t_n$ therefore ranks agents by how easily the principal can monitor their tasks. 
This direct monitorability is the only source of heterogeneity across agents.\footnote{With an interior location for the principal, agents on each side could be ordered by distance from the principal, and the subsequent analysis would apply separately to the two sides.}

\paragraph{Information production.}

Agents can exert costly effort to discover realized local states of their assigned tasks. Specifically, if agent $i$ exerts effort, she incurs a cost of $c > 0$ and produces a signal $\signal_i$ that equals the local state of her location, i.e., $\signal_i = x(t_i)$.\footnote{ We assume that agents can exert effort only for tasks to which they are assigned. This naturally fits applications such as equipment maintenance, software debugging, auditing, and experimental research, where information acquisition is tied to task-specific access, equipment, resources, or direct participation.}  
If agent $i$ shirks, which is costless, her signal \(\signal_i\) is independent of the state and instead determined by a third party, called the \emph{noisemaker}.

The noisemaker is a player who chooses a noise profile \((\shirk_1, \dots, \shirk_n) \in \{A, B\}^n\), where each \(\shirk_i\) represents the realization of the signal produced by agent $i$ if (and only if) shirking.  
The noisemaker is indifferent between all possible noise profiles in $\{A, B\}^n$ (i.e., its payoff equals $0$ for any strategy profile) and moves simultaneously with agents.

The noisemaker is a modeling device that captures a fundamental ambiguity: when an agent shirks, neither the principal nor other agents can predict—or even form beliefs about—the signal she generates (or fabricates). One can interpret the noisemaker’s output as a shirker’s random, uninformed guess about her local state.
Because our objective is to implement the outcome in which all agents exert effort, we model a shirker's signal as arbitrary off path. At the same time, assigning this off-path signal to the noisemaker keeps each agent's strategy space binary, which facilitates the characterization of robust contracts.
Section \ref{sectionKey} provides a detailed discussion.

\paragraph{Contract.} 
The realized state $x$, agents' actions, and the noisemaker's choice $(\shirk_1,..., \shirk_n)$ determine a \emph{signal profile}, denoted by $
\signals=(\signal_0, \signal_1,..., \signal_n) \in \{A, B\}^{n+1}$. Here, $\signal_0$ denotes the principal's signal, which equals the local state $x(0)$ at $t=0$.
Other signals are agents', where $\signal_i = \state(t_i)$ if agent $i$ exerts effort and $\signal_i =y_i$ if she shirks.
A signal profile $\signals$ is \emph{truthful} if $\signals$ is a result of all agents exerting effort, i.e., $\signals = (x(0), x(t_1),..., x(t_n))$.
A truthful signal profile reveals the local states of at most $n+1$ locations.

The principal commits to how to compensate agents based on a realized signal profile.
Formally, the principal chooses a \emph{contract}, denoted by 
$
w = (w_1(\signals),..., w_n(\signals))_{\signals \in \{A, B\}^{n+1}}$, 
Under this contract, agent $i$ receives $w_i(\signals) \ge 0$ when the signal profile is $\signals$. 
We assume limited liability, so payments to agents must be nonnegative.\footnote{We do not explicitly model the agent's participation decision, but assuming a zero outside option ensures that each agent weakly prefers accepting the assigned task and choosing $a_i = 0$ to opting out.}

The signal profile is the only contractible object. This reflects the idea that in knowledge-production environments, the quality of task performance is often difficult for outsiders to verify directly. Relevant contingencies are local and task-specific, so whether a task has been performed well is often best assessed by someone working on a related problem. Incentive provision must therefore rely on the informational relationship across tasks. The central challenge is that monitoring ability is itself generated by work on related tasks.

\paragraph{Payoffs.} 
Agents are risk-neutral:
The payoff of an agent is equal to the payment from the principal minus effort cost $c$ if she exerts effort; otherwise, the payoff is equal to the payment.

\paragraph{Robust contract design.} 
Each contract $w$ induces an incomplete-information game, denoted by $\Gamma(w,\tau)$. In this game, each agent $i$ chooses $a_i\in\{0,1\}$, where $a_i=1$ denotes effort and $a_i=0$ denotes shirking, while the noisemaker chooses a profile $(\shirk_1,..., \shirk_n) \in \{A, B\}^n$, where $\shirk_i$ is the signal assigned to agent $i$ if she shirks. Payoffs are as described above. 

The principal's goal is to choose a least-cost contract that induces all agents to work in every correlated rationalizable strategy profile of $\Gamma(w, \tau)$ \citep{fudenberg1991game}.
We call any such contract \emph{robustly implements effort (RIE)} and write $\contracts(\tau)$ for the set of all RIE contracts under task profile $\tau$.
An RIE contract eliminates strategic risk in effort provision and only relies on agents' rationality and beliefs about their opponents' rationality, and so on.

The principal's problem is to choose a contract $w\in \contracts(\tau)$ to minimize total expected payment. Formally, given a contract $w$, the total expected payment when all agents exert effort is 
\[\hat K(w, \tau) \triangleq \sum_{i=1}^n \sum_{\signals \in \{A, B\}^{n+1}} w_i(\signals)\Pr(\signals|\tau),\]
where $\Pr(\signals|\tau)$ is the probability of truthful signal profile $\signals$ under task profile $\tau$.
The optimal robust contracting problem is then 
\begin{align}\label{inv}\tag{P-C}
K(\tau)\triangleq\inf_{w\in\contracts(\tau)}\hat K(w,\tau).
\end{align}
The problem \eqref{inv} generally does not have a minimum. We call the infimum of total expected payment, $K(\tau)$, the \emph{incentive cost} given task profile $\tau$, and we study approximately optimal contracts. That is, we look for a sequence $\{w^m\}$ of contracts such that each $w^m$ robustly implements effort and $\lim_{m\rightarrow\infty}\hat K(w^m,\tau)= K(\tau)$. With a slight abuse of notation, we call $w^*=\lim_{m\rightarrow\infty}w^m$ an \emph{optimal contract}. Note that an optimal contract itself may not be a RIE.

By the standard argument \citep{bernheim1984rationalizable,pearce1984rationalizable}, a contract $w$ belongs to $\contracts(\tau)$ if and only if all agents exert effort in every outcome of $\Gamma(w,\tau)$ that survives iterated elimination of strictly dominated strategies (IESDS). The noisemaker is indifferent among all choices in $\{A,B\}^n$, so its strategies are never eliminated in this process. We therefore use IESDS as an equivalent notion of robust implementation.

\subsection{Discussion of Assumptions}
\label{sectionKey}
Before moving forward, we discuss some key assumptions of the model. 

\paragraph{State distribution.}
Our Markov process formulation of the state has several useful properties.
First, the local state of one location is informative about other local states.
To see this, take two locations, $t$ and $t'>t$.
We have $x(t)\neq x(t')$ if and only if, in the interval $(t,t']$, (i) at least one shock arrives, which occurs with probability $\int^{t'-t}_0\lambda e^{-\lambda s}ds$, and (ii) the realized local state at the last 
shock differs from $x(t)$, which occurs with probability $1/2$.
As a result, the two local states take the same value with probability 
\begin{equation}\label{cor}
\Pr[x(t)= x(t')]=1- \frac{1}{2}\int^{t'-t}_0\lambda e^{-\lambda s}ds
= \frac{1}{2} + \frac{1}{2}e^{-\lambda(t'-t)}, 
\end{equation}
which is strictly greater than $1/2$. Thus, local states are positively correlated across locations. Moreover, this correlation decreases with 
$|t-t'|$: the state at a nearby location provides a more informative signal about an agent’s task than the state at a distant one. 
Thus, the local state $x(t)$ provides a more precise monitoring signal for the performance of an agent assigned to task $t'$ when the two tasks are closer. The decline is steeper at short distances and flatter at long distances, so the informational effect of proximity is strongest locally.

Second, the local states of the nearest sampled locations are sufficient statistics for an unsampled location:
Formally, suppose that the principal observes the local states of locations $t_1<\cdots < t_m$.
The principal's belief over the local state at any unsampled location $t\in (t_j, t_{j+1})$ is given by
$\Pr\big(x(t)=A|\{x(t_i)\}_{i=1,...,m}\big)=\Pr\big(x(t)=A|x(t_j), x(t_{j+1})\big)$, 
because the state $x(\cdot)$ is a realized path of a Markov process. 

Third, we assume that if a shock arrives at $t$, then $x(t)$ is a uniform draw from $\{A,B\}$. Thus, conditional on a shock at $t$, the distribution of local state $x(t)$ is orthogonal to the local states of location $s<t$. This independence would fail if the local state deterministically switched upon each shock, i.e., conditional on a shock at $t$, $x(t)\neq x(t^-)$ with probability $1$. 

Finally, we adopt a binary-state Markov process to capture environments in which local correlation arises from regime persistence. The underlying state is locally stable and changes only when the task environment crosses a discrete boundary. Thus, nearby tasks share the same regime unless such a boundary lies between them. This specification captures a form of uncertainty in which the relevant task environment is piecewise stable rather than smoothly varying. It differs from the Brownian-motion formulation prevalent in spatial learning (see, e.g., \cite{callander2011searching} and \cite{bardhi2026learning}), which is natural when local correlation comes from diffusive accumulation: underlying conditions evolve continuously, nearby tasks are similar because incremental variance over short distances is small, and large changes over short distances are unlikely.

Accordingly, our model is naturally applied to environments characterized by persistent local regimes and occasional discrete breaks. For example, in diagnosing a production, maintenance, or software system, most adjacent components may operate normally until one reaches the point at which a failure or defect begins; after that boundary, nearby components are likely to share the faulty regime. Similar patterns arise in organizational audits, where compliance problems may be concentrated within a unit or process; in experimental research, where a target property may appear only within a particular range of temperatures or compositions; or in project teams, where a common coordination shock may impair a cluster of tasks. In such settings, nearby tasks are informative
because regimes tend to persist locally and change only at discrete boundaries.

\paragraph{Noisemaker.} We use the noisemaker to capture realistic deviations from the desired behavior while keeping each agent's action binary. Effort represents acquiring information and reporting it truthfully. Shirking represents departures from this benchmark, whether through a failure to acquire information, strategic misrepresentation, or both. The noisemaker summarizes the informational consequences of these deviations for other agents and the principal. Because these consequences need not be known or commonly understood off path, we impose no known distribution on the noisemaker’s signals. This preserves the relevant strategic uncertainty while keeping the IESDS characterization tractable.

\section{Optimal Contract}

\subsection{Monitoring Hierarchies}

We decompose the contract design problem into two steps. The first is \emph{monitoring structure design}: for each agent, determine whose signals are used to evaluate her performance. The second is \emph{compensation design}: given a monitoring structure, choose payments to implement effort at lowest cost.

This two-step decomposition serves two purposes.
Economically, it yields a natural language for describing the peer-monitoring relationships induced by a contract and for discussing how RIE requirement shapes the structure of peer monitoring. Analytically, it reformulates the original high-dimensional contract design problem into a sequential one, in which each step involves a lower-dimensional choice. 

We now introduce notation to formalize a monitoring structure and then show that, for the principal's cost-minimization problem, it is without loss to focus on hierarchical monitoring structures.

\begin{definition}
Fix any contract $w$.
Node $j\in \mathcal N\cup\{0\}$ \emph{monitors} (or has \emph{monitoring authority} over) agent $i\in\mathcal N$, with $j\neq i$, if $j$'s signal affects $i$'s payment; that is, if, for $s_j=A$ and $s'_j = B$, there exists a profile of the remaining signals $s_{-j}$ such that
$
w_i(s_j,s_{-j})\neq w_i(s_j',s_{-j}).
$
The \emph{monitoring structure} induced by contract $w$ is the directed graph $G(w)$ on $\mathcal N\cup\{0\}$ with an edge $j\to i$ whenever $j$ monitors $i$.
\end{definition}

Monitoring authority captures an agent’s ability to influence another agent’s compensation through her signal.
In general, monitoring authority between any pair of agents may be mutual (i.e., both $i\to j$ and $j\to i$ are in $G(w)$), one-sided, or absent.
The monitoring structure records, for each agent $i$, the set of agents over whom agent $i$ has such authority. 
We are particularly interested in monitoring structures induced by RIE contracts.

Monitoring authority allows us to connect peer-monitoring structure with robust implementation.
The IESDS characterization of robust implementation implies that any RIE contract induces an ordering in which agents' shirking strategies are eliminated.
Agents whose shirking strategies are eliminated earlier must work independently of how later-round agents behave.
Their signals can therefore serve as credible sources of information for disciplining agents whose incentives are secured only in subsequent rounds.
This ordering does not by itself require the original graph $G(w)$ to be hierarchical, because $G(w)$ records all payment dependence, including dependence that is unnecessary for incentives.
The result below shows that, for cost minimization, such unnecessary dependence can be removed: Any RIE contract can be replaced by a weakly cheaper RIE contract in which monitoring authority flows only from earlier to later rounds of the elimination order.

To formalize this idea, we first define a class of contracts under which we can order agents according to their monitoring authority.

\begin{definition}\label{definition1}
	A monitoring structure $G$ is \emph{hierarchical} if there exists an ordered partition $H=(h_1,\dots,h_m)$ of the set $\mathcal N$ of agents that satisfies the following: (i) for every edge $j\to i$ in $G$, either $j=0$ or $j\in h_{\ell'}$ and $i\in h_\ell$ for some $\ell'<\ell$; and (ii) every agent is reachable from the principal through a directed path in $G$. 
\end{definition}
Under a hierarchical monitoring structure, monitoring can flow only from the principal or from a higher layer to a lower layer.
The hierarchy restricts the direction of permissible monitoring links but does not restrict which permissible links are used.\footnote{For example, if $h_1 = \{1\}$, $h_2 = \{2\}$, and $h_3 = \{3\}$, agent $3$'s compensation may depend on agent $1$'s signal even when it does not depend on agent $2$'s signal, i.e., $1 \to 3$ may be in $G$ even if $2 \to 3$ is not. Conversely, Definition \ref{definition1} does not require either link to be in $G$.}
Monitoring hierarchies can take a variety of forms. At one extreme is a \emph{star hierarchy}, in which all agents are placed in a single layer and monitored directly by the principal, corresponding to the classical monitoring story of \citet{alchian1972production}. 
At the other extreme is a \emph{chain hierarchy}, in which agents are ordered sequentially and each layer contains a single agent.

\begin{lemma}\label{prop-RIE}
	For any hierarchical monitoring structure, there is an RIE contract that induces the same monitoring structure.
For any RIE contract, there exists an RIE contract that induces a hierarchical monitoring structure and attains weakly lower expected payments.
\end{lemma}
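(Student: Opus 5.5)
The lemma has two parts, and I would prove them separately: the first by an explicit agreement-based construction, the second by a layering-and-projection argument built on the IESDS order.

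\textbf{Part 1: a hierarchical structure can be implemented.} Fix a hierarchical structure $G$ with ordered partition $(h_1,\dots,h_m)$, and let $M_i$ be the set of in-neighbours of agent $i$ in $G$. By condition (ii) of Definition~\ref{definition1}, $M_i\neq\emptyset$. By condition (i), $M_i\subseteq\{0\}\cup h_1\cup\dots\cup h_{\ell-1}$ when $i\in h_\ell$. I would use the agreement contract
\[
w_i(\signals)=b_i\,\mathbbm 1\{\signal_j=\signal_i\ \text{for all } j\in M_i\}.
\]
For every $j\in M_i$ there is a profile $\signal_{-j}$ at which flipping $\signal_j$ destroys agreement. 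Hence the induced structure is exactly $G$.

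I would then argue by induction on $\ell$ that in round $\ell$ of IESDS shirking is strictly dominated for every $i\in h_\ell$. Suppose all members of $M_i$ work. Two comparisons drive the step.
\begin{itemize}
\item If $i$ works, her pay is $b_ip_i$, where $p_i=\Pr(x(t_j)=x(t_i)\ \forall j\in M_i)$.
\item If $i$ shirks, her signal is a noisemaker value independent of $x$. So under any correlated belief her expected pay is at most $b_iq_i$, where $q_i=\max_{y}\Pr(x(t_j)=y\ \forall j\in M_i)=\tfrac12\Pr(\text{all } x(t_j),\ j\in M_i,\text{ agree})$. The equality uses the $A/B$ symmetry.
\end{itemize}
The Markov property and \eqref{cor} give $\Pr(x(t_i)\text{ equals the common value}\mid\text{monitors agree})>1/2$, so $p_i>q_i$. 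Choosing $b_i>c/(p_i-q_i)$ makes working strictly dominant, and the induction goes through.

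\textbf{Part 2: cost-reducing reduction.} Take an RIE contract $w$ and run IESDS with maximal simultaneous elimination. Let $h_\ell$ be the set of agents whose shirking is removed in round $\ell$; since $w$ is RIE, this partitions $\mathcal N$. Write $\signal_{<\ell}=(\signal_0,\signal_{h_1},\dots,\signal_{h_{\ell-1}})$.

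For $i\in h_\ell$, define $w'_i$ by projecting $w_i$ onto $(\signal_{<\ell},\signal_i)$:
\[
w'_i(\signal_{<\ell},\signal_i)=\E\big[w_i(\signals)\mid \signal_{<\ell},\signal_i\big],
\]
where the expectation is taken under the truthful distribution.
\begin{itemize}
\item By the tower property, the on-path expected payment is unchanged, so the cost is weakly lower.
\item Limited liability is preserved.
\item Every edge now points from $0$ or an earlier layer, so condition (i) holds. If some agent is unreachable from $0$, I would add a negligible agreement bonus as in Part 1. Removing redundant dependence should not be needed, since cost only falls.
\end{itemize}

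\textbf{Main obstacle: preserving robustness.} The remaining step is to show that $w'$ is still RIE. The same rounds should work, and the working payoff under $w'$ (given earlier layers work) equals the original payoff when all same- and later-layer agents work. The difficulty is the shirking payoff. Under $w'$ the shirker is paid as if the later signals were distributed conditional on $x(t_i)=y_i$. In $w$, however, truthful later signals depend on the true $x(t_i)$.

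My first attempt would exploit correlated rationalizability. In round $\ell$ the original $w$ makes effort dominant against every belief about non-eliminated agents. That includes the belief that they all shirk and the noisemaker draws their noise jointly with $y_i$, according to the truthful conditional law of later signals given $\signal_i=y_i$ (and given the earlier-layer values). Under that belief the original shirking payoff coincides with the shirking payoff under $w'$.

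The gap is on the other side of the comparison. Under the same belief, the working payoff in $w$ uses noise that is independent of $x(t_i)$, rather than the truthful later signals that $w'$ effectively averages over. If this matching fails, my fallback is to replace the conditional expectation with a kernel over later signals that does not depend on $\signal_i$. For example, fix the later signals at a constant profile $z$. Dominance then transfers directly from the belief that later agents shirk with noise $z$. The cost comparison must then be recovered by choosing $z$, or a mixture of such profiles, to minimise the on-path expected payment, which I expect is the delicate part.
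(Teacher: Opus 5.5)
Part 1 of your proposal is correct. It is essentially the paper's level-by-level agreement construction: your all-monitors-agree indicator induces exactly $G$, and your Markov argument for $p_i>q_i$ is right.

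Part 2 has a genuine gap, and neither of your devices closes it. The projection $w'$ need not be RIE, and freezing later signals need not be weakly cheaper. One example shows both. Take $n=2$, $t_2=t_1>0$, $r=e^{-\lambda t_1}$, and
\[
w_1=\tfrac{4c}{r}\,\mathbbm{1}\{s_2=s_1\}+\tfrac{8c}{r}\,\mathbbm{1}\{s_0=s_1\neq s_2\},\qquad w_2=4c\,\mathbbm{1}\{s_2=s_1\}.
\]
If agent 2 shirks with noise $z$, agent 1 gets $\frac{4c}{r}+2c$ from working and $\frac{4c}{r}$ from shirking with either $y_1$. If agent 2 works, agent 1 gets $\frac{4c}{r}$ from working and $\frac{4c}{r}-2c$ from shirking. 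So working gains exactly $2c>c$ against every pure profile. Agent 2 then works in round 2, and $w$ is RIE with $h_1=\{1\}$, $h_2=\{2\}$.

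On path, however, $s_2=s_1$, so $\mathbb{E}[w_1\mid s_0,s_1]\equiv 4c/r$. Under your $w'$, agent 1's shirking is therefore never eliminated. Freezing $s_2$ at any $z$, or at any mixture, does preserve dominance as you say. But it costs $\frac{4c}{r}+2c$ on path, strictly more than the original $\frac{4c}{r}$.

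The reason is structural. Each robustness constraint evaluates working and shirking under the same belief about later agents. Noise is independent of $x$, so it cannot be conditioned on $s_{<\ell}$ either. Hence no belief generates the pair ``truthful later signals if $i$ works, later signals tracking $y_i$ if $i$ shirks'' that your projection builds in. Any kernel that ignores $s_i$, on the other hand, discards the on-path correlation that made $w$ cheap.

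Separately, adding a bonus for reachability would break the weak cost comparison. It is also unnecessary: a downward-only RIE contract must make each agent's pay depend on some earlier signal, so reachability follows by induction.

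The paper never transforms $w$ pointwise. It fixes the elimination order of the given RIE contract $\bar w$ and bounds each agent's on-path payment from below by a relaxed linear program:
\begin{itemize}
\item The program keeps only weak incentive constraints against particular beliefs: earlier agents work, and later agents work or shirk with fixed noise according to whether they lie outside or inside the span of the nearest earlier neighbours $L$ and $R$.
\item It averages out the other signals using the Markov property.
\item It solves the program by paying only on agreement with $L$ (and $R$).
\end{itemize}
Adding $\epsilon$ on that agreement event yields hierarchical RIE contracts $w^{*\epsilon}$ whose cost converges to the bound. Since $\bar w$ satisfies finitely many strict inequalities, $\alpha\bar w$ is still RIE for some $\alpha<1$. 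So the bound lies strictly below $\hat K(\bar w,\tau)$, and some $w^{*\epsilon}$ is weakly cheaper. In the example this delivers $\frac{2c}{r}\mathbbm{1}\{s_0=s_1\}$ and $2c\,\mathbbm{1}\{s_2=s_1\}$ (plus $\epsilon$), which are hierarchical, RIE, and cheaper than $w$.

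The idea your Part 2 lacks is a lower bound on the cost of all RIE contracts with the given order that hierarchical contracts approximately attain, rather than a transformation of $w$ itself.

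One caution if you take this route. Once later signals are kept as indices, the relaxed program's objective weights depend on their truthful values. When a later agent lies strictly between $L$ and $R$, the relaxation in which that agent shirks admits payments on cells that are rare on path. The two-neighbour bound there therefore needs more constraints than the paper's relaxation retains.
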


To show the first part, take any hierarchical monitoring structure $G$ with associated ordered partition $(h_1,\ldots,h_m)$.
Then, we can construct an RIE contract recursively, level by level. For each agent $j\in h_1$, let her compensation depend only on $(s_0,s_j)$, so that her incentive is secured directly by the principal's signal. 
Because $x(0)$ and $x(t_j)$ are correlated, a sufficiently large reward for agreement between $s_0$ and $s_j$ guarantees incentive compatibility. Then proceed inductively: for each agent $j\in h_i$, let $M_j=\{k\in \mathcal N\cup\{0\}: k\to j \text{ in } G\}$ be the set of $j$'s monitors in the given hierarchical monitoring structure, and design $w_j$ to depend only on her own signal and the signals of agents in $M_j$, with a strictly positive agreement bonus attached to each prescribed monitor's signal.
Because the incentives of agents in $h_1,\ldots,h_{i-1}$ have already been secured, the prescribed monitors' signals can be used in turn to incentivize agents in $h_i$. Repeating this argument level by level yields an RIE contract that induces the hierarchy.

The second part follows from the fact that, given any RIE contract, the principal can lower the expected payment to each agent $i$ by making her compensation independent of the signals of agents whose incentives are secured in later rounds of the IESDS procedure. 
RIE requires agent $i$ to work even when all later-round agents shirk and the noisemaker selects arbitrary signals for them. Holding fixed agent $i$'s own signal and all earlier-round signals, one can therefore replace $w_i$ at every realization of later-round signals by the lowest value across those realizations: this preserves $i$'s incentive to work and weakly reduces the expected payment. 
Hence, in a cheapest RIE contract, $w_i$ cannot depend on later-round signals. 
Applying this argument for each $i$ yields a hierarchical monitoring structure in which each element $h_k$ of the ordered partition consists of the agents who delete shirking at the $k$-th round of IESDS.

Lemma \ref{prop-RIE} provides a way to solve the principal's high-dimensional design problem. The key step is to determine the monitoring hierarchy: which agents should occupy each layer of monitoring authority, and in what order should their incentives be secured? Once a hierarchy is fixed, the remaining problem is to determine which higher-layer signals should be used to discipline each agent and then to find the least-cost contract consistent with it. Specifically, for each agent $j\in h_i$, $w_j(\cdot)$ may depend only on the principal's signal, agent $j$'s own signal, and the signals of agents in higher layers $h_{i'}$ with $i'<i$; moreover, agent $j$ must prefer to work given the belief that all agents in higher layers exert effort.

\subsection{A Two-Agent Example}
We begin with the two-agent case under $t_1<t_2$  to illustrate the main economic forces behind the optimal monitoring hierarchy. 
In this case, there are three possible monitoring hierarchies: the single-layer hierarchy $(\{1,2\})$, and two-layer hierarchies, $H_1=(\{1\},\{2\})$ and $H_2=(\{2\},\{1\})$. 

We proceed in three steps. 
We first show that the single-layer hierarchy is suboptimal. We then derive the optimal contract consistent with $H_1$ and that consistent with $H_2$.
Finally, we compare the resulting incentive costs and show the optimum robust contract is consistent with $H_1$, i.e., the principal monitors agent 1, and agent 1 monitors agent 2.

\paragraph{Single-layer is suboptimal.} 
Under $(\{1,2\})$, the principal monitors both agents directly.
This is suboptimal.
The signal of agent $2$ is more strongly correlated with the signal of agent $1$ than that of the principal, because 
agent $2$ is closer to agent $1$ than to the principal.
Thus, basing agent $2$'s pay on agent $1$'s signal induces agent $2$ to work at a lower cost than when agent $2$'s pay depends on a comparison between his signal and the principal's signal.
It is therefore better to use the principal's signal to secure agent 1's incentives, and then use agent 1's signal to discipline agent 2.

Formally, under the single-layer structure, each agent $i$'s payment depends only on $(s_0,s_i)$. 
Because agents are risk neutral, and the event $s_i=s_0$ is the most informative signal of work relative to shirk, it is optimal to pay agent $i$ a positive amount $w_i$ only when her signal agrees with the principal's signal. Agent $i$'s (strict) incentive constraint is
\[
\frac12(1+e^{-\lambda t_i})w_i-c > \frac12 w_i,
\]
where the right-hand side is the agent's shirking payoff, because she can shirk and still match the principal's signal with probability $1/2$.
The infimum payment that satisfies the constraint is $w_i=2ce^{\lambda t_i}$. The corresponding incentive cost under the single-layer hierarchy is therefore
\[
c(1+e^{\lambda t_1})+c(1+e^{\lambda t_2}).
\]
Under $H_1=(\{1\},\{2\})$, the cost of securing agent 1's incentive remains $c(1+e^{\lambda t_1})$, while the cost of securing agent 2's incentive falls to $c(1+e^{\lambda (t_2-t_1)})$. Hence, any contract that induces the single-layer hierarchy is strictly dominated.

\paragraph{Chain dominates sandwich.} We can further narrow down the candidate optimal designs within each of the two-layer hierarchies.  Under $H_1$, agent 1 is monitored directly by the principal, and agent 2 is then disciplined using the signals available from the higher layer. Because agent 1's signal is a sufficient statistic for $(s_0,s_1)$ in monitoring agent 2, it is without loss to focus on contracts that induce a \emph{chain} monitoring structure, in which the principal monitors agent 1 and agent 1 monitors agent 2. 
As shown above, the cost infimum for these contracts is
\begin{equation}\label{cost-chain}
K^c(t_1,t_2)\triangleq c\bigl(1+e^{\lambda t_1}\bigr)+c\bigl(1+e^{\lambda (t_2-t_1)}\bigr).
\end{equation}
By contrast, under $H_2$, agent 2 is monitored directly by the principal, and agent 1 is then disciplined using the principal's signal together with agent 2's signal. In this case, neither signal is sufficient for the other in monitoring agent 1, so it is without loss to focus on contracts that induce a \emph{sandwich} monitoring structure, in which agent 2 is monitored by the principal while agent 1 is monitored using both signals. The least incentive cost for such contracts is
\[
K^s(t_1,t_2)\triangleq
c\left[
1+\frac{1+e^{-\lambda t_2}}{e^{-\lambda t_1}+e^{-\lambda (t_2-t_1)}}
\right]
+
c\bigl(1+e^{\lambda t_2}\bigr).
\]
Figure~\ref{fig:chain_sandwich} illustrates the chain and sandwich structures induced by $H_1$ and $H_2$, respectively.

\begin{figure}

\begin{center}
\begin{subfigure}[b]{0.45\textwidth}
\centering
\begin{tikzpicture}[scale=1.8]
\footnotesize

% labels
\node at (0, 1.3) {$0$};
\node at (1.2, 1.3) {$t_1$};
\node at (2.85, 1.3) {$t_2$};

% line
\draw [ultra thick, gray!60] (0,1.5) -- (3,1.5);
\draw [ultra thick, gray!60] (0,1.4) -- (0,1.6);
\draw [fill, pennred] (0,1.5) circle [radius=0.03];
\draw [fill, nberblue] (1.2,1.5) circle [radius=0.03];
\draw [fill, nberblue!60] (2.8,1.5) circle [radius=0.03];

% arrows
\draw[very thick,->] (0,1.65) to [out=30,in=150] (1.15,1.65);
\node[above=8pt] at (0.58,1.67) {(i)};

\draw[very thick,->] (1.25,1.65) to [out=20,in=160] (2.75,1.65);
\node[above=8pt] at (2.0,1.67) {(ii)};

% redundant direct monitoring from principal to agent 2
\draw[very thick,->,dashed,gray] (0,1.35) to [out=-25,in=-155] (2.75,1.35);
\node[below=14pt] at (1.38,1.30) {(i)};
\end{tikzpicture}
%\caption{Chain}
\end{subfigure}
\hfill
\begin{subfigure}[b]{0.45\textwidth}
\centering
\begin{tikzpicture}[scale=1.8]
\footnotesize

% labels
\node at (0, 1.3) {$0$};
\node at (1.2, 1.3) {$t_1$};
\node at (2.85, 1.3) {$t_2$};

% line
\draw [ultra thick, gray!60] (0,1.5) -- (3,1.5);
\draw [ultra thick, gray!60] (0,1.4) -- (0,1.6);
\draw [fill, pennred] (0,1.5) circle [radius=0.03];
\draw [fill, nberblue] (1.2,1.5) circle [radius=0.03];
\draw [fill, nberblue!60] (2.8,1.5) circle [radius=0.03];

% arrows
\draw[very thick,->] (0,1.35) to [out=-25,in=-155] (2.75,1.35);
\node[below=14pt] at (1.38,1.30) {(i)};

\draw[very thick,->] (0,1.65) to [out=30,in=150] (1.15,1.65);
\node[above=8pt] at (0.58,1.67) {(ii)};

\draw[very thick,<-] (1.25,1.65) to [out=20,in=160] (2.75,1.65);
\node[above=8pt] at (2.0,1.67) {(ii)};

\end{tikzpicture}
%\caption{Sandwich}
\end{subfigure}
\end{center}

\caption{Monitoring structures under the two two-layer hierarchies}
\bigskip
{\scriptsize Notes: In the left panel, agent 1's signal is sufficient for monitoring agent 2, so using the principal's signal to monitor agent 2 (dashed curve) is redundant. In the right panel, both the principal's signal and agent 2's signal are used to monitor agent 1. Roman numerals denote the order of elimination in the IESDS process.}
\label{fig:chain_sandwich}
\end{figure}

What remains is to compare $K^c(t_1,t_2)$ and $K^s(t_1,t_2)$. The comparison reflects the following trade-off. 
Under a chain, the principal first disciplines the agent who is easiest to monitor directly, which minimizes the cost of establishing the first credible peer signal in the hierarchy. Under a sandwich, by contrast, the principal first disciplines the more distant agent 2, which is more expensive, in order to obtain an additional signal for monitoring agent 1 in the second step. 
Thus, the sandwich trades a higher first-layer cost for a richer second-layer monitoring environment. 

We show that the chain structure attains a lower total payment.
To see this, fix $t_2$ and view $K^c(t_1,t_2)$ and $K^s(t_1,t_2)$ as functions of $t_1$. Under the chain structure, each agent is monitored only by her immediate predecessor: agent 1 by the principal, and agent 2 by agent 1. Hence, the incentive cost is the sum of two single-link monitoring costs as in equation \eqref{cost-chain}. 

The key observation is that each single-link cost is increasing and convex in the monitoring distance. 
The reason is that the informativeness of a monitor's signal decays exponentially with distance.\footnote{A similar property appears in other specification. See, e.g., \cite{bardhi2022attributes} and \cite{Arjadanina}.} 
When the monitoring distance is \(d\), effort raises the probability that the agent's signal agrees with her monitor's signal by an amount proportional to \(e^{-\lambda d}\). 
To compensate the agent for effort cost \(c\), the required bonus is therefore proportional to $\frac{1}{e^{-\lambda d}}=e^{\lambda d}$. 
Thus, as the monitoring distance increases, the required payment rises at an increasing rate. 

Holding \(t_2\) fixed, the chain cost is now the sum of two convex terms: one in \(t_1\) and one in \(t_2-t_1\). 
Moving \(t_1\) toward the midpoint \(t_2/2\) shortens the longer monitoring link and lengthens the shorter one by the same amount. 
By convexity, this lowers total cost until the two monitoring distances are equal. 
Hence, \(K^c(\cdot,t_2)\) is convex and minimized at \(t_1=t_2/2\).

Under the sandwich structure, the expected payment to agent 1 depends on the joint use of the two signals and therefore takes a multiplicative form in $\Pr(s_1=s_0)$ and $\Pr(s_1=s_2)$. Fix $t_2$. When $t_1=0$ or $t_1=t_2$, agent 1 coincides with either the principal's task or agent 2's task, so one of the two monitoring links is maximally informative. This makes agent 1 easiest to monitor and minimizes the incentive cost. As $t_1$ moves from either endpoint toward the midpoint $t_2/2$, the initially very precise monitoring link becomes less informative, while the other link improves by less because informativeness falls at a decreasing rate with distance. 
Thus, combined monitoring becomes weaker and the required payment rises.
By symmetry, this effect is strongest at $t_1=t_2/2$, where monitoring is least effective. Therefore, the monitoring benefit is maximized at the endpoints and minimized at the midpoint, implying that $K^s(\cdot,t_2)$ is single-peaked.

However, at the boundary $t_1=0$, the two structures coincide, i.e., $K^s(0,t_2)=K^c(0,t_2)$. This is because when agent 1 is colocated with the principal, agent 2's signal provides no additional value in disciplining agent 1 and thus the sandwich collapses to the chain. Because $K^s(\cdot,t_2)$ attains its minimum at the boundary while $K^c(\cdot,t_2)$ attains its maximum there, it follows that
\[
K^s(t_1,t_2)\ge K^s(0,t_2)=K^c(0,t_2)\ge K^c(t_1,t_2),
\]
for all $t_1\le t_2$. Thus, the chain weakly dominates the sandwich.

\paragraph{Wedge between informational and monitoring authority.} Robust implementation requires monitoring authority to flow downward, from higher layers to lower layers.  As a result, it does not fully use the information available for incentive provision. For example, agent \(1\)'s pay does not depend on agent \(2\)'s signal, even though agent \(2\)'s signal is informative about agent \(1\)'s performance along the target outcome. This separates informational authority from  monitoring authority. When all agents work, lower-layer agents may acquire information useful for evaluating agents above them. From the perspective of the sufficient-statistic principle \citep{holmstrom1982moral}, such information should enter compensation. The obstacle is circular incentive dependence. 
Upward or mutual monitoring would make agents' incentives rely on the credibility of reports from agents whose own incentives have not yet been secured. This can generate additional equilibria, including coordinated shirking, and runs counter to RIE’s conservative approach to strategic risk (see Section \ref{sec:BNI} for details).

\subsection{Optimal RIE in General Case}

We now turn to the case of an arbitrary number $n$ of agents. Although we can simplify the principal's problem by restricting attention to contracts that induce hierarchical monitoring hierarchies, the number of candidate hierarchies is still large.
Specifically, the total number of hierarchies equals the total number of order partitions of the set $\{1,..., n\}$, which is 
$
\sum_{k=1}^n k!\,S(n,k),
$ 
where $S(n,k)$ is the Stirling number of the second kind. For example, when $n=8$, there are $273{,}343$ possible hierarchies. However, the insight from the two-agent case allows us to focus attention on a smaller class of monitoring hierarchies.

We first consider a class of contracts that can be used to construct a sequence of RIE contracts that approximates the incentive cost.

\begin{lemma}\label{lem_chain_contract}
For any $\epsilon>0$, define the contract $w^\epsilon$ by
\begin{equation}\label{wage}
w^\epsilon_i(\signals) \triangleq 
\begin{cases}
2c e^{\lambda (t_i-t_{i-1})}+\epsilon & \mbox{if } s_i=s_{i-1},\\
0 & \mbox{if } s_i\neq s_{i-1},
\end{cases}
\end{equation}
for each agent $i=1,\dots,n$, where $t_0=0$. Then $w^\epsilon$ is an RIE contract and induces a monitoring hierarchy such that $h_i=\{i\}, \forall i$.
\end{lemma}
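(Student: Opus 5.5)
We argue by induction on the index $i$, following the order of IESDS, and show that in round $i$ of elimination agent $i$'s shirking strategy becomes strictly dominated, given that agents $1,\dots,i-1$ have already had shirking eliminated. The monitoring-structure claim is then immediate. By \eqref{wage}, $w^\epsilon_i$ depends only on $(s_{i-1},s_i)$, and it genuinely varies with $s_{i-1}$. So the only edges of $G(w^\epsilon)$ are $i-1\to i$ (with $0\to 1$). The partition $h_i=\{i\}$ satisfies condition (i) of Definition~\ref{definition1}, and the chain $0\to1\to\cdots\to n$ gives reachability (ii). It will also be convenient to write $b_i\triangleq 2ce^{\lambda(t_i-t_{i-1})}+\epsilon$ for the bonus.

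For the inductive step, fix $i$ and suppose that in every surviving strategy profile agents $1,\dots,i-1$ work. Then $s_{i-1}=x(t_{i-1})$ regardless of what the noisemaker or agents $i+1,\dots,n$ do; for $i=1$ this holds trivially because $s_0=x(0)$. Agent $i$'s payoff depends on others' strategies only through $s_{i-1}$, so against any surviving (possibly correlated) conjecture we compute her payoff from each action. If she works, \eqref{cor} gives
\[
\tfrac12\bigl(1+e^{-\lambda(t_i-t_{i-1})}\bigr)b_i-c .
\]
If she shirks, $s_i=y_i$ is chosen by the noisemaker, which moves simultaneously and independently of the state realization. Since $x(t_{i-1})$ is uniform on $\{A,B\}$, whatever the noisemaker's choice, $\Pr(y_i=x(t_{i-1}))=\tfrac12$, so her shirking payoff is $\tfrac12 b_i$.

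The step I expect to need the most care is this shirking computation. Under correlated rationalizability a conjecture may correlate $y_i$ with other players' strategies, so I must check that $y_i$ cannot be correlated with the state. This holds because the noisemaker's strategy is a fixed element of $\{A,B\}^n$ chosen without observing $x$, so any belief over its choices is independent of $x$. Combined with the uniform marginal of $x(t_{i-1})$, the conditional matching probability is exactly $1/2$ under any belief, including beliefs that put weight on agents $j>i$ shirking (which do not affect $s_{i-1}$ at all).

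It then remains to compare the two payoffs. The gain from working is
\[
\tfrac12 e^{-\lambda(t_i-t_{i-1})}b_i-c=\tfrac12 e^{-\lambda(t_i-t_{i-1})}\epsilon>0 ,
\]
which is uniform over all surviving conjectures. Hence shirking is strictly dominated for agent $i$ at round $i$. By induction every agent works in every profile surviving IESDS, so by the equivalence stated after \eqref{inv}, $w^\epsilon\in\contracts(\tau)$. Limited liability holds trivially since all wages are nonnegative.
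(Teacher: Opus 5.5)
Your proof is correct and follows the same sequential-elimination argument the paper uses (it invokes Step 3 of Appendix A.1): agent $i$'s shirking is eliminated at round $i$ once $s_{i-1}$ is truthful, and the $\epsilon$ slack makes the preference strict. The only difference is cosmetic. You compute the gain $\tfrac12 e^{-\lambda(t_i-t_{i-1})}\epsilon$ directly, whereas the paper argues that the $\epsilon=0$ contract already gives a weak preference for effort, and the extra $\epsilon$ paid on an event that is strictly more likely under effort makes it strict.
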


The logic is immediate because agents' incentives are secured sequentially along the chain: 
Agent $1$ is incentivized directly by comparison with the principal's signal, and each subsequent agent $i$ is incentivized by comparison with the signal of agent $i-1$, whose incentive has already been secured. 
The strictly positive slack $\epsilon$ guarantees strict incentive compatibility at each step.

The contract $w^\epsilon$ induces a monitoring hierarchy in which monitoring authority declines with distance from the principal. 
Under this contract, each agent $i$ is rewarded when her signal agrees with that of her immediate left-hand neighbor---agent $i-1$ for $i\geq 2$, and the principal for $i=1$. 
The reward size depends on both the effort cost $c$ and the distance between the two neighboring task locations.

The next proposition characterizes the incentive cost  \eqref{inv}, namely, the infimum of expected compensations across all RIE contracts.
Also, the infimum can be attained in the limit by a sequence of contracts of the form $w^{\epsilon_m}$ defined in \eqref{wage}, where $\epsilon_m \to 0$.

\begin{proposition}\label{prop_inv}
For any positive sequence $\{\epsilon_m\}\rightarrow 0$, the sequence of RIE contracts  $\{w^{\epsilon_m}\}$ defined by \eqref{wage} approximates the incentive cost, i.e., 
\begin{equation}\label{inv_c}
K(\tau)=\lim_{m\rightarrow \infty} \hat K(w^{\epsilon_m},\tau)= 
\hat K (w^0, \tau)
=c\sum_{i=1}^n \left[1+e^{\lambda (t_i-t_{i-1})} \right]. 
\end{equation}
\end{proposition}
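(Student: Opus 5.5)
The proof has two halves: an upper bound from the explicit contracts of Lemma~\ref{lem_chain_contract}, and a matching lower bound valid for every RIE contract. For the upper bound, Lemma~\ref{lem_chain_contract} says each $w^{\epsilon}$ is RIE. Along the truthful profile, $\Pr(s_i=s_{i-1})=\frac12(1+e^{-\lambda(t_i-t_{i-1})})$ by \eqref{cor}. Hence
\[
\hat K(w^{\epsilon},\tau)=\sum_{i=1}^n \tfrac12\bigl(1+e^{-\lambda(t_i-t_{i-1})}\bigr)\bigl(2ce^{\lambda(t_i-t_{i-1})}+\epsilon\bigr),
\]
which is continuous in $\epsilon$ and converges to $c\sum_i[1+e^{\lambda(t_i-t_{i-1})}]=\hat K(w^0,\tau)$ as $\epsilon\to0$. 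It therefore suffices to show that every RIE contract $w$ satisfies $\hat K(w,\tau)\ge c\sum_i[1+e^{\lambda(t_i-t_{i-1})}]$.

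\textbf{Single-agent lower bound.} By Lemma~\ref{prop-RIE}, we may assume $w$ induces a hierarchy $(h_1,\dots,h_m)$. Each $w_i$ for $i\in h_\ell$ then depends only on $s_i$ and on $s_S$, where $S$ consists of $0$ and the agents in earlier layers. When $i$'s shirking is eliminated, everyone in $S$ works, while the noisemaker's $y_i$ is unrestricted. Effort must therefore beat shirking against the belief that puts probability $\frac12$ on each $y_i\in\{A,B\}$. This gives
\[
\sum_{s_S}\Pr(s_S)\sum_{y}\Bigl(\Pr(x(t_i)=y\mid s_S)-\tfrac12\Bigr)w_i(y,s_S)> c .
\]
Let $p_i=\max_{y,s_S}\Pr(x(t_i)=y\mid s_S)$. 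Since $w_i\ge0$, each term satisfies $\Pr(y|s_S)-\frac12\le \frac{2p_i-1}{2p_i}\Pr(y|s_S)$. The expected payment to $i$ is therefore at least $c\,\frac{2p_i}{2p_i-1}$.

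By the Markov property, the posterior depends only on the nearest nodes of $S$ to the left and to the right of $t_i$, at distances $a$ and $b$. The maximum posterior is attained when these two signals agree. If there is no right neighbour, $p_i=\frac12(1+e^{-\lambda a})$ and the bound is $c(1+e^{\lambda a})$. With both neighbours, the bound is the sandwich expression
\[
c\Bigl[1+\frac{1+e^{-\lambda(a+b)}}{e^{-\lambda a}+e^{-\lambda b}}\Bigr]=c\Bigl[1+\frac{uv+1}{u+v}\Bigr],\qquad u=e^{\lambda a},\ v=e^{\lambda b}.
\]

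\textbf{Induction on $n$.} I expect this step to be the main obstacle, because the single-agent bound depends on which nodes sit in higher layers. The plan is to induct on $n$ by removing an agent $k$ in the last layer $h_m$. No one else's wage depends on $s_k$, so the restriction of $w$ to $\mathcal N\setminus\{k\}$ is still an RIE contract for the reduced task profile. By the induction hypothesis, its cost is at least the chain cost of $\tau\setminus t_k$. Since all other agents lie in earlier layers, the neighbours $l<k<r$ of $k$ in $S$ are exactly its neighbours in the full profile.

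It remains to check that the single-agent bound for $k$ dominates the increase in chain cost from inserting $k$. If $k$ is rightmost, the increase is exactly $c(1+e^{\lambda a})$, which matches the bound. If $k$ is interior, the increase is $c[1+u+v-uv]$, and
\[
u+v-uv=1-(u-1)(v-1)\le 1\le 1+\frac{(u-1)(v-1)}{u+v}=\frac{uv+1}{u+v},
\]
because $u,v\ge1$. Ties in location and the principal at $t_0=0$ are covered by allowing $a=0$ or $b=0$. Summing over the induction gives the lower bound, and together with the upper bound this proves \eqref{inv_c}.
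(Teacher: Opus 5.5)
Your proof is correct, but it takes a different route from the paper's.

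\textbf{How the paper argues.} The paper takes an arbitrary RIE contract and fixes its elimination order $\mathbf{i}$. For each agent $i_k$ it solves a relaxed linear program in which only her nearest risk-free neighbors $L,R$ matter. This gives either the chain bound $c(1+e^{\lambda(t_{i_k}-t_L)})$ or the sandwich bound, and summing gives a per-order bound $\hat K^{\mathbf i}$. It then shows, by local reorderings, that the identity order minimizes $\hat K^{\mathbf i}$. Each reordering reduces to the two-agent comparison $K^s\ge K^c$, which is proved from convexity of the chain cost and single-peakedness of the sandwich cost in $t_1$.

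\textbf{How your route differs.} You change two things. First, you get the single-agent bound from the likelihood-ratio inequality $\Pr(y\mid s_S)-\tfrac12\le \frac{2p_i-1}{2p_i}\Pr(y\mid s_S)$ instead of solving the ICA/ICB linear program; it yields the same two formulas. Second, you replace the optimization over orders with a telescoping insertion inequality. If the inserted agent is rightmost, her marginal chain cost equals her chain bound. If she is interior, her marginal chain cost is $c[2-(u-1)(v-1)]\le 2c$, which is at most her bound. Since this inequality concerns only the formulas, it shows $\hat K^{\mathbf i}\ge c\sum_i[1+e^{\lambda(t_i-t_{i-1})}]$ for every order directly. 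It could replace the paper's exchange step verbatim, avoiding both the convexity/single-peakedness analysis and the block-reordering argument for $n\ge 3$.

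\textbf{What each route buys.} Your argument leans on Lemma~\ref{prop-RIE}. You need a last-layer agent whose signal enters no one else's wage, so that the restricted contract is RIE for $n-1$ agents. The paper's per-order bounds apply to any RIE contract directly, and they are in fact what the paper uses to prove Lemma~\ref{prop-RIE}.

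\textbf{Two small repairs.} First, the sentence ``all other agents lie in earlier layers'' is false when $|h_m|\ge 2$. You can split the layers into singletons, which is harmless because a hierarchy has no within-layer links. Alternatively, note that your bound only rises when the nearest $S$-nodes are farther away or the right one is missing: $1+\frac{uv+1}{u+v}$ is increasing in $u$ and $v$, and it is at most $1+u$. Second, what you actually use is not that $S$ has finished eliminating shirking when $i$ does. It is that effort must beat shirking against the all-work profile of $S$. That holds because effort is never eliminated under an RIE contract.
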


The key step in Proposition \ref{prop_inv} is to show that any RIE inducing a non-chain monitoring hierarchy can be improved. Once this is established, the optimal monitoring structure must be a chain, and the multi-agent contracting problem can then be characterized recursively, agent by agent. Then simple algebra shows that RIE contracts defined in \eqref{wage} are approximately optimal as $\epsilon_m\rightarrow 0$.

\begin{figure}[tbp]
\centering

% -------- Panel 1 --------
\begin{subfigure}[b]{0.32\textwidth}\label{fig:exa}
\centering
\begin{tikzpicture}[scale=1.4]
\footnotesize

\node at (0, 1.3) {$0$};
\node at (3, 1.3) {$1$};
\node at (0.69, 1.25) {$t_1$};
\node at (1.55, 1.25) {$t_2$};
\node at (2.55, 1.25) {$t_3$};

\draw [ultra thick, gray!60] (0,1.5) -- (3,1.5);
\draw [ultra thick, gray!60] (0,1.4) --(0,1.6);
\draw [ultra thick, gray!60] (3,1.4) --(3,1.6);

\draw [fill, pennred] (0,1.5) circle [radius=0.03];
\draw [fill, nberblue] (0.6,1.5) circle [radius=0.03];
\draw [fill, nberblue!60] (1.5,1.5) circle [radius=0.03];
\draw [fill, nberblue!60] (2.5,1.5) circle [radius=0.03];

\draw[very thick,->,gray] (0,1.4) to [out=-40,in=-140] (0.62,1.4);
\draw[very thick,->,gray] (2.5,1.4) to [out=-150,in=-30] (0.67,1.4);

\draw[very thick,->] (0,1.6) to [out=40,in=140] (2.5,1.6);

\draw[very thick,->, gray!60]  (0.62,1.6) to [out=30,in=150](1.48,1.6) ;
\draw[very thick,->, gray!60] (2.5,1.6) to [out=150,in=30] (1.52,1.6) ;

\node[above=6pt] at (1.05, 2) {(i)};

\node[below=6pt] at (0.3, 1.3) {(ii)};
\node[below=6pt] at (1.5, 1.3) {(ii)};

\node[above=6pt] at (1.5, 1.5) {(iii)};

\end{tikzpicture}
\caption{$H=(\{3\}, \{1\},\{2\})$}
\end{subfigure}
%\hfill
% -------- Panel 2 --------
\begin{subfigure}[b]{0.32\textwidth}
\centering
\begin{tikzpicture}[scale=1.4]
\footnotesize

\node at (0, 1.3) {$0$};
\node at (3, 1.3) {$1$};
\node at (0.75, 1.3) {$t_1$};
\node at (1.6, 1.3) {$t_2$};
\node at (2.5, 1.3) {$t_3$};

\draw [ultra thick, gray!60] (0,1.5) -- (3,1.5);
\draw [ultra thick, gray!60] (0,1.4) --(0,1.6);
\draw [ultra thick, gray!60] (3,1.4) --(3,1.6);

\draw [fill, pennred] (0,1.5) circle [radius=0.03];
\draw [fill, nberblue] (0.6,1.5) circle [radius=0.03];
\draw [fill, nberblue!60] (1.5,1.5) circle [radius=0.03];
\draw [fill, nberblue!60] (2.5,1.5) circle [radius=0.03];

\draw[very thick,->] (0,1.4) to [out=-40,in=-140] (0.64,1.4);

\draw[very thick,->, gray] (0.6,1.7) to [out=50,in=140] (2.5,1.7);

\draw[very thick,->, gray!60]  (0.62,1.6) to [out=30,in=150](1.48,1.6) ;
\draw[very thick,->, gray!60] (2.5,1.6) to [out=150,in=30] (1.52,1.6) ;

\node[above=6pt] at (1.25, 2) {(ii)};

\node[below=6pt] at (0.3, 1.3) {(i)};

\node[above=6pt] at (1.5, 1.5) {(iii)};

\end{tikzpicture}
\caption{$H'=(\{1\},\{3\},\{2\})$}
\end{subfigure}
%\hfill
% -------- Panel 3 (replicated) --------
\begin{subfigure}[b]{0.32\textwidth}
\centering
\begin{tikzpicture}[scale=1.4]
\footnotesize

\node at (0, 1.3) {$0$};
\node at (3, 1.3) {$1$};
\node at (0.65, 1.3) {$t_1$};
\node at (1.6, 1.3) {$t_2$};
\node at (2.5, 1.3) {$t_3$};

\draw [ultra thick, gray!60] (0,1.5) -- (3,1.5);
\draw [ultra thick, gray!60] (0,1.4) --(0,1.6);
\draw [ultra thick, gray!60] (3,1.4) --(3,1.6);

\draw [fill, pennred] (0,1.5) circle [radius=0.03];
\draw [fill, nberblue] (0.6,1.5) circle [radius=0.03];
\draw [fill, nberblue!60] (1.5,1.5) circle [radius=0.03];
\draw [fill, nberblue!60] (2.5,1.5) circle [radius=0.03];

\draw[very thick,->] (0,1.4) to [out=-40,in=-140] (0.64,1.4);
%\draw[very thick,->] (2.5,1.4) to [out=-150,in=-30] (0.67,1.4);

\draw[very thick,->, gray] (0.6,1.7) to [out=50,in=140] (1.5,1.7);

\draw[very thick,->, gray!60] (1.52,1.4)to [out=-30,in=-150] (2.5,1.4)  ;

\node[above=6pt] at (1.1, 2) {(ii)};

\node[below=6pt] at (0.3, 1.3) {(i)};

\node[below=6pt] at (2, 1.3) {(iii)};

\end{tikzpicture}
\caption{$H''=(\{1\},\{2\},\{3\})$}
\end{subfigure}

\caption{Illustration of Proposition 1}
\bigskip
{\scriptsize Notes: Roman numerals denote the order of elimination in the IESDS process.}
\label{fig_monitor_n}
\end{figure}

The proof proceeds by repeated local improvements. Starting from the top of any monitoring hierarchy, consider the first block in which monitoring authority is not passed downward in the order of task locations. Such a block must contain a local sandwich structure. Holding the rest of the hierarchy fixed, one can replace this sandwich by a local chain. By the two-agent argument, this replacement lowers the incentive cost of implementing the target effort profile. Iterating this step eliminates all local sandwiches and yields a chain hierarchy.

To illustrate the logic, consider the three-agent example in Figure~\ref{fig_monitor_n}. The hierarchy $H$ in panel (a) contains two nested sandwich structures. First, the principal monitors agent 3, and then the principal and agent 3 jointly monitor agent 1. Holding the rest of the hierarchy fixed, we can replace this local sandwich by a chain, obtaining the hierarchy $H'$ in panel (b). The hierarchy $H'$ still contains a sandwich: agent 1 monitors agent 3, and then agents 1 and 3 jointly monitor agent 2. Applying the same local replacement once more yields the chain hierarchy $H''$ in panel (c). By the two-agent argument, each replacement lowers the incentive cost. Hence the principal prefers the final chain hierarchy.

A corollary of Proposition \ref{prop_inv} is that the principal can reduce the incentive cost by assigning agents to locations that are closer to each other and to the principal:
\begin{corollary}\label{corollary1}
The incentive cost decreases when agents are closer to each other and closer to the principal.
Formally, take two task allocations, $\tau = (t_1,..., t_n)$ and $\tau' = (t'_1,..., t'_n)$, such that 
$t_i-t_{i-1} \le t'_i - t'_{i-1}$ for $i=1,..., n$ with $t_0 = t'_0=0$.
Then, we have $K(\tau) \le K(\tau')$, and $\min_{\tau}K(\tau)=K(0)\triangleq 2nc$.
\end{corollary}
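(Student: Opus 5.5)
The plan is to read both claims of Corollary~\ref{corollary1} off the closed form of the incentive cost in Proposition~\ref{prop_inv},
\[
K(\tau)=c\sum_{i=1}^n\left[1+e^{\lambda(t_i-t_{i-1})}\right],\qquad t_0=0.
\]
This formula holds for every ordered task profile $\tau$ with $t_1\le\cdots\le t_n$. It already contains all the optimization over monitoring hierarchies, so no further contract-design argument is needed. I will treat the corollary as a monotonicity statement about this explicit function of the neighboring gaps $d_i\triangleq t_i-t_{i-1}\ge 0$.

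For the comparison between $\tau$ and $\tau'$, I will compare the two sums term by term. The hypothesis gives $d_i\le d_i'$ for each $i$. Since $\lambda>0$, the map $d\mapsto e^{\lambda d}$ is strictly increasing, so $e^{\lambda d_i}\le e^{\lambda d_i'}$ for each $i$. Summing over $i$ and multiplying by $c>0$ gives $K(\tau)\le K(\tau')$. I will also note the consequence: moving agents closer to each other, or moving the chain closer to $0$, shrinks every gap and hence weakly lowers the cost. The inequality is strict whenever at least one gap strictly shrinks.

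For the minimum, I use that every admissible profile satisfies $d_i\ge 0$, so $e^{\lambda d_i}\ge 1$ for each $i$. This gives $K(\tau)\ge c\sum_{i=1}^n 2=2nc$. Equality holds exactly when all gaps are zero, that is, at the profile $\tau=0$ with $t_1=\cdots=t_n=0$. There, Proposition~\ref{prop_inv} gives $K(0)=2nc$, so the minimum is attained and equals $2nc$.

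The only point needing care is that Proposition~\ref{prop_inv} is invoked at colocated or coincident locations ($d_i=0$, including $t_1=0$). I will check that its formula and the contracts in Lemma~\ref{lem_chain_contract} remain valid there. With zero distance, agreement under effort occurs with probability $1$ rather than $1/2$, so the required bonus $2c$ is exactly consistent with $e^{0}=1$. Apart from this check, the argument is routine and I expect no real obstacle.
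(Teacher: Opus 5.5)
Your proposal is correct and follows the paper's route. The paper treats the corollary as an immediate consequence of the closed form $K(\tau)=c\sum_{i=1}^n[1+e^{\lambda(t_i-t_{i-1})}]$ in Proposition~\ref{prop_inv}, via term-by-term monotonicity of $d\mapsto e^{\lambda d}$ and the bound $e^{\lambda d_i}\ge 1$, with equality at $\tau=0$. Your added check at zero gaps (bonus $2c+\epsilon$, agreement probability $1$ under effort versus $1/2$ under shirking) is sound and matches the paper's intuitive explanation of why $2c$ per agent is the floor.
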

Intuitively, under contract \eqref{wage}, agent $i$ is rewarded when her signal matches the signal of her immediate left-hand neighbor, whose incentive has already been secured in the chain. If agent $i$ works, closer task locations make this match event more likely. If instead agent $i$ shirks while her left-hand neighbor works, then her noisemaker-generated signal is unrelated to the neighbor's truthful signal, so the match event occurs with probability $1/2$. Thus, assigning agents to similar tasks increases the incremental bonus probability from effort, which enables the principal to lower the bonus and save on expected payment. 
In the extreme case where all agents are assigned to $t=0$, they are perfectly monitored by the principal. Even then, a shirking agent matches the principal’s signal with probability $1/2$. It follows that implementing effort requires an expected payment of $2c$ per agent. Of course, such an allocation is extremely inefficient in knowledge production, because the agents’ effort duplicates the principal’s existing information and creates no additional value.

\subsection{Optimal Contract under Bayesian Nash Implementation}\label{sec:BNI}

For comparison, we now set aside the robustness requirement and study the optimal contract under Bayesian Nash implementation.
Specifically, we require only that all agents exert effort in some Bayesian Nash equilibrium of the induced game. 
It allows the principal to rely on equilibrium coordination to resolve the strategic risk agents face about others’ actions.
This benchmark allows us to isolate how the robustness requirement shapes the implied monitoring structure.

For simplicity, we assume in this subsection that whenever agent $i$ shirks, the noisemaker draws $\shirk_i$ uniformly from $\{A,B\}$.\footnote{As shown in Appendix~\ref{appendix:bni-worst-noise}, for each shirking agent, uniform mixing is worst for the principal under Bayesian Nash implementation (BNI). This assumption therefore ensures that our conclusion---that robust implementation is more costly than BNI---is not an artifact of how the noisemaker's strategy is specified in the BNI benchmark.}
With the noisemaker's strategy fixed, we abuse notation and write $\Gamma(w,\tau)$ for the induced simultaneous-move Bayesian game among agents.

Given a task profile $\tau$, let $\contracts^{\mathrm{BNI}}(\tau)$ denote the set of contracts under which all agents exert effort in some Bayesian Nash equilibrium of $\Gamma(w,\tau)$.
A contract in $\contracts^{\mathrm{BNI}}(\tau)$ that minimizes the expected payment $\hat K(w,\tau)$ is called an \emph{optimal Bayesian Nash implementation (BNI) contract}.

The following result shows that the optimal BNI contract rewards each agent only if her signal coincides with the signals of both the left- and right-hand neighbors.
\begin{claim}\label{claim_bni}
An optimal BNI contract pays each agent $i\in\{1,\dots,n-1\}$ a bonus $b_i$ if and only if $\signal_{i-1}=\signal_i=\signal_{i+1}$, and pays agent $n$ a bonus $b_n$ if and only if $\signal_{n-1}=\signal_n$, where
\begin{equation}\label{bonus_bni}
b_i \;=\; \frac{4c}{e^{-\lambda(t_i-t_{i-1})}+e^{-\lambda(t_{i+1}-t_i)}}\quad(i\le n-1),
\qquad
b_n \;=\; 2c\,e^{\lambda(t_n-t_{n-1})}.
\end{equation}
\end{claim}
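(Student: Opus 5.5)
The plan is to use the fact that, in contrast with RIE, Bayesian Nash implementation only requires each agent's effort to be a best response when all other agents work. Once every other agent is fixed at effort, the principal's problem separates agent by agent. For each $i$, I would minimize $\sum_{\signals} w_i(\signals)P_1(\signals)$ over $w_i\ge 0$ subject to the single constraint
\[
\sum_{\signals} w_i(\signals)\bigl[P_1(\signals)-P_0(\signals)\bigr]\ge c .
\]
Here $P_1$ is the distribution of the truthful profile. $P_0$ is the distribution when only $i$ shirks, so $s_{-i}$ is truthful and $s_i$ is an independent uniform draw by the noisemaker assumption of this subsection. The weak inequality suffices because a BNE only needs weak best responses. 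Hence the minimum is attained, unlike in \eqref{inv}.

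This is a linear program with one constraint, so an optimum places all payment on the profiles that minimize the cost per unit of incentive, $P_1/(P_1-P_0)=1/(1-P_0/P_1)$. Equivalently, it pays on the profiles that minimize the likelihood ratio $P_0(\signals)/P_1(\signals)$. Writing $P_1(\signals)=\Pr(s_{-i})\Pr(x(t_i)=s_i\mid s_{-i})$ and $P_0(\signals)=\Pr(s_{-i})\cdot\tfrac12$ gives
\[
\frac{P_0(\signals)}{P_1(\signals)}=\frac{1}{2\Pr(x(t_i)=s_i\mid s_{-i})}.
\]
So the principal should pay on the profiles where the posterior of $s_i$ given $s_{-i}$ is highest. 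By the Markov (sufficient statistic) property from Section~\ref{sectionKey}, this posterior depends only on the two nearest sampled neighbors, $s_{i-1}$ and $s_{i+1}$; for $i=1$ the left neighbor is the principal's $s_0$, and for $i=n$ only $s_{n-1}$ matters.

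Write $p_a=\tfrac12(1+e^{-\lambda(t_i-t_{i-1})})$ and $p_b=\tfrac12(1+e^{-\lambda(t_{i+1}-t_i)})$. Both exceed $1/2$. The posterior that $x(t_i)=s$ when both neighbors report $s$ is $p_ap_b/(p_ap_b+(1-p_a)(1-p_b))$, which is at least as large as the posterior in any other neighbor configuration. The case of disagreeing neighbors, where the posterior is $p_a(1-p_b)/(p_a(1-p_b)+(1-p_a)p_b)<p_a$, needs a short check, and so does the case where a neighbor is colocated ($p=1$), where ties are harmless. It follows that paying exactly on the event $E=\{s_{i-1}=s_i=s_{i+1}\}$ (both values $A$ and $B$) is optimal. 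The bonus then makes the constraint bind, $b_i=c/(P_1(E)-P_0(E))$.

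For the computation, conditionally independent transitions along the Markov chain give $P_1(E)=p_ap_b$. Since $P_0(E)=\tfrac12\Pr(s_{i-1}=s_{i+1})=\tfrac12[p_ap_b+(1-p_a)(1-p_b)]$, we get
\[
P_1(E)-P_0(E)=\tfrac12(p_a+p_b-1)=\tfrac14\bigl(e^{-\lambda(t_i-t_{i-1})}+e^{-\lambda(t_{i+1}-t_i)}\bigr),
\]
which yields \eqref{bonus_bni}. For agent $n$, the same argument with one neighbor gives the difference $p_a-\tfrac12=\tfrac12 e^{-\lambda(t_n-t_{n-1})}$ and hence $b_n=2ce^{\lambda(t_n-t_{n-1})}$. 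Finally, since each agent's constraint holds with everyone else working, all-work is a BNE, so the contract lies in $\contracts^{\mathrm{BNI}}(\tau)$.

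The main obstacle I expect is justifying the separability step rigorously: each agent's incentive is evaluated only at the all-work profile, and under the uniform-noise assumption nothing else constrains $w_i$. A second point of care is the comparison across neighbor configurations showing that the all-agree event has the maximal posterior.
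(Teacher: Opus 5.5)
Your proposal is correct and follows essentially the same route as the paper's appendix proof. That route separates the one-deviation constraints agent by agent, solves a single-constraint linear program by concentrating payment on the cells that minimize $P_0/P_1$, uses the Markov property to reduce this to the posterior of $s_i$ given its two nearest neighbors, and computes the bonus identically. The only differences are minor: you read the likelihood ratio directly off full profiles instead of first averaging down to the local statistic $Z_i$, and the paper also parametrizes the noise by $q_i$ to show that uniform noise is the worst case, which the claim as stated does not require.
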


The intuition is as follows.
When her neighbors work, agent $i$'s neighboring signals $\signal_{i-1}$ and $\signal_{i+1}$ equal the true local states $\state(t_{i-1})$ and $\state(t_{i+1})$.
If agent $i$ also works, her signal $\signal_i$ equals $\state(t_i)$, which is more likely to coincide with both neighbors than when $\signal_i$ is a uniform draw independent of the state.
As a result, the likelihood ratio of effort versus shirking is maximized on the event $\signal_{i-1}=\signal_i=\signal_{i+1}$; the principal therefore minimizes the expected payment by rewarding only this event.
For agent $n$, no right-hand neighbor exists, so the bonus event collapses to $\signal_{n-1}=\signal_n$.

Because $\contracts(\tau)\subset \contracts^{\mathrm{BNI}}(\tau)$, the expected payment under the optimal BNI contract is lower than the RIE incentive cost.\footnote{More precisely, the total cost reduction is strict if and only if there exists some $i<n$ with $t_i>t_{i-1}$.}
Intuitively, the optimal BNI contract uses two neighboring signals to monitor each interior agent, whereas the RIE chain uses only the upstream one.
The extra signal increases the likelihood ratio of the bonus event under effort to that under no effort, which makes it cheaper for the principal to motivate effort.

The optimal robust contract and the optimal BNI contract have different monitoring structures.
Under the optimal BNI contract, agent $i$'s pay depends on agent $i+1$'s signal, and agent $i+1$'s pay depends on agent $i$'s signal.
Thus, the monitoring structure induced by the optimal BNI contract has cycles and is not hierarchical.
This is in contrast with the chain structure induced by the RIE optimum.

The following example illustrates how the cyclic monitoring structure induced by the optimal BNI contract can create an equilibrium in which no agent works.
Consider the two-agent case $n=2$.
The optimal BNI contract pays agent 1 a bonus $b_1=4c/(e^{-\lambda t_1}+e^{-\lambda(t_2-t_1)})$ when $\signal_0=\signal_1=\signal_2$, and pays agent 2 a bonus $b_2=2ce^{\lambda(t_2-t_1)}$ when $\signal_1=\signal_2$.
Under this contract, all agents shirking is an equilibrium.
To see this, suppose agent 1 shirks.
Her signal is then a uniform draw independent of the state, so the bonus event $\signal_1=\signal_2$ occurs with probability $1/2$ whether or not agent 2 works; consequently, agent 2 strictly prefers to shirk.
Suppose instead that agent 2 shirks.
Her signal is then uninformative about the state, which weakens the triple-match event and raises the minimum bonus required to support agent 1's effort above $b_1$; the optimal BNI bonus $b_1$ therefore fails to sustain agent 1's effort.\footnote{When agent 2 shirks, the triple match $\signal_0=\signal_1=\signal_2$ occurs with probability $\frac14(1+e^{-\lambda t_1})$ if agent 1 works and with probability $\frac14$ if agent 1 also shirks.
The binding incentive constraint for agent 1 therefore requires a bonus of $4ce^{\lambda t_1}$, which strictly exceeds $b_1$ because $e^{-\lambda t_1}+e^{-\lambda(t_2-t_1)}>e^{-\lambda t_1}$.}
As a result, the all-shirk profile is one of equilibria.

The Bayesian Nash implementation benchmark makes the price of robustness transparent.
The chain hierarchy of Proposition \ref{prop_inv} underuses information by ignoring the right-hand signal of every interior agent.
By doing so, it breaks the monitoring cycles that could otherwise induce coordinated shirking.

\section{Task Assignment}\label{sec_endo}
We return to robust implementation and now allow the principal to choose a task allocation.
In Section \ref{sec:first best}, we model how the principal uses the information generated by agents to improve his decisions.
Our specification captures the idea that when agents work on nearby tasks, their expertise overlaps, which lowers the marginal value of each additional signal.
This force favors task allocations that spread agents apart.
Combined with the incentive cost characterized in the previous section, the model yields a novel trade-off in knowledge production: assigning agents to more diversified tasks increases the value of the information they produce, but it also increases the incentive cost (cf.\ Corollary \ref{corollary1}).
Section \ref{sec:robust task} studies how the principal balances this trade-off under deterministic task allocation, and Section \ref{sec:ran} analyzes random and confidential task allocation.

\subsection{Value of Task Allocation}
\label{sec:first best}

We define the value of task allocation as the principal’s optimal use of the information generated by the team. To model how the principal uses this information, we suppose that the principal chooses an action $p(t)\in\{A,B\}$ at each location $t$ and incurs a unit cost if $p(t)\neq x(t)$. For each task allocation $\tau$, define \emph{information loss} as the ex ante minimized loss,

\[
L(\tau)=\mathbb{E}_{\signals\mid \tau}\left[
\min_{p_{\signals}:[0,1]\rightarrow\{A,B\}}
\mathbb{E}\left[
\int_0^1\mathbb{I}(p_{\signals}(t)\neq x(t))\,dt
\mid \tau,\signals
\right]\right].
\]
The following lemma describes the principal's optimal policy under any task allocation.
\begin{lemma}\label{lemma_policy}
Fix a task allocation, $t_1 \le \cdots \le t_n$.
Suppose all agents exert effort.
An action policy $p(\cdot)$ is optimal if and only if  $p(t) = \state(t^*)$ for almost every $t \in [0,1]$, where $t^*$ is a nearest sampled location for location $t$, i.e.,
    $t^* \in \argmin_{t' \in \{0,t_1,\dots,t_n\}} | t'-t|$. The information loss is
\begin{equation}
    L(\tau)= \frac{1}{2} - \frac{2n+1}{2\lambda} + \frac{1}{\lambda}\left[\sum_{i=1}^n e^{-\frac{\lambda (t_i - t_{i-1})}{2}} + \frac{1}{2}e^{-\lambda (1-t_n)}\right].\label{inf_loss}\end{equation}
\end{lemma}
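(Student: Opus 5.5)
The proof splits into two parts. First, the optimal policy is found by pointwise minimization of the conditional loss. Second, the loss under that policy is computed by integrating an unconditional mismatch probability.

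\emph{Optimal policy.} Fix a truthful signal profile $\signals=(x(0),x(t_1),\dots,x(t_n))$. By Tonelli's theorem, the conditional expected loss equals $\int_0^1 \Pr(p_{\signals}(t)\neq x(t)\mid \signals)\,dt$. It is therefore minimized by choosing, for almost every $t$, the posterior mode of $x(t)$. The Markov property, in the form of the sufficiency property discussed in Section~\ref{sectionKey}, reduces this posterior to a function of the nearest sampled states only:
\begin{itemize}
\item for $t\in(t_{j},t_{j+1})$, with $t_0=0$, the posterior depends on $x(t_j)$ and $x(t_{j+1})$;
\item for $t>t_n$, it depends on $x(t_n)$ alone.
\end{itemize}
Write $q(d)=\tfrac12(1+e^{-\lambda d})$, which by \eqref{cor} is the probability that two locations at distance $d$ share a state. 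Note that $q$ is strictly decreasing and exceeds $1/2$. Let $d_1=t-t_j$ and $d_2=t_{j+1}-t$.

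If $x(t_j)=x(t_{j+1})$, the posterior odds that $x(t)$ equals this common value are $q(d_1)q(d_2)$ against $(1-q(d_1))(1-q(d_2))$. Since $q>1/2$, the common value is strictly more likely, and it coincides with $x(t^*)$.

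If $x(t_j)\neq x(t_{j+1})$, the odds of $x(t)=x(t_j)$ against $x(t)=x(t_{j+1})$ are $q(d_1)(1-q(d_2))$ against $(1-q(d_1))q(d_2)$. This ratio exceeds one if and only if $q(d_1)>q(d_2)$, that is, if and only if $d_1<d_2$. Hence the nearer sampled state is the strict mode.

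For $t>t_n$, the mode is $x(t_n)$ since $q>1/2$. Ties occur only at midpoints, a finite (measure-zero) set, and coincident sample locations pose no difficulty since their states are equal. The mode is strict everywhere off this null set, which gives the ``if and only if'' for almost every $t$.

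\emph{Loss formula.} Under the optimal policy, the expected loss is
\[
L(\tau)=\int_0^1 \Pr\bigl(x(t)\neq x(t^*(t))\bigr)\,dt
=\int_0^1 \tfrac12\bigl(1-e^{-\lambda|t-t^*(t)|}\bigr)\,dt,
\]
where $t^*(\cdot)$ is deterministic given $\tau$; the second equality is \eqref{cor}. Split $[0,1]$ as follows.
\begin{itemize}
\item Each gap $[t_{i-1},t_i]$ of length $g_i=t_i-t_{i-1}$ consists of two halves of length $g_i/2$. Each half contributes $\int_0^{g_i/2}\tfrac12(1-e^{-\lambda u})\,du$, so the whole gap contributes $\tfrac{g_i}{2}-\tfrac1\lambda\bigl(1-e^{-\lambda g_i/2}\bigr)$.
\item The final segment $[t_n,1]$ contributes $\tfrac{1-t_n}{2}-\tfrac{1}{2\lambda}\bigl(1-e^{-\lambda(1-t_n)}\bigr)$.
\end{itemize}
The lengths telescope to $1/2$, and collecting the constants $-\,n/\lambda-1/(2\lambda)$ yields \eqref{inf_loss}.

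The only delicate step is the disagreement case in the posterior comparison. Its likelihood ratio must be shown to be monotone in the relative distance, and this rests on $q$ being strictly decreasing. That same monotonicity also delivers the strictness needed for the ``only if'' direction.
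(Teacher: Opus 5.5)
Your proof is correct and follows essentially the same route as the paper: pointwise minimization of the posterior mismatch probability, a Markov-property reduction to the two neighboring sampled states (or to $x(t_n)$ beyond the last agent), and then integration of $\tfrac12(1-e^{-\lambda u})$ over half-gaps and the terminal segment $[t_n,1]$. Your explicit posterior-odds comparison in the disagreement case and your strictness argument off the finite set of midpoints fill in details that the paper only asserts, but they do not change the approach.
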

The principal uses signals to infer whether each local state $\state(t)$ is more likely to be $A$ or $B$, then sets the action $p(t)$ to the more likely realization.
Because two local states are more likely to coincide if they are closer to each other, the optimal action $p(t)$ for location $t$ equals the local state of the nearest sampled location.\footnote{This property is in line with other papers that use spatial learning models to characterize the value of information; see, for example, \cite{Arjadanina} and \cite{bardhi2022attributes}.}
In doing so, the policy pointwise maximizes the probability that $p(t)=x(t)$, which is the principal's \emph{decision quality}, at each location. 

The following lemma characterizes the task allocation that minimizes information loss.
The principal would choose such a task allocation absent moral hazard.

\begin{lemma}[First-best Task Allocation]\label{prop_policy}
Information loss is minimized by the task allocation $\tau^{\dag} = (t_1^{\dag},..., t_n^{\dag})$ such that
\begin{equation}\label{assignment_fb}
t_i^{\dag}=\frac{2i}{2n+1}, \forall i=1,2,...,n.
\end{equation}
The information loss decreases in the number of agents $n$ and increases in the shock-arrival rate, $\lambda$.
\end{lemma}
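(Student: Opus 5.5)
The plan is to minimize the closed form \eqref{inf_loss} from Lemma~\ref{lemma_policy} directly, treating the gaps as the choice variables. Set $d_i = t_i - t_{i-1}$ for $i=1,\dots,n$ (with $t_0=0$) and $d_{n+1} = 1-t_n$. Then $d_i\ge 0$ and $\sum_{i=1}^{n+1} d_i = 1$. Up to constants, the loss is
\[
\frac{1}{\lambda}\Bigl[\sum_{i=1}^n e^{-\lambda d_i/2} + \tfrac12 e^{-\lambda d_{n+1}}\Bigr].
\]
So we must minimize $\Phi(d)=\sum_{i=1}^n e^{-\lambda d_i/2}+\frac12 e^{-\lambda d_{n+1}}$ over the simplex. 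The interpretation is that an interior gap between two sampled points is covered from both ends, so its effective half-length is $d_i/2$. The last gap is covered only from $t_n$, so its effective length is the full $d_{n+1}$.

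Each term of $\Phi$ is strictly convex in its own gap. Hence $\Phi$ is strictly convex on the simplex, the minimizer is unique, and the KKT conditions are sufficient. First I will check that no gap can be zero at the optimum. The marginal derivative of the $i$-th term at $d_i=0$ is $-\lambda/2$. This is steeper than the derivative at any strictly positive gap, for both the interior terms and the last term (whose derivative is $-\frac{\lambda}{2}e^{-\lambda d_{n+1}}$). So mass can always be shifted profitably onto a zero gap, and the optimum is interior.

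At an interior optimum, the Lagrange conditions equalize the marginal terms:
\[
\tfrac{\lambda}{2}e^{-\lambda d_i/2}=\tfrac{\lambda}{2}e^{-\lambda d_{n+1}} \quad (i\le n).
\]
This gives $d_i/2 = d_{n+1}$ for every $i\le n$. Combined with $\sum d_i=1$, we get $n\cdot 2d_{n+1}+d_{n+1}=1$. Hence $d_{n+1}=1/(2n+1)$ and $d_i = 2/(2n+1)$, which gives $t_i^\dag = 2i/(2n+1)$.

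For the comparative statics, substitute the optimum into \eqref{inf_loss}:
\[
L^\dag(n,\lambda)=\frac12-\frac{2n+1}{2\lambda}\bigl(1-e^{-\lambda/(2n+1)}\bigr).
\]
Write $m=2n+1$ and $g(u)=(1-e^{-u})/u$. Then $L^\dag=\frac12-\frac12 g(\lambda/m)$. So $L^\dag$ is increasing in $\lambda/m$ if and only if $g$ is decreasing on $(0,\infty)$.

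This monotonicity check is the step most likely to need care. Write $g(u)=\int_0^1 e^{-us}\,ds$. The integrand is decreasing in $u$ for each $s>0$, so $g$ is strictly decreasing. Since $\lambda/m$ decreases in $n$ and increases in $\lambda$, $L^\dag$ decreases in $n$ and increases in $\lambda$.

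The only subtlety is that the comparative statics refer to the minimized loss. I will state explicitly that the claim concerns $\min_\tau L(\tau)$, evaluated at $\tau^\dag$.
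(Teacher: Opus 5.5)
Your proof is correct. The minimization step uses the same convexity-plus-first-order-condition argument as the paper, and the comparative-statics step takes a somewhat different route.

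For the minimization, the paper works in two stages. For fixed $t_n$ it equalizes the interior gaps by symmetry and strict convexity, and then it optimizes the common gap $\Delta t$ with $t_n=n\Delta t$. You instead treat the terminal gap $1-t_n$ as an $(n+1)$-th variable and solve a single KKT problem on the simplex. This buys three things:
\begin{itemize}
\item feasibility ($t_n\le 1$) is built in rather than checked afterward;
\item corner solutions are ruled out explicitly, which the paper leaves implicit;
\item the interpretable condition $d_i/2=d_{n+1}$ comes out directly.
\end{itemize}
Your interiority check is harmless but not needed. Your stationary point has all gaps positive, so it satisfies the KKT conditions of a convex program and is therefore the global, and by strict convexity unique, minimizer.

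For the comparative statics, the paper differentiates the closed form of $L(\tau^\dag)$ in $n$ and $\lambda$, treating $n$ as continuous. It then signs the two derivatives using $(1+x)e^{-x}<1$. You instead observe that the minimized loss depends on $(n,\lambda)$ only through $u=\lambda/(2n+1)$. Since $g(u)=\int_0^1 e^{-us}\,ds$ is strictly decreasing, monotonicity follows at once. This is shorter, avoids signing derivative expressions, and applies to integer $n$ directly without treating $n$ as continuous.
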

The formula spreads agents across the task space: neighboring agents are equally spaced, and under the nearest-signal policy in Lemma \ref{lemma_policy}, each agent's signal determines the principal's action on an interval of the same length.

Agents should be spread out because assigning agents to nearby tasks generates information that improves decision quality over highly overlapping regions of the task space. 
Such overlap decreases the incremental value of each agent's signal.
Spreading agents across locations reduces this overlap, broadens coverage of the task space, and improves decision quality more evenly across the project.

\subsection{Task Allocation with Robust Incentives}\label{sec:robust task}

Once the robust incentive provision is taken into account, the 
optimal task allocation design minimizes the sum of information loss and incentive cost: 
\begin{equation}\label{eq:task_alloc_obj}
\min_{\tau} L(\tau)+K(\tau),
\end{equation}
where the information loss  $L(\tau)$ comes from the principal's optimal action policy in Lemma \ref{lemma_policy} and the incentive cost $K(\tau)$ in equation \eqref{inv_c}. 
The trade-off is rooted in the nature of knowledge production. If agents are assigned to similar tasks, they produce largely duplicated knowledge that creates little additional value for the principal.
However, the resulting overlap makes peer monitoring effective, which reduces the incentive cost.

The following result describes how robust implementation distorts task allocation.
Agents remain equally spaced, but the common spacing is smaller than in the first best, so the entire allocation is pulled closer to the principal (see \autoref{fig:robust_task_distortion}).
\begin{figure}[t]
\begin{center}
\begin{tikzpicture}[x=6cm, y=1cm,
    fbmarker/.style={regular polygon, regular polygon sides=3, draw=blue!60!black, fill=white, line width=0.9pt, minimum size=8pt, inner sep=0pt},
    sbmarker/.style={diamond, fill=red!80!black, draw=red!40!black, minimum size=5pt, inner sep=0pt},
    principal/.style={rectangle, fill=black, minimum size=6pt, inner sep=0pt}]
  \draw[thick] (0,0) -- (1,0);
  \foreach \x in {0, 0.2, 0.4, 0.6, 0.8, 1.0} {
    \draw (\x,0) -- (\x,-0.08);
    \node[below, font=\scriptsize] at (\x,-0.08) {\x};
  }
  \node[principal] at (0,0) {};
    \node[fbmarker] at (0.22222,0) {};
    \node[fbmarker] at (0.44444,0) {};
    \node[fbmarker] at (0.66667,0) {};
    \node[fbmarker] at (0.88889,0) {};
    \node[sbmarker] at (0.15733,0) {};
    \node[sbmarker] at (0.31466,0) {};
    \node[sbmarker] at (0.47200,0) {};
    \node[sbmarker] at (0.62933,0) {};
\end{tikzpicture}
\end{center}
\caption{Illustration of Distortion under Robust Implementation}
\bigskip

{\scriptsize Notes: First-best (blue triangles) versus second-best under robust implementation (red diamonds) for $n=4$, $\lambda=1$, and $c=0.1$. The principal (black square) is at $0$. The first-best allocation has common spacing $\Delta t^\dagger=2/9$, while the robust second-best allocation has common spacing $\Delta t^*\approx 0.157$.}
\label{fig:robust_task_distortion}
\end{figure}
\begin{proposition}\label{prop_task}
Relative to the first-best allocation, the optimal task allocation under robust implementation places agents closer to one another and closer to the principal. In particular, there exists a unique $\Delta t^* \in [0,\Delta t^\dagger)$ such that
$
t_i^*-t_{i-1}^*=\Delta t^*, \forall i=1,\dots,n,
$ 
with $t_0^*=0$, where
$
\Delta t^\dagger=\frac{2}{2n+1}
$ 
is the distance between any two neighboring agents under the first-best task allocation.
\end{proposition}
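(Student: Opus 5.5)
The plan is to substitute the closed forms for the two components of the objective and work in gap variables. Set $d_i = t_i - t_{i-1}$ for $i=1,\dots,n$, with $d_i\ge 0$ and $D\triangleq\sum_i d_i = t_n \le 1$. By Lemma \ref{lemma_policy} and Proposition \ref{prop_inv}, the objective in \eqref{eq:task_alloc_obj} becomes
\[
\Phi(d)=\text{const}+\sum_{i=1}^n g(d_i)+\frac{1}{2\lambda}e^{-\lambda(1-D)},\qquad g(d)\triangleq \frac{1}{\lambda}e^{-\lambda d/2}+c\,e^{\lambda d}.
\]
Since $g''(d)=\frac{\lambda}{4}e^{-\lambda d/2}+c\lambda^2 e^{\lambda d}>0$, $g$ is strictly convex, and the boundary term depends on $d$ only through $D$.

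The first step is equal spacing. Fix $D$. By strict convexity of $g$ (Jensen's inequality), $\sum_i g(d_i)$ subject to $\sum_i d_i=D$ is uniquely minimized at $d_i=D/n$, so any optimum has a common spacing $\Delta$ with $n\Delta\le 1$. The problem then reduces to minimizing the one-variable function
\[
\phi(\Delta)=n\,g(\Delta)+\frac{1}{2\lambda}e^{-\lambda(1-n\Delta)}
\]
over $[0,1/n]$. This $\phi$ is strictly convex because both terms are, so it has a unique minimizer $\Delta^*$. Together with the equal-spacing step, this gives uniqueness of the optimal allocation.

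The second step compares $\Delta^*$ with the first best. Write $\phi=\phi_L+\phi_K$, where $\phi_L$ is the information-loss part and $\phi_K(\Delta)=nc(1+e^{\lambda\Delta})$. By Lemma \ref{prop_policy}, applied within the equally spaced family and using strict convexity of $\phi_L$, the minimizer of $\phi_L$ is $\Delta^\dagger=2/(2n+1)$. One can also verify this directly: $-\tfrac n2 e^{-\lambda\Delta/2}+\tfrac n2 e^{-\lambda(1-n\Delta)}=0$ gives $\Delta/2=1-n\Delta$. Then
\[
\phi'(\Delta^\dagger)=\phi_K'(\Delta^\dagger)=nc\lambda e^{\lambda\Delta^\dagger}>0,
\]
and strict convexity of $\phi$ gives $\Delta^*<\Delta^\dagger$. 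Also $\Delta^\dagger<1/n$, so the upper constraint $n\Delta\le1$ never binds at the optimum. The minimizer may be the corner $\Delta^*=0$, which is when $\phi'(0)=n\!\left(-\tfrac12+c\lambda\right)+\tfrac n2 e^{-\lambda}\ge 0$; this is why the statement allows $\Delta^*\in[0,\Delta^\dagger)$. Since $t_i^*=i\Delta^*<i\Delta^\dagger=t_i^\dagger$ for every $i$, agents are closer to the principal and to one another.

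The main point needing care is the reduction to equal spacing. The boundary term couples all gaps through $D$, so the argument has to be ordered as above: fix $D$, apply Jensen's inequality, then optimize over $D$. Otherwise it must be checked that the KKT conditions force $g'(d_i)$ to be equal across $i$, which by strict monotonicity of $g'$ gives equal $d_i$. Beyond that, everything is a routine one-dimensional convexity calculation.
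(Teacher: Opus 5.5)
Your proof is correct and follows essentially the same route as the paper. Like the paper, you write the objective in the gaps, use strict convexity and symmetry in $(\Delta t_1,\dots,\Delta t_n)$ to force equal spacing, and note that adding the strictly increasing incentive term $nc(1+e^{\lambda\Delta})$ to the information loss pushes the optimal common spacing strictly below $\Delta t^\dagger=2/(2n+1)$. Your evaluation $\phi'(\Delta^\dagger)=nc\lambda e^{\lambda\Delta^\dagger}>0$ just makes explicit the strictness the paper attributes to monotone comparative statics, and your fix-$D$-then-Jensen step is a valid alternative to the paper's direct use of joint strict convexity plus symmetry.
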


Proposition \ref{prop_task} shows that robust incentive provision systematically biases organizations toward familiar domains. It follows by writing the principal's objective in terms of the gaps
$\Delta_i=t_i-t_{i-1}$. 
Under the optimal RIE, the incentive cost $K(\tau)$ is additive across agents, and agent $i$'s contribution depends only on her distance $\Delta_i$ from her left-hand neighbor (see \eqref{inv_c}). 
The information loss \eqref{inf_loss} contains the term $\sum_{i=1}^n e^{-\lambda \Delta_i/2}$ and a terminal term $\frac12 e^{-\lambda(1-t_n)}$.
The sum exhibits a similar additive separability, while the terminal term  captures the remaining interval to the right of the last agent. 
Thus, conditional on the total reach $t_n=\sum_i \Delta_i$, the terminal term is fixed and the objective is a sum of identical convex functions of the gaps. 
As a result, it is optimal to equalize the neighboring distances. 
The principal then chooses the common $\Delta_i$ to balance the trade-off between stronger monitoring and broader informational coverage.

At the same time, this common spacing under the optimal robust contract is narrower than under the first best (see \autoref{fig:robust_task_distortion}).
In the first best, agents are spread out to avoid leaving any part of the task space too weakly covered. 
Under robust implementation, however, concentration also reduces the cost of incentive provision by placing agents in regions where monitoring is more effective and where the principal's own expertise is more useful. 
The organization therefore accepts poorer coverage in more distant regions of the task space in exchange for better-supervised knowledge production near the principal. 
In this sense, robust implementation leads the principal to prioritize dimensions of the project that are closer to her expertise and easier to monitor, while compromising on dimensions that are farther away and harder to discipline indirectly.

As a corollary of the above result, we obtain the following comparative statics: a larger organization can sustain high decision quality over a broader portion of the task space. 
\begin{corollary}\label{coro_task}
Suppose $c<(1-e^{-\lambda})/(2\lambda)$, so that $\Delta t^*>0$. As the number $n$ of agents increases, the optimal allocation becomes locally denser: the distance $\Delta t^*$ between each agent and her left-hand neighbor decreases, and so does the expected payment $c(1+e^{\lambda \Delta t^*})$ to each agent. At the same time, $t_n^*=n\Delta t^*$ increases.
\end{corollary}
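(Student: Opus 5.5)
By Proposition \ref{prop_task}, the optimal allocation has equal gaps $t_i^*-t_{i-1}^*=\Delta$. So the plan is to reduce the problem to a one-dimensional minimization over $\Delta\in[0,1/n]$, characterize $\Delta t^*$ by a first-order condition, and then compare the solutions for $n$ and $n+1$ (or treat $n$ as continuous) using a monotonicity argument.

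\textbf{Step 1: reduce to one variable.} Substituting $t_i=i\Delta$ into \eqref{inf_loss} and \eqref{inv_c} gives
\[
F(\Delta,n)=\tfrac12-\tfrac{2n+1}{2\lambda}+\tfrac1\lambda\Bigl[n e^{-\lambda\Delta/2}+\tfrac12 e^{-\lambda(1-n\Delta)}\Bigr]+cn\bigl(1+e^{\lambda\Delta}\bigr).
\]
Differentiating yields $\partial_\Delta F=n\,g(\Delta,n)$, where
\[
g(\Delta,n)\triangleq -\tfrac12 e^{-\lambda\Delta/2}+\tfrac12 e^{-\lambda(1-n\Delta)}+c\lambda e^{\lambda\Delta}.
\]
Each of the three terms is strictly increasing in $\Delta$. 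Hence $F(\cdot,n)$ is strictly convex, and its minimizer is characterized by the sign of $g$ at the endpoints.

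\textbf{Step 2: locate the root.} At $\Delta=0$ we have $g(0,n)=-\tfrac12(1-e^{-\lambda})+c\lambda$. This is negative exactly when $c<(1-e^{-\lambda})/(2\lambda)$, a condition that does not depend on $n$. At $\Delta=1/n$ we have $g=\tfrac12(1-e^{-\lambda/(2n)})+c\lambda e^{\lambda/n}>0$. Therefore, under the hypothesis, $\Delta t^*$ is the unique root of $g(\cdot,n)=0$ in $(0,1/n)$, and in particular $\Delta t^*>0$.

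\textbf{Step 3: comparative statics in $n$.} For $\Delta>0$, $g$ is strictly increasing in $n$, because the term $\tfrac12e^{-\lambda(1-n\Delta)}$ increases in $n$.
\begin{itemize}
\item \emph{Spacing falls.} Since $g(\Delta t^*_n,n+1)>g(\Delta t^*_n,n)=0$ and $g$ is increasing in $\Delta$, the root for $n+1$ lies strictly to the left: $\Delta t^*_{n+1}<\Delta t^*_n$.
\item \emph{Per-agent payment falls.} The payment $c(1+e^{\lambda\Delta t^*})$ is increasing in $\Delta t^*$, so it falls with $n$ immediately.
\end{itemize}

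\textbf{Step 4: reach rises.} This is the only step that is not immediate, because $n$ rises while $\Delta t^*$ falls, so the sign of the change in $t_n^*=n\Delta t^*$ is unclear from Step 3 alone. I would rewrite the first-order condition to isolate the reach:
\[
\tfrac12 e^{-\lambda(1-n\Delta t^*)}=\tfrac12 e^{-\lambda\Delta t^*/2}-c\lambda e^{\lambda\Delta t^*}.
\]
The right-hand side is strictly decreasing in $\Delta t^*$, and it is positive because it equals the positive left-hand side. When $n$ increases, $\Delta t^*$ decreases by Step 3, so the right-hand side strictly increases. Hence $e^{-\lambda(1-n\Delta t^*)}$ increases, which means $t_n^*=n\Delta t^*$ strictly increases.
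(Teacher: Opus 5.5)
Your proof is correct. Steps 1--3 follow the paper's route. The paper writes the first-order condition as $x^{2(n-1)}+2\lambda e^{\lambda}c-e^{\lambda}x^{-3}=0$ with $x=e^{\lambda\Delta t^*/2}$. This is your $g(\Delta t^*,n)=0$ multiplied by $2e^{\lambda}e^{-\lambda\Delta t^*}$. The paper then gets that $\Delta t^*$ decreases in $n$ from the same monotonicity in $n$ and in $\Delta$. You also check the endpoint signs explicitly, which the paper only asserts.

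The real difference is Step 4. The paper treats $n$ as a continuous variable and implicitly differentiates the first-order condition. This reduces the sign of $\frac{d(n\Delta t^*)}{dn}$ to $3e^{\lambda(1-\frac32\Delta t^*)}-2e^{\lambda(n-1)\Delta t^*}>0$. The paper verifies that inequality using $\Delta t^*<\Delta t^\dagger=2/(2n+1)$ from Proposition \ref{prop_task} and Lemma \ref{prop_policy}.

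You instead observe that the first-order condition sets the terminal term $\tfrac12 e^{-\lambda(1-n\Delta t^*)}$, which depends only on the reach, equal to a strictly decreasing function of $\Delta t^*$ alone. The reach result then follows immediately from the spacing result. This needs no calculus in $n$, no implicit function theorem, and no comparison with the first best. It also compares integer $n$ and $n+1$ directly, which is what the statement concerns. The paper's route does give an explicit expression for the marginal effect of $n$ on the reach, at the cost of extending $n$ to the reals.

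Incidentally, the paper's final inequality already follows from the first-order condition itself, which gives $e^{\lambda(1-\frac32\Delta t^*)}-e^{\lambda(n-1)\Delta t^*}=2\lambda e^{\lambda}c>0$. The positivity remark in your Step 4 is harmless but unnecessary, since you only use that the exponential is strictly increasing.
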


As the number of agents increases, tasks are assigned more finely, so nearby agents work on more similar problems. This strengthens local monitoring and lowers the incentive cost for each agent. At the same time, because there are more agents, the organization can extend this finely monitored structure over a broader region of the task space. Thus, the organization’s high-quality core expands and the poorly covered fringe shrinks. Although robust implementation still favors concentration, greater organizational capacity mitigates its cost in terms of task-space coverage.

\begin{figure}[t]
\begin{center}
\begin{subfigure}[t]{0.48\textwidth}
  \centering
\begin{tikzpicture}[x=6cm, y=1cm,
    fbmarker/.style={regular polygon, regular polygon sides=3, draw=blue!60!black, fill=white, line width=0.9pt, minimum size=8pt, inner sep=0pt},
    sbmarker/.style={diamond, fill=red!80!black, draw=red!40!black, minimum size=5pt, inner sep=0pt},
    principal/.style={rectangle, fill=black, minimum size=6pt, inner sep=0pt}]
  \draw[thick] (0,0) -- (1,0);
  \foreach \x in {0, 0.2, 0.4, 0.6, 0.8, 1.0} {
    \draw (\x,0) -- (\x,-0.08);
    \node[below, font=\scriptsize] at (\x,-0.08) {\x};
  }
  \node[principal] at (0,0) {};
    \node[fbmarker] at (0.22222,0) {};
    \node[fbmarker] at (0.44444,0) {};
    \node[fbmarker] at (0.66667,0) {};
    \node[fbmarker] at (0.88889,0) {};
    \node[sbmarker] at (0.24777,0) {};
    \node[sbmarker] at (0.48402,0) {};
    \node[sbmarker] at (0.72521,0) {};
    \node[sbmarker] at (0.87070,0) {};
\end{tikzpicture}
  %\caption{$c = 0.02$: agents $1,2,3$ shift right, agent $4$ shifts left.}
  \label{fig:agent_locations_n4_small}
\end{subfigure}\hfill
\begin{subfigure}[t]{0.48\textwidth}
  \centering
\begin{tikzpicture}[x=6cm, y=1cm,
    fbmarker/.style={regular polygon, regular polygon sides=3, draw=blue!60!black, fill=white, line width=0.9pt, minimum size=8pt, inner sep=0pt},
    sbmarker/.style={diamond, fill=red!80!black, draw=red!40!black, minimum size=5pt, inner sep=0pt},
    principal/.style={rectangle, fill=black, minimum size=6pt, inner sep=0pt}]
  \draw[thick] (0,0) -- (1,0);
  \foreach \x in {0, 0.2, 0.4, 0.6, 0.8, 1.0} {
    \draw (\x,0) -- (\x,-0.08);
    \node[below, font=\scriptsize] at (\x,-0.08) {\x};
  }
  \node[principal] at (0,0) {};
    \node[fbmarker] at (0.22222,0) {};
    \node[fbmarker] at (0.44444,0) {};
    \node[fbmarker] at (0.66667,0) {};
    \node[fbmarker] at (0.88889,0) {};
    \node[sbmarker] at (0.38800,0) {};
    \node[sbmarker] at (0.41200,0) {};
    \node[sbmarker] at (0.78800,0) {};
    \node[sbmarker] at (0.81200,0) {};
\end{tikzpicture}
  \label{fig:agent_locations_n4_large}
\end{subfigure}
\end{center}
\caption{Illustration of Distortion under Bayesian Nash Implementation}
\bigskip

{\scriptsize Notes: First-best (blue triangles) versus second-best under Bayesian Nash implementation (red diamonds) for $n=4$, $\lambda=1$. The principal (black square) is at $0$. In the left panel, $c = 0.02$: agents $1,2,3$ shift right, agent $4$ shifts left. In the right panel, $c = 0.5$: agents $(1,2)$ pair near $\tfrac{2}{5}$; agents $(3,4)$ pair near $\tfrac{4}{5}$.}
\label{fig:agent_locations_n4}
\end{figure}

\begin{remark}[Task allocation under Bayesian Nash implementation.]  The distortion in \autoref{prop_task}, whereby the principal locates agents closer to his own location, is a consequence of the robustness requirement rather than a generic implication of moral hazard. To see this, consider instead the optimal task allocation under the Bayesian Nash implementation (BNI) benchmark of Section \ref{sec:BNI}, which minimizes the sum of the information loss \eqref{inf_loss} and the minimized payment under BNI in \hyperref[claim_bni]{Claim \ref{claim_bni}}.
\autoref{fig:agent_locations_n4} compares the first-best allocation with the BNI second-best allocation for four agents at two effort-cost levels.
The left panel shows that when the effort cost is small, the direction of distortion under BNI can be the opposite of that in \autoref{prop_task}: all interior agents move \emph{away} from the principal relative to the first best.
The right panel shows that a qualitative contrast between the BNI optimum and \autoref{prop_task} persists even when the effort cost is large.\footnote{For an even number of agents and a sufficiently large effort cost $c$, the BNI-optimal allocation groups agents into $n/2$ pairs and places each pair at the same location, where the chosen locations coincide with the first-best locations for $n/2$ agents.
In contrast, under robust implementation all agents collapse to $t=0$ for large $c$.}
\end{remark}

\subsection{Confidential Assignment and Perceived Monitoring Hierarchies}\label{sec:ran}

Now we investigate the value of stochastic confidential task allocation, under which each agent observes only her own realized assignment but not those of others. The key is to introduce uncertainty about the monitoring structure: an agent no longer knows exactly who is assigned to nearby tasks, and hence cannot precisely assess how likely her effort is to be detected through peer evaluation. We ask whether, and to what extent, this confidentiality in task allocation helps the principal robustly implement effort.

Confidential assignment is natural in many modern knowledge-production environments. On crowdsourcing platforms such as Amazon Mechanical Turk, for example, a client can assign the same task to multiple workers without revealing how many others receive the same assignment. More broadly, in remote work environments, team members often have limited knowledge of one another’s assignments and may not know who else is working on closely related tasks. Similar features also arise when firms hire external consultants, when faculty hire research assistants from a large pool of students who do not know one another, or when journals and funding agencies rely on anonymous reviewers drawn from the same pool of researchers or applicants whose own work is also being evaluated.

Formally, we modify the model as follows:
First, the principal publicly commits to a stochastic task allocation, which is a pair $(T, \pi)$ such that $T\subseteq [0,1]^n$ is a finite set of deterministic task allocations and $\pi\in \Delta(T)$ is a probability distribution over $T$.
The principal also commits to a contract, which specifies, for each realized task allocation $\tau \in T$ and a signal profile $\signals$, the payment $w_i(\tau, \signals) \ge 0$ for each agent $i$.
Abusing notation, we continue using $w$ for a contract of this model.

Once the principal chooses $(T, \pi, w)$,  Nature draws the state $x(\cdot)$ and the task allocation, $\tau  = (t_1,..., t_n)\sim \pi$.
Each agent $i$ observes $(T, \pi, w)$ and $t_i$, and then decides whether to acquire information.
Agents do not observe the realizations of other agents' locations.
For example, if the principal randomizes over $(t_1 , t_2) = (0.1, 0.3)$, $(0.1, 0.4)$, and $(0.2,0.5)$ with equal probability, then agent $1$ observes only whether $t_1 =0.1$ or $t_1 = 0.2$, and after observing $t_1 = 0.1$, she infers that agent $2$'s location is $0.3$ or $0.4$ with equal probability.

Our notion of robust implementation extends to this setup by applying IESDS to the agent-normal form of the incomplete-information game $\Gamma(T,\pi,w)$.
For each private type $t_i$ of agent $i$, the corresponding type-agent chooses whether to acquire information.
As before, the noisemaker is payoff-irrelevant; here, its pure strategy is an allocation-contingent noise rule $\eta:T\to\{A,B\}^n$, so that when allocation $\tau$ is realized and agent $i$ shirks, her signal is $\eta_i(\tau)$.
Contract $w$ robustly implements effort if every type-agent's shirking action is eliminated under IESDS.

The following result shows that confidential randomization can asymptotically achieve the first-best information loss while reducing the incentive cost to $K(0)$.
\begin{proposition}\label{random_eff2}
Suppose that there are at least two agents.
There is a sequence of stochastic task allocations and contracts, $(T_k, \pi_k, w_k)_{k \in \mathbb{N}}$, such that for each $k$, contract $w_k$ robustly implements effort, and as $k\to\infty$, the principal's total loss converges to $L(\tau^{\dag})+K(0)$.
\end{proposition}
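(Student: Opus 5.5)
The plan is to build, for each $k$, a stochastic allocation that equals the first-best profile $\tau^{\dag}$ of Lemma~\ref{prop_policy} with probability $1-n\delta_k$, where $\delta_k\to 0$, and otherwise draws one of $n$ rare ``monitoring states.'' In monitoring state $\tau^{(i)}$, every agent except one partner $j(i)\neq i$ keeps her first-best location. The partner $j(i)$ is moved onto agent $i$'s first-best location $t_i^{\dag}$. This is where $n\ge 2$ is needed: for instance $j(i)=i+1$ for $i<n$ and $j(n)=1$. Set $T_k=\{\tau^{\dag},\tau^{(1)},\dots,\tau^{(n)}\}$ with $\pi_k(\tau^{(i)})=\delta_k$ for each $i$.

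Because first-best locations are distinct, the type $t_i^{\dag}$ of agent $j(i)$ arises only in state $\tau^{(i)}$. That type-agent therefore knows the realized allocation, and in particular that she shares agent $i$'s location. Agent $i$'s own type $t_i^{\dag}$, by contrast, arises both in $\tau^{\dag}$ and in $\tau^{(i)}$. It may also arise in states $\tau^{(\ell)}$ with $\ell\neq i$ where $i$ keeps her location, but those states will carry no pay for $i$ and are harmless. Conditional on her type, agent $i$ assigns probability $\delta_k/(1-(n-1)\delta_k)$ (or a comparable order-$\delta_k$ number) to being in $\tau^{(i)}$.

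The contract is as follows. (a) In state $\tau^{(i)}$, the partner $j(i)$ receives $2ce^{\lambda t_i^{\dag}}+\epsilon$ if $s_{j(i)}=s_0$ and zero otherwise. (b) In state $\tau^{(i)}$, agent $i$ receives a bonus $B_k$ if $s_i=s_{j(i)}$. (c) All other payments, including all payments in $\tau^{\dag}$, are zero.

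Robust implementation follows from a two-round IESDS argument in the agent-normal form. In round one, each rare partner type-agent (agent $j(i)$ of type $t_i^{\dag}$) has a payoff depending only on $(s_0,s_{j(i)})$. By the single-link computation underlying Lemma~\ref{lem_chain_contract}, shirking is strictly dominated for her, whatever the noisemaker's rule $\eta$ does. In round two, consider agent $i$ of type $t_i^{\dag}$. Given that the partner works in $\tau^{(i)}$, we have $s_{j(i)}=x(t_i^{\dag})$. Working then yields the bonus with probability one in that state, while shirking yields a signal $\eta_i(\tau^{(i)})$ that is independent of the state, so it matches with probability $1/2$. Her gain from working is thus $\Pr(\tau^{(i)}\mid t_i^{\dag})\,B_k/2-c$. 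Choosing $B_k=(2c+\epsilon_k)/\Pr(\tau^{(i)}\mid t_i^{\dag})$ makes this strictly positive, so every type-agent's shirking is eliminated.

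For the costs, agent $i$'s ex ante expected payment is $\Pr(\tau^{(i)})B_k$, which converges to $2c$ as $\delta_k,\epsilon_k\to0$, because $\Pr(\tau^{(i)}\mid t_i^{\dag})$ and $\Pr(\tau^{(i)})$ differ only by the factor $\Pr(t_i=t_i^{\dag})\to 1$. The partner's expected payment is at most $\delta_k\,c(1+e^{\lambda})+\delta_k\epsilon$, which vanishes. Total payment therefore tends to $2nc=K(0)$ from Corollary~\ref{corollary1}. Information loss equals $L(\tau^{\dag})$ with probability $1-n\delta_k$ and is bounded by $1$ otherwise, so it converges to $L(\tau^{\dag})$. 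Adding the two limits gives $L(\tau^{\dag})+K(0)$.

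The step I expect to need the most care is the bookkeeping of types and beliefs in the agent-normal form. One must verify that each rare partner type is tied to a single allocation, so that her first-round dominance does not depend on other agents' behavior. One must also verify that the bonus to agent $i$ in $\tau^{(i)}$ references only a type-agent already eliminated in round one, so that no cyclic dependence reappears through the randomization. Handling the noisemaker's allocation-contingent rule $\eta$ needs the same care, since a shirker's match probability must stay exactly $1/2$ in each contingency. This holds because $x(t_i^{\dag})$ is uniform and independent of the allocation draw.
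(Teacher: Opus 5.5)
Your construction is correct, but it takes a different route from the paper's appendix proof. The two differ in who monitors whom and against what.

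\textbf{The paper's route.} It makes agent $n$ the secret co-located monitor for every agent $i\le n-1$, and agent $n-1$ the monitor for agent $n$. It disciplines the monitor in state $\tau^{\dag}_i$ by comparing her signal with the left neighbor's signal $s_{i-1}$, not with $s_0$. The elimination order is therefore a long interleaved chain: agent $n$'s monitor type for $\tau^{\dag}_1$, then agent $1$'s worker type, then agent $n$'s monitor type for $\tau^{\dag}_2$, and so on. Its validity rests on each worker type of agent $i-1$ being eliminated before agent $n$'s monitor type for $\tau^{\dag}_i$.

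\textbf{Your route.} You use an injective cyclic partner map and peg every monitor directly to the principal's signal. IESDS then ends in two rounds, with no dependence across the rare states. This is exactly the $n$-agent extension of the paper's main-text two-agent sketch, which also pays the co-located partner for matching $x(0)$.

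\textbf{What each buys.} The paper's route gives a cheaper monitor bonus, because the monitoring link spans one or two first-best gaps rather than the full distance $t_i^{\dag}$ to the principal. That yields its sharper finite-$p$ cost formula. This term is weighted by the rare-state probability, however, and vanishes, so your simpler route loses nothing for the limit statement. The paper's choice also keeps agents $1,\dots,n-2$ at a single location in every state. Your cyclic map gives every agent both a worker type and a monitor type, and your type bookkeeping handles this correctly.

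\textbf{One small slip.} Under your map, agent $i$'s type $t_i^{\dag}$ occurs in every state except $\tau^{(j^{-1}(i))}$. Hence $\Pr(\tau^{(i)}\mid t_i^{\dag})=\delta_k/(1-\delta_k)$, not $\delta_k/(1-(n-1)\delta_k)$, once $n\ge 3$. Since you define $B_k$ through the exact posterior, the dominance argument and the cost limit $\Pr(\tau^{(i)})B_k=(2c+\epsilon_k)(1-\delta_k)\to 2c$ are unaffected.
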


 \autoref{random_eff2} shows that stochastic task assignment, combined with confidentiality of both assignments and payments across agents, can mitigate the tension between diversification benefits and incentive costs. In particular, the principal can simultaneously approximate (i) the first-best information loss and (ii) the minimal second-best expected payments as if all agents were assigned to location $0$. The random assignment creates uncertainty and asymmetric perceptions across agents about the realized monitoring hierarchy. The principal can exploit these uncertainty to robustly implement effort at lower cost. This virtual implementation outcome requires contracts that deliver arbitrarily large payments with arbitrarily small probabilities.

Next, we sketch the proof for the case of two agents and leave the general proof to Appendix \ref{appendix:random_task}. Consider the following choice of the principal:
With probability $1-p$, the principal chooses the first-best task allocation $\tau^\dag = (t^\dag_1, t^\dag_2)$, which is  $(\frac25,  \frac45)$ by  \autoref{prop_policy}.
For the remaining event, the principal assigns both agents to the same location, i.e., $\tau^\dag_1 \triangleq (t^\dag_1,t^\dag_1)$ or $\tau^\dag_2 \triangleq (t^\dag_2, t^\dag_2)$ with probability $0.5p$.
Task allocation $\tau^\dag_i$ assigns both agents to agent $i$'s first-best location, $t^\dag_i$.
The contract is as follows:\
If the realized task allocation is the first-best $\tau^\dag$, there is no payment.
Under task allocation $\tau^\dag_i$, agent $j\not=i$ will receive $2c e^{\lambda t^\dag_i}+\epsilon$ if her signal matches the principal's signal, $x(0)$, and
agent $i$ will receive $\frac{4c}{p}+\epsilon$ if her signal matches agent $j$'s signal.

The key property of this policy is that even when the realized task allocation is the first-best, agent $i$ still believes that the task allocation can be $\tau^\dag_i$ and the other agent is assigned to the same location as agent $i$.
While the probability of $\tau^\dag_i$ can be small (as we will take $p\to 0$), the principal sets the bonus $\frac{4c}{p}+\epsilon$ for agent $i$ large enough so that she finds it optimal to acquire information.

Indeed, for any $p \in (0,1)$ and $\epsilon>0$, the IESDS uniquely selects both agents acquiring information:
Suppose that agent $1$ learns that her assigned location is agent $2$'s first-best location $t^\dag_2$.
Because the possible task allocations are $\tau^\dag =(t^\dag_1, t^\dag_2)$, $\tau^\dag_1= (t^\dag_1, t^\dag_1)$, and $\tau^\dag_2= (t^\dag_2, t^\dag_2)$,
agent $1$ learns that the realized task allocation must be $\tau^\dag_2$.
As a result, agent $1$ understands that she will receive $2c e^{\lambda t^\dag_2}+\epsilon$ if her signal matches the principal's signal, $x(0)$, regardless of the other agent's action. Thus, 
agent $1$ strictly prefers to exert effort as a strictly dominant strategy.
Symmetrically, agent $2$ will work when the realized task allocation is $\tau^\dag_1$.

Second, suppose that agent $1$ learns that she is assigned to her first-best location $t^\dag_1$.
Agent $1$ then infers that the task allocation is $\tau^\dag$ or $\tau^\dag_1$ with probability $\frac{1 - p}{1-p+0.5p}$ or $ \frac{0.5p}{1-p+0.5p}$, respectively.
Agent $1$ could possibly earn a positive payment only when the realization allocation $\tau^\dag_1$, but agent $1$ knows that, in such a case, agent $2$ acquires information (because of the first step of the IESDS).
Thus, by exerting effort, agent $1$ can increase the probability of the positive payment $\frac{4c}{p}+\epsilon$ by $0.5$ in case the task allocation $\tau^\dag_1$ is realized.
As a result, the expected net gain for agent $1$ of acquiring information after observing $t^\dag_1$ is
\begin{equation*}
\frac{0.5p}{1-p+0.5p} \cdot 0.5 \cdot \left(\frac{4c}{p}+\epsilon\right) -c >
0.5p \cdot 0.5 \cdot \left(\frac{4c}{p}+\epsilon\right) -c =\frac{p\epsilon}{4}>0.
\end{equation*}
Thus, agent $1$ exerts effort as a strictly dominant strategy in the second step of the IESDS.
By applying the symmetric argument to agent $2$, we conclude that each agent $2$ also chooses the desired strategy as a dominant strategy.
Therefore, the IESDS selects the desired outcome for any realized task allocation.

Finally, by taking $(p, \epsilon) \to 0$, the principal will implement task allocation $\tau^\dag$ with probability converging to $1$, so the information loss converges to the first-best level.
At the same time, the expected payment to each agent converges to $0.5p\cdot \frac{4c}{p} = 2c$.
Here, the payment regarding $2c e^{\lambda t^\dag_i}+\epsilon$ will vanish as $p\to 0$, because it is independent of $p$ and can arise only with probability $0.5p$.
Therefore, the principal's total loss can be arbitrarily close to $L(\tau^\dag)+ K(0)$.\footnote{Our construction of stochastic assignment is related in spirit to \cite{legros1993efficient}, \cite{rahman2010mediated}, and \cite{rahman2012but}, which study contracts that occasionally induce some agents to take ex post suboptimal actions in order to help identify non-deviators. Our setting differs in two respects. First, we study robust implementation. Second, the principal creates a random monitoring structure through task assignment, rather than through correlated recommendations in a fixed game.}

\begin{remark}[Applicability.]  
Practical relevance of random assignment depends on the application. It may be limited by the principal's liquidity constraints, imperfect confidentiality of task assignments, infeasibility of assigning different agents to exactly the same task, and agents' risk aversion. For example, assigning multiple agents to the same task may be difficult when agents must share equipment, use common accounts, work in the same laboratory, or coordinate on a common schedule. These features can make assignments observable and undermine the confidentiality needed for the construction. Even so, the broader lesson is general: making agents uncertain about others' assignments can significantly relax the tension between diversity and incentive and thereby improve the principal's objective.
\end{remark}

\section{Conclusion}\label{sec_con}
We study peer monitoring design in knowledge production. 
Agents' work generates information useful for evaluating related tasks, 
but this information can support incentives only after the agents who produce it have been induced to work.
We show that robust implementation imposes an ordering on monitoring authority, and in particular, the optimal contract organizes peer monitoring as a chain ordered by proximity to the principal's core expertise. The principal first disciplines the agent whose task is easiest for her to monitor; that agent's signal then becomes a credible monitor for the next agent, and so on.
This one-directional hierarchy breaks circular incentive dependence, but it also limits the use of information that would be available under mutual monitoring.
When task allocation is endogenous, the hierarchy distorts knowledge production toward familiar, easily monitored tasks. Confidential random assignment can relax this distortion by making agents uncertain about the realized monitoring hierarchy.
%The analysis highlights a distinct organizational role of task assignment: it shapes not only what information is produced, but also who can credibly evaluate whom.

Our analysis abstracts from collusion among agents. 
The absence of collusion is plausible in some knowledge-production environments, such as when agents work remotely, interact only through formal reporting channels, or when parts of the project are outsourced to agents with limited direct communication. Extending the model to allow for collusion is a natural direction for future work. Our results suggest that the implications may depend on what forms of collusion are feasible. If agents can communicate with each other but cannot make side payments, simple modifications of the contract may deter information sharing; for example, the principal could penalize evaluation patterns that are too closely aligned with the information of the evaluated agent. If agents can make side payments, the problem is more serious: an agent may bribe her evaluator to preserve her own compensation. In the spirit of \cite{strulovici2021learning}, collusion-proofing might require giving agents higher in the monitoring hierarchy sufficiently strong incentives so that the largest bribe offered by those they monitor cannot overturn their willingness to evaluate them.

\bibliographystyle{apalike}
\bibliography{reference}

\appendix

\section{Appendix: Omitted Proofs and Technical Details}\label{sectionAppendix}

\subsection{Proof of \autoref{prop-RIE}} \label{apx:proof-RIE}

   The proof of the first part of \autoref{prop-RIE} is provided in the main text. In this section, we prove the second part, i.e., for any RIE contract, there exists an RIE contract that induces a monitoring hierarchy and attains weakly lower incentive cost.
   Given a fixed task profile $\tau$ such that $t_1\le t_2\le \cdots \le t_n$.  Take any RIE contract. 
Without loss, assume that in each step of its iterated elimination of strictly dominated strategies (IESDS), exactly one agent eliminates shirking.
We can identify the order of elimination with a permutation of agents' names, generically written as $\permutation = (i_1,..., i_n)$, i.e., agent $i_k$ is the $k$-th agent to eliminate shirking.
We call such permutation $\permutation$ an \emph{elimination order}.
Given any elimination order $\permutation$, we say that an agent $i$ is \emph{$k^{th}$-ranked} if $i = i_k$.
Agent $i_j$ is \emph{risk-free for agent $i_k$} 
if $j < k$, i.e., agent $i_j$ eliminates shirking earlier than agent $i_k$ according to order $\permutation$.
Throughout this proof, we fix an arbitrary elimination order $\permutation$ induced by the RIE contract.

The proof consists of three steps.
{\bf Step 1} identifies a lower bound of expected payments for each agent $i_k$ ($k=1,..., n$) across all RIE contracts consistent with elimination order $\permutation$.
By adding up these lower bounds across agents, we obtain a lower bound of total expected payments across all RIE contracts consistent with elimination order $\permutation$.
{\bf  Step 2} constructs a family of contracts, $\{w^{* \epsilon}\}$, with respect to $\permutation$.
We show that, as $\epsilon \to 0$,  $w^{* \epsilon}$ attains the lower bound in Step 1.
Finally, {\bf Step 3} shows that for any $\epsilon > 0$, contract $w^{* \epsilon}$ is an RIE contract, and it induces a monitoring hierarchy consistent with $\permutation$.

\medskip \noindent\textbf{Step 1: Lower bound of expected payment for a given elimination order.} 
Unless otherwise stated, ``lower bound for agent $i_k$'' means a lower bound of expected payments for agent $i_k$ across all RIE contracts consistent with elimination order $\permutation$, where the expected payments are evaluated at all agents exerting effort.

This step consists of four substeps:
{\bf Step 1.1} derives a lower bound for agent $i_1$.
{\bf Step 1.2} restricts our attention to a class of contracts to derive a lower bound for other agents.
{\bf Step 1.3} derives a lower bound for agent $i_2$, and {\bf Step 1.4} derives those for the rest of agents.

  \medskip \noindent\textbf{Step 1.1: Lower bound of payment for agent $i_1$.} 
  Let $\contracts(\tau, \permutation)$ denote the set of all RIE contracts that admit elimination order $\permutation$.
  In the game $\Gamma(\tau, w)$ with any contract $w\in \contracts(\tau, \permutation)$, agent $i_1$ strictly prefers to work regardless of the strategies of other agents and the noisemaker.

    Specifically, because agent $i_1$ is the first to eliminate shirking, any other agent $j\not= i_1$ has both working $\action_j = 1$ and shirking $\action_j = 0$ uneliminated. 
    For any agent $j$ who works, her signal is the local state of her assigned location, i.e., $\signal_j = \state(t_j)$.
    For any agent $j \neq i_1$ who shirks, the noisemaker's choice, $\shirks = (\shirk_1,..., \shirk_n)$, determines her signal realization as $\signal_j  = \shirk_j\in \{A,B\}$. 
    
    Given any combination of other agents' strategies $\actions_{-i_1} \in \actionspace_{-i_1}^* \triangleq \{0,1\}^{n-1}$ and the noisemaker's choice $\shirks \in \{A, B\}^n$, any contract $\contract \in \contracts(\tau, \permutation)$ must be such that agent $i_1$ earns a strictly higher payoff by working than shirking.
  Thus, $\forall \actions_{-i_1} \in  \actionspace_{-i_1}^{*}, \forall \shirks \in \{A,B\}^{n},$ the contract must satisfy the following set of inequalities:
    \begin{equation}\label{equationi1}
        \sum_\signals w_{i_1}(\signals) \Pr(\signals|\action_{i_1} = 1, \actions_{-i_1},\shirks) -c > \sum_\signals w_{i_1}(\signals)\Pr(\signals|\action_{i_1} = 0, \actions_{-i_1}, \shirks), 
    \end{equation}
    where $\Pr(\signals|\action_{i_1} , \actions_{-i_1},\shirks)$ is the probability of signal profile $\signals$ given agents' strategies and the noisemaker's choice.

The goal of this substep is to find a lower bound of the expected payment for agent $i_1$ across all contracts that satisfy \eqref{equationi1}, where the expected payment is evaluated at all agents working.
(Hereafter, whenever we say ``payment'' it means the payment evaluated at all agents working.) Specifically, we find a lower bound by minimizing the expected payment for agent $i_1$ subject to weaker constraints that modify \eqref{equationi1} as follows: (i) imposing only the constraints associated with all other agents shirking, i.e., $\actions_{-i_1} = \boldsymbol{0}$; and (ii) replacing strict inequalities with weak ones.
Meanwhile, it proves convenient to write the noisemaker's choice separately for agent $i_1$  as $\shirk_{i_1}   \in \{A, B\}$ and for other agents as $\shirks_{-i_1} \in \{A, B\}^{n-1}$.

The weaker constraints that reflect Points (i) and (ii) above are written as:
\begin{align}
        &\sum_\signals w_{i_1}(\signals) \Pr(\signals|\action_{i_1} = 1, \actions_{-i_1}= \boldsymbol{0},\shirk_{i_1} = A, \shirks_{-i_1}) -c \geq \sum_\signals w_{i_1}(\signals)\Pr(\signals|\action_{i_1} = 0,\actions_{-i_1}= \boldsymbol{0},\shirk_{i_1} = A, \shirks_{-i_1})\label{equationICA}\\
        &\sum_\signals w_{i_1}(\signals) \Pr(\signals|\action_{i_1} = 1, \actions_{-i_1}= \boldsymbol{0},\shirk_{i_1} = B, \shirks_{-i_1}) -c \geq \sum_\signals w_{i_1}(\signals)\Pr(\signals|\action_{i_1} = 0,\actions_{-i_1}= \boldsymbol{0},\shirk_{i_1} = B, \shirks_{-i_1})\label{equationICB}
\end{align}
  for all $\shirks_{-i_1} \in \{A,B\}^{n-1}$. 
In \hyperref[apx:omitted_details]{Appendix \ref{apx:omitted_details}}, we show that, 
for any fixed $\shirks_{-i_1} \in \{A, B\}^{n-1}$, constraints \eqref{equationICA} and \eqref{equationICB} are equivalent to, respectively, 
 \begin{align}
 &\frac{1}{4}(1-e^{-\lambda t_{i_1}})[w_{i_1}(AB) - w_{i_1}(AA)] + \frac{1}{4}(1+e^{-\lambda t_{i_1}})[w_{i_1}(BB) - w_{i_1}(BA)] \geq c \tag{ICA$_{\shirks_{-i_1}}$};\quad  \text{and} \label{ICA1.1}\\
 &\frac{1}{4}(1-e^{-\lambda t_{i_1}})[w_{i_1}(BA) - w_{i_1}(BB)] + \frac{1}{4}(1+e^{-\lambda t_{i_1}})[w_{i_1}(AA) - w_{i_1}(AB)] \geq c\tag{ICB$_{\shirks_{-i_1}}$} \label{ICB1.1},
 \end{align}
where we adopt the shorthand notation $w_{i_1}(\signal_0\signal_{i_1}):=w_{i_1}(\signal_0\signal_{i_1}; \shirks_{-i_1})$ for each $\signal_0\signal_{i_1} \in \{AA, AB, BA, BB\}$. 

We show that at the optimum---i.e., at the contract that minimizes the payment for agent $i_1$ given these constraints---the constraints must be binding.
Suppose, for example, that ICA$_{\shirks_{-i_1}}$ is slack, then we have $w _{i_1}(AB)>0$ or $w_{i_1}(BB)>0$. 
If, e.g., $w_{i_1}(AB) > 0$, then the principal can decrease $w_{i_1}(AB)$ by some small $\epsilon>0$, such that ICA$_{\shirks_{-i_1}}$ is still slack, ICB$_{\shirks_{-i_1}}$ is relaxed, and the expected payment for $i_1$ strictly decreases.
 A similar argument works for the case in which ICB$_{\shirks_{-i_1}}$ is slack. 
 
 Given that constraints ICA$_{\shirks_{-i_1}}$ and ICB$_{\shirks_{-i_1}}$ bind at the optimum, the system of equations uniquely solves $w_{i_1}(AA)  - w_{i_1}(AB)= w_{i_1}(BB) - w_{i_1}(BA) = 2ce^{\lambda t_{i_1}}$. 
 The principal should then reduce $w_{i_1}(AB)$ and $w_{i_1}(BA)$ as much as possible, until 
 the limited liability constraint binds, i.e., $w_{i_1}(AB) = w_{i_1}(BA) = 0$.
 We then obtain $w_{i_1}(AA) = w_{i_1}(BB) = 2ce^{\lambda t_{i_1}}$.

 The procedure works for each $\shirks_{-i_1}$.
 Note that although $\shirks_{-i_1}$ denotes the noisemaker's choice, the resulting payment rule, such as $\contract_{i_1}(\signal_0\signal_{i_1}; \shirks_{-i_1})$, pins down the payment to agent $i_1$ when the realized signals for other agents are $\signals_{-i_1} = \shirks_{-i_1} \in \{A, B\}^{n-1}$.
 
 The resulting contract takes the following simple form: Regardless of the realized signals of other agents, agent $i_1$ earns $2ce^{\lambda t_{i_1}}$ if $\signal_0 = \signal_{i_1}$ and $0$ otherwise.
Formally, we obtain 
$w_{i_1}^*(AA;\signals_{-i_1}) = w_{i_1}^*(BB;\signals_{-i_1}) = 2ce^{\lambda t_{i_1}}$ and $w_{i_1}^*(AB;\signals_{-i_1}) = w_{i_1}^*(BA;\signals_{-i_1}) = 0$ for all $\signals_{-i_1} \in \{A,B\}^{n-1}$.
The corresponding lower bound of payment for agent $i_1$ is \[
\hat{K}_{i_1}^\permutation(\tau,w^*) = \sum_{\signals\in \{A,B\}^{n+1}}w_{i_1}^*(\signals)\Pr(\signals|\tau, \actions = \mathbf{1}) = 2ce^{\lambda t_{i_1}}\cdot\Pr(x_0 = x_{i_1}) = c(1+e^{\lambda t_{i_1}}).
\]

{

\medskip \noindent\textbf{Step 1.2: Simplifying contracts.} 
To characterize lower bounds for other agents, 
we first simplify the class of contracts we need to focus on.
Recall that for a given $k \ge 2$, any agent $i_j$ with $j< k$ is risk-free for agent $i_k$ in that agent $i_j$ eliminated shirking in an earlier step of the IESDS than agent $i_k$.
Any agent $i_j$ with $j>k$ is risky in that she has not eliminated shirking in the $k$-th round.
Hereafter, we use ``risky'' and ``risk-free'' from the perspective of agent $i_k$.

Fix agent $i_k$, we partition the set of other agents as follows. Select the \textit{left-risk-free-neighbor} by \[
L := \max(\{i_j\colon j<k, \nexists i_\ell \text{ such that } \ell < k \text{ and } t_{i_j} < t_{i_\ell} < t_{i_k}\} \cup \{0\}),
\]
i.e., the risk-free agent who is (i) located on the left side of agent $i_k$ and (ii) closer to agent $i_k$ than any other risk-free agents located on agent $i_k$'s left side.\footnote{
We use the ``left side of agent $i_k$'' to mean any location $t\le t_{i_k}$.
Similarly, the ``right side of agent $i_k$'' refers to any location $t \ge t_{i_k}$.
} If no agent satisfies the requirements of (i) (ii), select the principal as the left-risk-free-neighbor (i.e., $L = 0$) and set $t_L = 0$.

Symmetrically, select the \textit{right-risk-free-neighbor} as \[
R := \min(\{i_j\colon j<k, \nexists i_\ell \text{ such that } \ell < k \text{ and }t_{i_k} < t_{i_\ell} < t_{i_j}\}\cup \{+\infty\}),
\]
i.e., search for the risk-free agent who is (i) located on the right side of agent $i_k$, (ii) closer to agent $i_k$ than any other risk-free agents located on agent $i_k$'s right side. If no agent satisfies the requirements of (i) (ii), set $R = +\infty$ and $t_{R} = +\infty$, i.e., a hypothetical location whose state is uncorrelated with that of any location $t \leq 1$.

Denote the set of other risk-free agents (or the principal, if applicable) as \[
M := \{i_j\colon j < k\} \cap (\mathcal{N}\cup \{0, +\infty\} \setminus\{L, R\}).
\]
Finally, define two sets of risky agents as \[
N := \{i_j\colon j > k \text{ and } t_L< t_{i_j}<t_R\}, O := \{i_j\colon j > k \text{ and } t_{i_j} \leq t_L \text{ or } t_{i_j}\geq t_R\}, 
\]
i.e., the \textit{inner} risky agents who are located closer to $i_k$ than $L$ or $R$, and the \textit{outer} risky agents who are not inner, respectively. 

From this step onward, we find a lower bound of the expected payment for agent $i_k$ by (i) imposing only the constraints associated with all \textit{outer} risky agents and risk-free agents working, while \textit{inner} risky agents shirking; and (ii) replacing strict constraints into weak ones. We show that upon finding a lower bound for agent $i_k$,
we can focus on contracts that do not use the signals of risk-free non-neighbors in $M$ or \emph{outer} risky agents in $O$.
Recall that $\action_{i_k} = 1$ denotes agent $i_k$'s choice to work.
With a slight abuse of notation, we write $a_{i_k} = A$ or $a_{i_k} =B$ for agent $i_k$'s shirking and the noisemaker's choosing $\shirk_{i_k} = A$ or $\shirk_{i_k} = B$, respectively. Denote $\signals_M$, $\signals_N$, $\signals_O$ as the signal profile for all agents in the set $M$, $N$, $O$, respectively.

Take any contract for agent $i_k$, $w_{i_k}$.
Then, for any signal profile, $(\signal_{i_k}, \signal_{L}, \signal_{R}, \signals_{M}, \signals_{N}, \signals_{O})$, 
define $\hat{\contract}_{i_k}$ as 
\[
\hat{\contract}_{i_k}(\signal_{i_k}, \signal_{L}, \signal_{R}, \signals_{M},  \signals_{O}; \signals_{N}) 
\triangleq 
\mathbb{E}_{\tilde{\signals}_{M}, \tilde{\signals}_{O}}[\contract_{i_k}(\signal_{i_k}, \signal_{L}, \signal_{R}, \tilde{\signals}_{M},  \tilde{\signals}_{O}; \signals_{N}) |\signal_{L}, \signal_{R}].
\] 
That is, we construct $\hat{\contract}_{i_k}$ by taking expectation with respect to $(\signals_{M}, \signals_{O})$ conditional on $(\signal_{L}, \signal_{R})$.
The resulting $\hat{w}_{i_k}$ is independent of the signals $(\signals_{M}, \signals_{O})$ of agent $i_k$'s risk-free non-neighbors and outer risky agents. Recall that we are considering ICs in which (i) agent $i_k$ chooses action $\action_{i_k}\in \{1, A, B\}$; (ii) all \textit{outer} risky agents and risk-free agents  are working; and (iii) all \textit{inner} risky agents are shirking and suppose their realized signals (i.e., the noisemaker's choices) are $\signals_{N}\in \{A,B\}^{|N|}$ with probability $1$.
Then, the expected payments for agent $i_k$ under these two contracts are equal and the ICs satisfying (i)(ii)(iii) are preserved,  because 
\begin{align*}
    \mathbb{E}[w_{i_k}(\signal_{i_k}, \signal_{L}, \signal_{R}, \signals_{M}, \signals_{O}; \signals_{N})|\action_{i_k}] 
    &= \mathbb{E}[\mathbb{E}_{\tilde{\signals}_{M}, \tilde{\signals}_{O}}[w_{i_k}(\signal_{i_k}, \signal_{L}, \signal_{R}, \tilde{\signals}_{M},  \tilde{\signals}_{O}; \signals_{N}) |\action_{i_k}, \signal_{L}, \signal_{R}]|\action_{i_k}]\\
    &= \mathbb{E}[\mathbb{E}_{\tilde{\signals}_{M}, \tilde{\signals}_{O}}[w_{i_k}(\signal_{i_k}, \signal_{L}, \signal_{R}, \tilde{\signals}_{M},  \tilde{\signals}_{O}; \signals_{N})|\signal_{L}, \signal_{R}]|\action_{i_k}]\\
        &=\mathbb{E}[\hat{\contract}_{i_k}(\signal_{i_k}, \signal_{L}, \signal_{R}, \signals_{M},  \signals_{O}; \signals_{N}) |\action_{i_k}],
\end{align*}
for each $ \action_{i_k} \in \{1,A,B\},  \signals_{N}\in \{A,B\}^{|N|}$, and $(\signal_L, \signal_R) \in \{A,B\}^2$; the first equality is the law of iterated expectation and the second equality follows from the Markov property.
Therefore, it is without loss to focus on contracts that do not use risk-free non-neighbors' signals $\signals_{M}$ or outer risky signals $\signals_{O}$. Figure \ref{fig:simple-contract} illustrates how the contract is simplified after Step 1.2.

\begin{figure}[t]
\begin{center}
\begin{tikzpicture}[x=6cm, y=1cm,
    fbmarker/.style={
        regular polygon,
        regular polygon sides=3,
        draw=blue!60!black,
        fill=white,
        line width=0.9pt,
        minimum size=8pt,
        inner sep=0pt
    },
    fbmarkerfade/.style={
        fbmarker,
        draw opacity=0.3,
        fill opacity=0.3
    },
    sbmarker/.style={
        diamond,
        fill=red!80!black,
        draw=red!40!black,
        minimum size=5pt,
        inner sep=0pt
    },
    sbmarkerfade/.style={
        sbmarker,
        draw opacity=0.3,
        fill opacity=0.3
    },
    sbmarkerhollow/.style={
        diamond,
        fill=white,
        draw=red!40!black,
        line width=0.9pt,
        minimum size=5pt,
        inner sep=0pt
    },
    bracket/.style={
        very thick,
        draw=nberblue,
        opacity=0.5
    }]
    \footnotesize

  % Axis line only, with no ticks or numerical labels
  \draw[thick] (0,0) -- (1,0);

  % Bracket anchor coordinates
  \coordinate (L) at (0.375,0);
  \coordinate (R) at (0.574,0);

  % Red diamond at 0, replacing the black square
  \node[sbmarkerfade] at (0,0) {};
  \node at (0,-0.5) {$0$};

  % First-best markers: fade those outside the brackets
  \node[fbmarkerfade] at (0.22222,0) {};
  \node[fbmarker]     at (0.43,0) {};
     \node[fbmarker]       at (0.5466,0) {};
  \node[fbmarkerfade] at (0.66667,0) {};
  \node[fbmarkerfade] at (0.88889,0) {};

  % Second-best markers: fade those outside the brackets
  \node[sbmarkerfade]   at (0.15733,0) {};
  \node[sbmarker]       at (0.31466,0) {};
  \node at (0.31466,-0.5) {$L$};
  \node[sbmarkerhollow] at (0.47200,0) {};
    \node at (0.472,-0.5) {$i_k$};
  \node[sbmarker]       at (0.62933,0) {};
    \node at (0.62933,-0.5) {$R$};
  \node[sbmarkerfade]   at (0.80000,0) {};

  % Left bracket near the diamond around 0.3
  \draw[bracket]
    ([xshift=-5pt,yshift=8pt]L)
      -- ([xshift=-10pt,yshift=8pt]L)
      -- ([xshift=-10pt,yshift=-8pt]L)
      -- ([xshift=-5pt,yshift=-8pt]L);

  % Right bracket near the diamond around 0.6
  \draw[bracket]
    ([xshift=5pt,yshift=8pt]R)
      -- ([xshift=10pt,yshift=8pt]R)
      -- ([xshift=10pt,yshift=-8pt]R)
      -- ([xshift=5pt,yshift=-8pt]R);

\end{tikzpicture}
\end{center}
\caption{The agents whose signals are useful in $i_k$'s contract after Step 1.2}
\bigskip

{\scriptsize Notes: In a generic task allocation, we consider a lower bound of payment to agent $i_k$ (red hollowed diamond). The red filled diamonds represent risk-free agents, while the blue triangles represent risky agents. Step 1.2 shows that it is without loss to search among contracts that only depend on the signals of the agents between $L$ and $R$.
}
\label{fig:simple-contract}
\end{figure}

}

\medskip \noindent\textbf{Step 1.3: Lower bound of payment for agent $i_2$.} 
The relevant constraint for a lower bound for agent $i_2$ is that agent $i_2$ strictly prefers to work (i) regardless of the strategies of the noisemaker and the agents other than $i_1$ and $i_2$ but (ii) provided agent $i_1$ works.
{ As is stated in \textbf{Step 1.2}, we consider weak incentive constraints such that the inner risky agents are shirking, the outer risky agents are working, and the risk-free agent $i_1$ is working.}
This modification relaxes constraints and reduces the lower bound for $i_2$. 
These weaker constraints are written as:
{
\begin{multline*}
 \sum_\signals w_{i_2}(\signals) \Pr(\signals|\action_{i_2} = 1,\action_{i_1} = 1, \actions_{O} = \mathbf{1}, \actions_{N}= \boldsymbol{0},\shirks) -c \\
 \geq
 \sum_\signals w_{i_2}(\signals)\Pr(\signals|\action_{i_2} = 0,\action_{i_1} = 1, \actions_{O} = \mathbf{1}, \actions_{N}= \boldsymbol{0},\shirks)   
\end{multline*}
}
   for every $\shirk_{i_2}  \in \{A, B\}$ and every $\shirks_{-i_2} \in \{A,B\}^{n-1}$, { where $O$ and $N$ are the sets of outer risky agents and inner risky agents, respectively, defined by \textbf{Step 1.2}.}

   The same argument as that in {\bf Step 1.1} ensures that all of these constraints bind at the payment-minimizing contract.
   Let $ICA_{\shirks_{-i_2}}$ denote this equality constraint with $\shirk_{i_2} = A$ and $\shirks_{-i_2}$.
   Denote $ICB_{\shirks_{-i_2}}$ analogously.
Depending on the task allocation, agent $i_2$ has one or two risk-free neighbors. 
Correspondingly, we consider two cases: (a) $0\leq t_{i_1} < t_{i_2}$ (one risk-free neighbor, ``chain'') and (b) $0 \leq t_{i_2} \le  t_{i_1}$ (two risk-free neighbors, ``sandwich''). 

The chain case (a) reduces to {\bf Step 1.1}, the derivation of $i_1$'s lower bound.
Indeed, the principal is a risk-free non-neighbor who is (weakly) further away from agent $i_2$ relative to agent $i_1$, {and any risky agent on the left of $i_1$ belongs to the set of \textit{outer} risky agents}.
By {\bf Step 1.2}, it is without loss to use contracts whose payment for agent $i_2$ does not depend on the principal's signal $\signal_{0}$, {or any outer agent's signal $\signals_O$.} 
The problem is then {similar} %identical
with {\bf Step 1.1}, where agent $i_1$'s signal in the current problem plays the role of the principal's signal in {\bf Step 1.1}.
As a result, the contract that attains the lower bound pays agent $i_2$ if and only if her signal $\signal_{i_2}=\signal_{i_1}$.
The lower bound for agent $i_2$ is $\hat{K}_{i_2}(\tau, w^0) = c(1+e^{\lambda (t_{i_2} - t_{i_1})})$. {\autoref{apx:omitted_details} presents detailed calculations.}

We now consider (b) sandwich case ($0 \leq t_{i_2} \leq t_{i_1}$).
Agent ${i_2}$ has risk-free neighbors on both sides: The principal on the left and agent $i_1$ on the right.
\hyperref[apx:omitted_details]{Appendix \ref{apx:omitted_details}} shows that, 
for any fixed $\shirks_{-i_2} \in \{A, B\}^{n-1}$, the two binding constraints, ICA$_{\shirks_{-i_2}}$ and ICB$_{\shirks_{-i_2}}$, can be written as
\begin{align}
    &\frac{1}{8}P_L^-P_R^-[w_{i_2}(ABA) - w_{i_2}(AAA)] + \frac{1}{8}P_L^-P_R^+[w_{i_2}(ABB) - w_{i_2}(AAB)]\quad \nonumber
    \\
   &\qquad \qquad  + \frac{1}{8}P_L^+P_R^-[w_{i_2}(BBA) - w_{i_2}(BAA)] + \frac{1}{8} P_L^+P_R^+[w_{i_2}(BBB) - w_{i_2}(BAB)] = c, \tag{ICA$_{\shirks_{-i_2}}$} \label{ICA1.3}\\
       &\frac{1}{8}P_L^+P_R^+[w_{i_2}(AAA) - w_{i_2}(ABA)] + \frac{1}{8}P_L^+P_R^-[w_{i_2}(AAB) - w_{i_2}(ABB)] \nonumber\\
    & \qquad \qquad + \frac{1}{8}P_L^-P_R^+[w_{i_2}(BAA) - w_{i_2}(BBA)] + \frac{1}{8} P_L^-P_R^-[w_{i_2}(BAB) - w_{i_2}(BBB)] = c, \tag{ICB$_{\shirks_{-i_2}}$} \label{ICB1.3}
\end{align}
where we use the shorthand notation, $w_{i_2}(\signal_0\signal_{i_2}\signal_{i_1}):= w_{i_2}(\signal_0\signal_{i_2}\signal_{i_1};\shirks_{-i_2} )$ as well as $ P_L^+ = 1 + e^{-\lambda t_{i_2}},  P_L^- = 1 - e^{-\lambda t_{i_2}},  P_R^+ = 1 + e^{-\lambda (t_{i_1}-t_{i_2})}$, and $P_R^- = 1 - e^{-\lambda (t_{i_1}-t_{i_2})}$. The above system of equations can be reduced to one equation by substituting $w_{i_2}(BAB) - w_{i_2}(BBB)$ in ICA$_{\shirks_{-i_2}}$ using ICB$_{\shirks_{-i_2}}$, which yields  \begin{multline*}
    \frac{1}{8}[(P_L^+P_R^+)^2 - (P_L^-P_R^-)^2][w_{i_2}(AAA) - w_{i_2}(ABA)] 
    + \frac{1}{8}[P_R^+P_R^- ((P_L^+)^2 - (P_L^-)^2)][w_{i_2}(AAB) - w_{i_2}(ABB)] \\
    + \frac{1}{8}[P_L^+P_L^- ((P_R^+)^2 - (P_R^-)^2)][w_{i_2}(BAA) - w_{i_2}(BBA)] = c[P_L^-P_R^- + P_L^+P_R^+]
\end{multline*}
for any fixed $\shirks_{-i_2}$. Recall that we are minimizing $\hat{K}_{i_2}(\tau, w) = \sum_{\signals\in \{A,B\}^{n+1}}w_{i_2}(\signals)\Pr(\signals|\actions = \mathbf{1})$ subject to the above constraint. 
This is a standard linear programming, whose optimal solution is pinned down by comparing the ratios between the coefficients of the variables in the objective (i.e., $\{\Pr(\signals|\mathbf{a} = \mathbf{1})\}_{ \signals\in \{A,B\}^{n+1} }$) and the coefficients in the constraint. 

First, it is optimal to set $w_{i_2}^*(ABA) = w_{i_2}^*(ABB) = w_{i_2}^*(BBA) = 0$ because reducing these wages allows us to reduce wages for other signal profiles.  

Second, among the remaining three (i.e., $w_{i_2}(AAA)$, $w_{i_2}(AAB)$, and $w_{i_2}(BAA)$), 
$w_{i_2}(AAA)$ is the only signal profile that takes a positive value, because the relevant ratio (i.e., its coefficient in the objective to be minimized, divided by its coefficient in the constraint) is minimized:
\[
\frac{\Pr(AAA|\mathbf{a} = \mathbf{1})}{\frac{1}{8}[(P_L^+P_R^+)^2 - (P_L^-P_R^-)^2]} \leq \min\left\{\frac{\Pr(AAB|\mathbf{a} = \mathbf{1})}{\frac{1}{8}[P_R^+P_R^- ((P_L^+)^2 - (P_L^-)^2)]}, \frac{\Pr(BAA|\mathbf{a} = \mathbf{1})}{\frac{1}{8}[P_L^+P_L^- ((P_R^+)^2 - (P_R^-)^2)]}\right\}
\]
\hyperref[apx:omitted_details]{Appendix \ref{apx:omitted_details}} proves this inequality based on direct calculation.

Hence, we conclude that $w_{i_2}^*(AAB) =  w_{i_2}^*(BAA) = 0$, and $w_{i_2}^*(AAA) = \frac{4c}{e^{-\lambda t_{i_2}} + e^{-\lambda (t_{i_1}  - t_{i_2}) }}$. 
Combining the above solution with ICB$_{\shirks_{-i_2}}$, we get $w_{i_2}^*(BBB) = w_{i_2}^*(AAA)$, $w_{i_2}^*(BAB) = 0$. 
Also, the same argument applies to any $\shirks_{-i_2}$.

\medskip \noindent\textbf{Step 1.4: Lower bound of payment for agent $i_k$ with $k \ge 3$.} 
Consider any agent $i_k$ with $k  \ge 3$.
By {\bf Step 1.2}, we can focus on contracts such that the payment to agent $i_k$ does not depend on signals of her risk-free non-neighbors {or outer risky agents}.
Suppose that agent $i_k$ has only one risk-free neighbor, agent $L$, on her left side.
In this case, the problem of finding a lower bound of payment for agent $i_k$ is exactly the same as the problem for agent $i_2$ with $0 \le t_{i_1} < t_{i_2}$, where agent $t_{i_1}$ plays the role of agent $L$.
Thus,  the lower bound for agent $i_k$ is
\begin{equation}\label{equationChain}
\hat{K}_{i_k}^{\permutation,c}(\tau, w^*) = c(1+e^{\lambda (t_{i_k} - t_{L})}).
\end{equation}

If $i_k$ has two risk-free neighbors, agent $L$ on her left and agent $R$ on her right, then the problem of finding a lower bound for agent $i_k$ is the same as the problem for agent $i_2$ with $0 \le t_{i_2} \le t_{i_1}$, where the principal plays a role of agent $L$ and agent $i_1$ plays a role of agent $R$.
Thus, agent $i_k$'s lower bound is
\begin{align*}
    \hat{K}_{i_k}^{\permutation,s}(\tau, w^*) 
    = \frac{4c}{e^{-\lambda (t_{i_k} - t_{L})} + e^{-\lambda (t_{R}  - t_{i_k})}}\cdot \Pr(x_{L} = x_{i_k} = x_{R})=  c\left(1+ \frac{1+e^{-\lambda(t_{R} - t_{L})}}{e^{-\lambda (t_{i_k} - t_{L})} + e^{-\lambda (t_{R}  - t_{i_k})}}\right).
\end{align*}
Therefore, for any given elimination order, the lower bound of the total expected payment is $\hat{K}^\permutation(\tau, w^*) = \sum_{k=1}^n \hat{K}_{i_k}^\permutation(\tau, w^*)$. Denote the corresponding contract for agent $i_k$ as $w_{i_k}^*$.

\medskip \noindent\textbf{Step 2: A family of contracts $\{w^{*\epsilon}\}$ attains the lower bound as $\epsilon \rightarrow 0_+$.} Define a family of contracts $\{w^{*\epsilon}\}$ indexed by $\epsilon > 0$ as follows. For each agent $i_k$ with a single risk-free neighbor $L$ on her left side, where $L=0$ for agent $i_1$, let
\[
E_{i_k}\triangleq \{\signal_{i_k}=\signal_{L}\}.
\]
For each agent $i_k$ with two risk-free neighbors $L$ and $R$, let
\[
E_{i_k}\triangleq \{\signal_{L}=\signal_{i_k}=\signal_{R}\}.
\]
For every agent $i_k$, define
\[
w^{*\epsilon}_{i_k}(\signals)
\triangleq
w^{*}_{i_k}(\signals)+\epsilon\mathbbm{1}_{E_{i_k}}(\signals).
\]
Thus, we have $\lim_{\epsilon\rightarrow 0_+}\hat{K}^\permutation(\tau, w^{*\epsilon}) = \sum_{k=1}^n \lim_{\epsilon\rightarrow 0_+}\hat{K}_{i_k}^\permutation(\tau, w^{*\epsilon}) = \hat{K}^\permutation(\tau, w^*)$, i.e., the lower bound in \textbf{Step 1}.

\medskip \noindent\textbf{Step 3: For any $\epsilon > 0$, $w^{*\epsilon}$ RIE and induces a monitoring hierarchy.}
To show $w^{*\epsilon}$ RIE, we show that with $\epsilon > 0$, agent $i_k$ is the $k$-th player to eliminate shirking in the IESDS. First, agent $i_1$ is the first player to eliminate shirking. Indeed, facing contract $w^*$, agent $i_1$ weakly prefers to work regardless of the strategies of other agents and the noisemaker as shown in {\bf Step 1.1}. In particular, even if any of other agents are \textit{working}, their signals do not impact agent $i_1$'s incentive to work. 
Now, facing $w^{*\epsilon}$ with $\epsilon>0$, the increment of the expected wage due to working is even higher, because the probability of the event $E_{i_1}=\{\signal_0=\signal_{i_1}\}$ is strictly higher when agent $i_1$ works (note that regardless of the noisemaker's choice, the probability of $\signal_0 = \signal_{i_1}$ is $0.5$ once agent $i_1$ shirks), and the extra bonus $\epsilon$ is paid only on this event. 
Thus, agent $i_1$ strictly prefers to work regardless of the strategies of other agents and the noisemaker.

We can apply a similar argument to show that agent $i_k$ is the $k$-th player to eliminate shirking.
 For each $k=2,..., n$, we replace the principal and agent $i_k$ in the previous paragraph with the relevant risk-free neighbor(s) and agent $i_k$. If agent $i_k$ has risk-free neighbors on the left side only, the event $E_{i_k}$ is the match between agent $i_k$ and her left risk-free neighbor. Otherwise, when agent $i_k$ has risk-free neighbors on both left and right sides, $E_{i_k}$ is the triple-match event used in \textbf{Step 1.3}. In either case, \textbf{Step 1} shows that agent $i_k$ weakly prefers to work regardless of the strategies of risky agents and the noisemaker. Since $E_{i_k}$ is strictly more likely when agent $i_k$ works than when she shirks, the additional event-contingent bonus $\epsilon$ makes the preference strict. Thus, with $\epsilon > 0$, agent $i_k$ strictly prefers to work regardless of strategies of other agents and the noisemaker, given that agent $i_1, ..., i_{k-1}$'s shirking is eliminated.

Finally, we show that $w^{*\epsilon}$ induces a monitoring hierarchy consistent with \permutation. Define an ordered partition of $\mathcal{N}$ as $H^\permutation \triangleq \{ \{i_1\},  \{i_2\}, ...,  \{i_n\}\}$. For each agent $i_k$, let $M_{i_k}$ denote the set of risk-free neighbors selected in \textbf{Step 1.2}: $M_{i_k}=\{L\}$ in the one-neighbor case and $M_{i_k}=\{L,R\}$ in the two-neighbor case, where $0$ denotes the principal. By construction, $M_{i_k}\subseteq \{0,i_1,\ldots,i_{k-1}\}$, and $w^{*\epsilon}_{i_k}$ depends on other agents' signals only through the signals in $M_{i_k}$. Thus, every monitoring link into agent $i_k$ comes either from the principal or from an agent whose shirking strategy is eliminated earlier than $i_k$'s, so $w^{*\epsilon}$ induces a monitoring hierarchy $H^\permutation$.

{ It remains to connect the approximation argument to the original RIE contract.
Let $\bar w\in \contracts(\tau,\permutation)$ be the RIE contract that admits the elimination order $\permutation$, and let $\bar K=\hat K(\bar w,\tau)$.
For each strict dominance inequality used along this order, let $\Delta(\bar w)$ denote the gross expected-payment gain from working in that inequality.
Then $\Delta(\bar w)>c$ for each such inequality.
Because there are finitely many such inequalities, we can choose $\alpha\in(0,1)$ close enough to one so that}
\[
\alpha\Delta(\bar w)>c
\]
{for every inequality in the collection.
Thus $\alpha\bar w$ remains an RIE contract admitting the same order, and}
\[
\hat K(\alpha\bar w,\tau)=\alpha\hat K(\bar w,\tau)<\bar K.
\]
{The lower-bound problem in \textbf{Step 1} relaxes the constraints satisfied by every contract in $\contracts(\tau,\permutation)$, so}
\[
\hat K^\permutation(\tau,w^*)\le \hat K(\alpha\bar w,\tau)<\bar K.
\]
{ Because $\hat K(w^{*\epsilon},\tau)\to \hat K^\permutation(\tau,w^*)$ as $\epsilon\to0_+$, there exists $\epsilon>0$ such that}
\[
\hat K(w^{*\epsilon},\tau)\le \bar K.
\]
{ By \textbf{Step 3}, $w^{*\epsilon}$ is RIE and induces a monitoring hierarchy.
Therefore the original RIE contract can be replaced by a hierarchical RIE contract with weakly lower expected payment.}

\medskip \noindent\textbf{Summary.} We have shown that for any fixed task allocation $\tau$ and elimination order $\permutation$, there is a sequence of RIE contracts $\{w^{*\epsilon}\}$ that approximates the lower bound of total payments across all RIE contracts in $\mathcal{W}(\tau, \permutation)$. Moreover, each $w^{*\epsilon}$ corresponds to a monitoring hierarchy $H^\permutation$.  \hfill \qedsymbol

\subsection{Proof of \autoref{prop_inv} and \autoref{lem_chain_contract}}\label{apx:proof_inv} 

In \textbf{Step 1} of \hyperref[apx:proof-RIE]{Appendix \ref{apx:proof-RIE}}, we have derived a lower bound of the total expected payment for any fixed elimination order $\permutation=(i_1, .., i_n)$. 
In this proof, we show that for any task allocation $\tau$, the identity permutation of the elimination order, i.e., $\permutation^c = (1, 2, ..., n)$, minimizes the lower bound. Note that $\permutation^c$ corresponds to the RIE contract $w^\epsilon$ defined in \eqref{wage}.

First, suppose there are two agents, $n=2$.
 There are only two permutations: 
 Agent $1$ is $1^{st}$-ranked (elimination order $(1,2)$, ``chain'') or $2^{nd}$-ranked (elimination order $(2,1)$, ``sandwich''). The lower bound of expected payment under the two permutations are respectively: \[
    \hat{K}^{c}(t_1, t_2, w^*) = c[1+e^{\lambda t_1}] + c[1+e^{\lambda (t_2-t_1)}], \; \hat{K}^{s}(t_1, t_2, w^*)=c[1+e^{\lambda t_2}]+c\Big[1+\frac{1+e^{-\lambda t_2}}{e^{-\lambda t_1}+e^{-\lambda (t_2-t_1)}}\Big].
    \]
    For a fixed $t_2$, the two expected payments are equal when $t_1 = 0$ and $t_1 = t_2$.
     Also, $\hat{K}^s$ increases as $t_1$ approaches $t_2/2$ from left or right, while the same change decreases $\hat{K}^c$.
     Thus, for any $t_1, t_2 \in [0,1]$ such that $t_1\le t_2$, we have $\hat{K}^s(t_1, t_2) \geq \hat{K}^s(0, t_2) = \hat{K}^c(0, t_2) \geq \hat{K}^c(t_1, t_2)$.
    Therefore, the chain elimination order attains a lower value of the lower bound than the sandwich elimination order, or equivalently, the identity permutation order minimizes the lower bound for $n=2$.

    Suppose now that there are $n\ge 3$ agents.
    Let $\contract^\permutation$ denote the contract that attains the lower bound of payment under elimination order $\permutation$.
    Suppose that elimination order $\permutation$ is different from the identity permutation order $\permutation^c$.
    Let $k^*$ be the smallest step at which $\permutation$ deviates from $\permutation^c$, i.e., for any $j \le k^*-1$, $i_j = j$, but $i_{k^*} >k^*$.
    Then, we divide agents into three sets: $A = \{1,..., k^*-1\}$; $B = \{k^*,..., i_{k^*}\}$; and $C = \{i_{k^*}+1,..., n\}$.
    First, we modify the elimination order $\permutation$ so that agents in $A$ are the first $k^*-1$ agents to eliminate shirking; agents in $C$ are the last $n-i_{k^*}$ agents to eliminate shirking; and agents in $B = \{k^*,..., i_{k^*}\}$ eliminate shirking after $A$ but before $C$; also, within $B$, the elimination order is consistent with $\permutation$.
    This modification does not change the lower bound of payment, because the set of risk-free neighbors for each agent does not change at each step of elimination.

    We show that the lower bound decreases if we modify payments and elimination order for some of the agents in set $B$.
    The modification will not affect agents in $A$ or $C$.
   Thus,  to simplify notation, we assume that sets $A$ and $C$ are empty, i.e., $k^* = 1$ and $i_{k^*}=n$.
Define $k = i_2< n$.
    Then, the principal ($t=0$), agent $k$, and agent $n$ form the sandwich case of $n=2$.
    The principal can then reduce the lower bound of total payment by changing the order of elimination so that  $i_1 = k$
    and $i_2 = n$, with the rest of the elimination order following the original order.
    Thus, if a contract attains a lower bound for an elimination order that is different from the identify permutation, then we can modify the contract to strictly decrease the total payment.
    Therefore, the lower bound of the total payment is minimized by the elimination order $\permutation^c = (1, 2, ..., n)$.  
    The corresponding payment for each agent is given by \eqref{equationChain}, so the lower bound of the total expected payment is $\hat{K}^{\permutation^c}(\tau, w^*) = c\sum_{i=1}^n(1+e^{\lambda(t_i - t_{i-1})})$.

Now we show that  $w^{\epsilon_m}$ in \eqref{wage} attains the lower bound for any sequence $\{\epsilon_m\}$ such that $\epsilon_m>0$ for every $m$ and $\epsilon_m \rightarrow 0$. 
By the definition of $w^\epsilon$, we have
\[
\hat{K}(w^{\epsilon_m}, \tau) = \sum_{i=1}^n \sum_{\signals \in \{A,B\}^{n+1}} w_i^{\epsilon_m}(\signals) \Pr(\signals|\tau, \actions = \mathbf{1}) =  \sum_{i=1}^n c(1+e^{\lambda(t_i - t_{i-1})}) + \epsilon_m\sum_{i=1}^n\Pr(x_{i} = x_{i-1})
\]
Thus, we have $\lim_{\epsilon_m\rightarrow 0}\hat{K}(w^{\epsilon_m}, \tau) = \sum_{i=1}^n c(1+e^{\lambda(t_i - t_{i-1})})$, i.e., the lower bound. This proves \autoref{prop_inv}. Finally, $w^\epsilon$ is an RIE contract for any $\epsilon>0$, which follows exactly the same argument as \textbf{Step 3} of \hyperref[apx:proof-RIE]{Appendix \ref{apx:proof-RIE}}. This proves \autoref{lem_chain_contract}. 
   \hfill \qedsymbol

\subsection[Proof of Claim 1 and worst-case uniform noise]{Proof of \hyperref[claim_bni]{Claim \ref{claim_bni}} and worst-case uniform noise}
\label{appendix:bni-worst-noise}
\label{appendixClaim1}

Fix a task profile $\tau$ with $0=t_0\le t_1\le\cdots\le t_n$ and maintain the notation of Section~\ref{sec:BNI}. Because the state process and the signal space are invariant to exchanging the labels $A$ and $B$, we may assume without loss that the noisemaker assigns $A$ to a shirking agent $i$ with probability $q_i\in[0,1/2]$. It is convenient to set $t_{n+1}:=\infty$, so $r_n^+:=e^{-\lambda(t_{n+1}-t_n)}=0$.

\begin{claim}[Optimal BNI contracts and worst-case noisemaker]\label{claim:bni-worst-noise}
For each $i\le n$, define
\[
r_i^-:=e^{-\lambda(t_i-t_{i-1})},
\qquad
r_i^+:=e^{-\lambda(t_{i+1}-t_i)}.
\]
Let
\[
f_i^*=
\frac{(1+r_i^-)(1+r_i^+)}{2(1+r_i^-r_i^+)},
\qquad
\chi_i:=
\begin{cases}
1, & i<n,\\
2, & i=n.
\end{cases}
\]
The minimized one-agent BNI expected payment is
\[
W_i(q_i)
=
\frac{c f_i^*}{f_i^*-q_i}.
\]
Define
\[
\mathcal A_i:=
\begin{cases}
\{\signal_{i-1}=\signal_i=\signal_{i+1}=A\}, & i<n,\\
\{\signal_{n-1}=\signal_n=A\}, & i=n,
\end{cases}
\]
and
\[
\mathcal M_i:=
\begin{cases}
\{\signal_{i-1}=\signal_i=\signal_{i+1}\}, & i<n,\\
\{\signal_{n-1}=\signal_n\}, & i=n.
\end{cases}
\]
If $q_i<1/2$, one optimal contract pays only on $\mathcal A_i$ with bonus
\[
\frac{4c}{\chi_i(1+r_i^-r_i^+)(f_i^*-q_i)}.
\]
If $q_i=1/2$, the symmetric match contract is optimal: pay agent $i$ if and only if $\mathcal M_i$ occurs, with bonus
\[
b_i=\frac{4c}{\chi_i(r_i^-+r_i^+)}
=
\begin{cases}
\dfrac{4c}{e^{-\lambda(t_i-t_{i-1})}+e^{-\lambda(t_{i+1}-t_i)}}, & i<n,\\[8pt]
2c\,e^{\lambda(t_n-t_{n-1})}, & i=n.
\end{cases}
\]
At $q_i=1/2$, the minimized expected payment is
\[
W_i^*
=
\frac{c(1+r_i^-)(1+r_i^+)}{r_i^-+r_i^+}.
\]

For every agent, $W_i(q_i)$ is strictly increasing on $[0,1/2]$ and hence is maximized at $q_i=1/2$. Thus, uniform draw from $\{A, B\}$ is worst for the principal under BNI.
\end{claim}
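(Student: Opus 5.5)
The approach is to treat the BNI problem one agent at a time. Under BNI we only need all-work to be \emph{some} equilibrium, so agent $i$'s incentive constraint is evaluated with every other agent working. Agent $i$'s payment $w_i$ enters only her own constraint and the objective, so the principal's problem separates into $n$ independent one-agent linear programs:
\[
\min_{w_i\ge 0}\ \mathbb{E}[w_i\mid a_i=1]\quad\text{s.t.}\quad \mathbb{E}[w_i\mid a_i=1]-\mathbb{E}[w_i\mid a_i=0]\ge c,
\]
where, under $a_i=0$, $s_i=A$ with probability $q_i$ independently of the state, and all $s_j$ with $j\neq i$ are truthful. Summing the one-agent solutions then gives both Claim~\ref{claim_bni} (at $q_i=1/2$) and the worst-case statement.

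First, I reduce the contract to depend only on $(s_{i-1},s_i,s_{i+1})$. For agent $n$ this means $(s_{n-1},s_n)$, which matches the convention $t_{n+1}=\infty$ and $r_n^+=0$. As in Step~1.2 of the proof of Lemma~\ref{prop-RIE}, replace $w_i$ by its conditional expectation over the other signals given $(s_{i-1},s_i,s_{i+1})$. By the Markov property, the remaining truthful signals are conditionally independent of $x(t_i)$, and of the independent noise, given the two neighbors. Hence both sides of the constraint and the objective are unchanged.

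The reduced problem is a linear program with a single constraint. Write $P_1(e)$ and $P_0(e)$ for the probability of cell $e$ of $(s_{i-1},s_i,s_{i+1})$ under work and under shirking. Paying $b$ on a set $E$ of cells costs $bP_1(E)$ and requires $b\,[P_1(E)-P_0(E)]\ge c$. The optimum therefore pays only on the cells minimizing the likelihood ratio $\rho(e)=P_0(e)/P_1(e)$, and the minimized cost is $c/(1-\rho_{\min})$.

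Next I compute the ratios using \eqref{cor} and the Markov property. For the cell $(A,A,A)$, $P_1=\tfrac18(1+r_i^-)(1+r_i^+)$ and $P_0=q_i\cdot\tfrac14(1+r_i^-r_i^+)$, since the neighbors agree with probability $\tfrac12(1+r_i^-r_i^+)$. This gives $\rho=q_i/f_i^*$ and hence $W_i(q_i)=cf_i^*/(f_i^*-q_i)$. The cell $(B,B,B)$ has ratio $(1-q_i)/f_i^*\ge q_i/f_i^*$, with equality exactly at $q_i=1/2$. The stated bonuses then follow by dividing the cost by the relevant probability:
\begin{itemize}
\item For $q_i<1/2$, the bonus is $W_i/P_1(\mathcal A_i)$, which equals $4c/[(1+r_i^-r_i^+)(f_i^*-q_i)]$. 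For agent $n$, $P_1(\mathcal A_n)=(1+r_n^-)/4$, which supplies the factor $\chi_n=2$.
\item For $q_i=1/2$, both matching cells are optimal, so pay on $\mathcal M_i$. Since $f_i^*-\tfrac12=(r_i^-+r_i^+)/(2(1+r_i^-r_i^+))$, the bonus simplifies to $4c/(\chi_i(r_i^-+r_i^+))$ and the cost to $W_i^*=c(1+r_i^-)(1+r_i^+)/(r_i^-+r_i^+)$.
\end{itemize}

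The main obstacle is verifying that the six remaining cells never beat $(A,A,A)$. These are the cells where $s_i$ disagrees with at least one neighbor. For example, for $(A,A,B)$ one gets $\rho=2q_i(1-r_i^-r_i^+)/((1+r_i^-)(1-r_i^+))$, and one must show this is at least $q_i/f_i^*$. That inequality amounts to $(1-r^-r^+)(1+r^-r^+)\ge\tfrac14(1-r^+)(1+r^+)(1+r^-)^2$, which I would check directly using $r^\pm\in(0,1]$. The other cells are handled the same way; cells with $s_i$ opposite to both neighbors have $\rho\ge 1$ and are trivially dominated. Finally, monotonicity is immediate: $f_i^*>1/2\ge q_i$, so $W_i(q_i)=cf_i^*/(f_i^*-q_i)$ is strictly increasing on $[0,1/2]$. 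This shows that uniform noise is worst for the principal.
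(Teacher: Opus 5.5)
Your architecture matches the paper's: you separate the BNI problem across agents, reduce $w_i$ to the neighbor statistic via the Markov property, and solve the one-constraint LP by paying only on likelihood-ratio-minimizing cells. You then read off $W_i$, the bonuses and monotonicity from $\rho(A,A,A)=q_i/f_i^*$, and all of that checks out.

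The step you flag as the main obstacle, however, is not correctly carried out. First, your displayed inequality for $(A,A,B)$ is equivalent to $\rho(A,A,B)\ge q_i f_i^*$, not to $\rho(A,A,B)\ge q_i/f_i^*$. Since $f_i^*\le 1$ (equivalently $(1-r^-)(1-r^+)\ge 0$), it is weaker than what you need, so verifying it does not establish the comparison. The correct reduction, after cancelling $2q_i/(1+r^-)$, is $(1-r^-r^+)(1+r^+)\ge(1+r^-r^+)(1-r^+)$, i.e.\ $2r^+(1-r^-)\ge 0$.

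Second, your claim that cells with $s_i$ opposite to both neighbors have $\rho\ge 1$ is false for $(B,A,B)$. There $P_1=\tfrac18(1-r^-)(1-r^+)$ and $P_0=\tfrac{q_i}{4}(1+r^-r^+)$, so $\rho=2q_i(1+r^-r^+)/[(1-r^-)(1-r^+)]$, which tends to $0$ as $q_i\to 0$. The claim does hold for $(A,B,A)$, whose ratio carries the factor $1-q_i\ge\tfrac12$. The cell $(B,A,B)$ is still dominated, but for a different reason: $\rho(B,A,B)=q_i/(1-f_i^*)$ and $1-f_i^*\le f_i^*$.

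The paper avoids these cell-by-cell pitfalls with one representation. Let $f_i(a,b)=\Pr(s_i=A\mid s_{i-1}=a,s_{i+1}=b)$ under all-work. Then every cell with $s_i=A$ has ratio $q_i/f_i(a,b)$, and every cell with $s_i=B$ has ratio $(1-q_i)/(1-f_i(a,b))$. The paper shows $f_i(A,B),f_i(B,A)\le f_i(A,A)=f_i^*$ by computing the two differences explicitly. It obtains $\max_{a,b}(1-f_i(a,b))=f_i^*$ from $A\leftrightarrow B$ symmetry. Then $q_i\le\tfrac12\le 1-q_i$ gives $\rho^*=q_i/f_i^*$ for all cells at once.

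Adopting that argument closes the gap. You should also drop cells with $P_1=0$, as the paper does; these occur under colocation ($r_i^\pm=1$), where your $\rho$ is undefined. With those changes, the rest of your proposal coincides with the paper's proof.
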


\begin{proof}
All agents working is a Bayesian Nash equilibrium if and only if each one-deviation incentive constraint holds:
\begin{equation}\label{eq:bni-one-dev}
\E[\contract_i(\signals)\mid\actions=\mathbf{1}]
-c
\ge
\E[\contract_i(\signals)\mid\action_i=0,\,\actions_{-i}=\mathbf{1}].
\end{equation}
Because $\contract_i$ enters only agent $i$'s constraint and agent $i$'s expected payment, minimizing the all-work expected payment decomposes across agents. Define agent $i$'s local statistic by
\[
Z_i:=
\begin{cases}
(\signal_{i-1},\signal_i,\signal_{i+1})\in\{A,B\}^3, & i<n,\\
(\signal_{n-1},\signal_n)\in\{A,B\}^2, & i=n.
\end{cases}
\]
Conditional on the truthful neighboring signals included in $Z_i$, farther signals are independent of agent $i$'s action. Averaging any contract over farther signals therefore preserves the incentive constraint and weakly lowers the expected payment, so no information outside $Z_i$ is needed.

\medskip
\noindent\textbf{The one-agent linear program.}
Let $z$ denote a realization of $Z_i$, and let all sums over $z$ range over $\{A,B\}^3$ if $i<n$ and over $\{A,B\}^2$ if $i=n$. Write $P_i^1(z)$ for the probability of $Z_i=z$ when all agents work, and $P_i^0(z;q_i)$ for the probability of $Z_i=z$ when agent $i$ shirks, all other agents work, and $\Pr(\shirk_i=A)=q_i\le 1/2$. Agent $i$'s problem is
\[
\min_{\contract_i\ge 0}
\sum_z \contract_i(z)P_i^1(z)
\quad\text{s.t.}\quad
\sum_z \contract_i(z)\bigl(P_i^1(z)-P_i^0(z;q_i)\bigr)\ge c.
\]
Cells with $P_i^1(z)=0$ receive no positive payment in an optimum. For cells with $P_i^1(z)>0$, define
\[
\rho_i(z;q_i):=\frac{P_i^0(z;q_i)}{P_i^1(z)}.
\]
The problem has a single incentive constraint and limited liability. Thus, for any feasible contract,
\[
\sum_z \contract_i(z)P_i^1(z)
\ge
\frac{c}{1-\rho_i^*(q_i)},
\qquad
\rho_i^*(q_i):=\min_{z:P_i^1(z)>0}\rho_i(z;q_i),
\]
and equality is attained by concentrating all payment on cells that minimize $\rho_i(z;q_i)$. Therefore the value is
\[
W_i(q_i)
=
\frac{c}{1-\rho_i^*(q_i)}.
\]

\medskip
\noindent\textbf{Likelihood ratios.}
First take $i<n$. For adjacent truthful signals $a,b\in\{A,B\}$ with positive probability, define
\[
f_i(a,b)
:=
\Pr(\signal_i=A\mid \signal_{i-1}=a,\,\signal_{i+1}=b,\,\actions=\mathbf{1}).
\]
The transition correlations $r_i^-,r_i^+$ lie in $(0,1]$ for $i<n$, so the formulas below allow collocated task locations. By Bayes' rule and the Markov property,
\[
f_i(A,A)
=
\frac{(1+r_i^-)(1+r_i^+)}{2(1+r_i^-r_i^+)}
=f_i^*.
\]
The same calculation gives $1-f_i(B,B)=f_i^*$. For $(\signal_{i-1},\signal_{i+1})=(A,B)$ or $(B,A)$, whenever the conditioning event has positive probability,
\[
f_i(A,B)=\frac{(1+r_i^-)(1-r_i^+)}{2(1-r_i^-r_i^+)},
\qquad
f_i(B,A)=\frac{(1-r_i^-)(1+r_i^+)}{2(1-r_i^-r_i^+)}.
\]
If either of these conditioning events has zero probability, it is omitted from the likelihood-ratio minimization. Otherwise,
\[
f_i^*-f_i(A,B)
=
\frac{r_i^+(1-(r_i^-)^2)}
{(1+r_i^-r_i^+)(1-r_i^-r_i^+)}
\ge 0,
\]
and
\[
f_i^*-f_i(B,A)
=
\frac{r_i^-(1-(r_i^+)^2)}
{(1+r_i^-r_i^+)(1-r_i^-r_i^+)}
\ge 0.
\]
Thus the maximum of $f_i(a,b)$ is attained at $a=b=A$, and by symmetry the maximum of $1-f_i(a,b)$ is attained at $a=b=B$. The common maximum is $f_i^*$. For agent $n$, the same expression with $r_n^+=0$ gives
\[
f_n^*
=
\Pr(\signal_n=A\mid\signal_{n-1}=A,\,\actions=\mathbf{1})
=
\frac{1+r_n^-}{2}.
\]

For a statistic $z=(a,s,b)$ with positive truthful probability,
\[
\rho_i(a,s,b;q_i)
=
\begin{cases}
q_i/f_i(a,b), & s=A,\\[4pt]
(1-q_i)/(1-f_i(a,b)), & s=B.
\end{cases}
\]
Because $q_i\le 1/2$, $f_i(a,b)\le f_i^*$, and $1-f_i(a,b)\le f_i^*$, the smallest likelihood ratio is attained on an $A$-cell with the largest posterior probability of $\signal_i=A$.
The same conclusion holds in the two-cell case for agent $n$. Hence, for every $i\le n$,
\[
\rho_i^*(q_i)
=
\frac{q_i}{f_i^*},
\]
and therefore
\[
W_i(q_i)
=
\frac{c f_i^*}{f_i^*-q_i}.
\]
$W_i(q_i)$ is strictly increasing on $[0,1/2]$ and is maximized at $q_i=1/2$.

For $q_i<1/2$, the contract that pays agent $i$ only on $\mathcal A_i$ is optimal because $\mathcal A_i$ minimizes $\rho_i(\cdot;q_i)$; under collocation, other cells may also minimize the same likelihood ratio. The bonus formula is obtained by making the single incentive constraint bind:
\[
P_i^1(\mathcal A_i)-P_i^0(\mathcal A_i;q_i)
=
\chi_i\frac{1+r_i^-r_i^+}{4}(f_i^*-q_i).
\]

Now specialize to uniform noise. For every $i\le n$,
\[
\Pr(\mathcal M_i\mid\actions=\mathbf{1})
=
\chi_i\frac{(1+r_i^-)(1+r_i^+)}{4},
\]
whereas
\[
\Pr(\mathcal M_i\mid\action_i=0,\,\actions_{-i}=\mathbf{1})
=
\chi_i\frac{1+r_i^-r_i^+}{4}.
\]
Thus the incentive gain from a unit bonus on $\mathcal M_i$ is
\[
\chi_i\frac{r_i^-+r_i^+}{4}.
\]
The bonus
\[
b_i
=
\frac{4c}{\chi_i(r_i^-+r_i^+)}
\]
satisfies the incentive constraint with equality. Its expected payment is
\[
b_i\Pr(\mathcal M_i\mid\actions=\mathbf{1})
=
\frac{c(1+r_i^-)(1+r_i^+)}{r_i^-+r_i^+}
=
W_i(1/2),
\]
so the symmetric match contract is optimal.

\medskip
\noindent\textbf{Combining agents.}
Applying the corresponding one-agent optimum to each $\contract_i$ gives an optimal BNI contract. Substituting $\chi_i=1$ for $i<n$ and $(\chi_n,r_n^+)=(2,0)$ for agent $n$ gives the displayed bonus and expected-payment formulas. This proves \hyperref[claim_bni]{Claim \ref{claim_bni}}.
\end{proof}

\subsection{Proof of \autoref{lemma_policy}}\label{appendixLemma1}
Take any task allocation $\tau = (t_1,..., t_n)$, and suppose that all agents exert effort.
Given a signal profile $\signals = (x(0), x(t_1),..., x(t_n))$, the principal's problem of choosing an action policy is $
 \min_p \mathbb{E}\left[\int_0^1 \mathbbm{1}(p(t) \neq x(t)) \, \dint  t \mid \tau, \signals\right]$. 
The integrand $\mathbbm{1}(p(t) \neq x(t))$ is bounded at each $t$.
Thus, we can ignore the set of locations at which the principal's action changes---which has zero measure---and rewrite the problem as a pointwise optimization, i.e.,  $
 \min_{p(t)\in \{A, B\}} \mathbb{E}[\mathbbm{1}(p(t) \neq x(t)) |\tau, \signals], \forall t\in [0,1]$. 
Note that $ \mathbb{E}[\mathbbm{1}(p(t) \neq x(t))|\tau, \signals] = \Pr(p(t) \neq x(t)|\tau, \signals)$.

Take any location $t \in [0,t_n]$, and let $i \in \{1,...,n\}$ satisfy $t \in [t_{i-1}, t_i]$, where $t_0 =0$.
Due to the Markov property of the state distribution, probability $\Pr(p(t) \neq x(t)|\tau, \signals)$ depends only on $\state(t_{i-1})$ and $\state(t_{i})$.
If $\state(t_{i-1})\neq \state(t_i)$, then $\Pr(x(t) = x(t_{i})|\tau, \signals) \geq \Pr(x(t) = x(t_{i-1})|\tau, \signals) \iff |t-t_{i}| \leq |t-t_{i-1}|$; if $\state(t_{i-1})=\state(t_i)$, both neighboring signals recommend the same action. 
Thus, the optimal action for location $t$ is $p(t) =\state(t^*)$, where $t^* \in \argmin_{t'\in \{t_{i-1}, t_i\}} |t' -t| = \argmin_{t'\in \{0,t_1,...,t_n\}} |t' -t|$.
This formula also applies to locations between $t_n$ and $1$, which do not have their right-hand neighbor. It remains to average the minimized posterior loss over truthful signal profiles to obtain the information loss in \eqref{inf_loss}. The detail is as follows. Let $\Delta_i=t_i-t_{i-1}$. Under the nearest-signal action, a point at distance $u$ from the used sampled location is misclassified exactly when its local state differs from that sampled local state. Averaging over truthful signal profiles, this occurs with probability $\frac{1}{2}(1-e^{-\lambda u})$ by \eqref{cor}. Hence, on the interval $[t_{i-1},t_i]$, the nearest-signal policy generates expected loss
\begin{align*}
&\int_{t_{i-1}}^{(t_{i-1}+t_i)/2}\frac{1-e^{-\lambda(t-t_{i-1})}}{2}\,dt
+\int_{(t_{i-1}+t_i)/2}^{t_i}\frac{1-e^{-\lambda(t_i-t)}}{2}\,dt =\int_0^{\Delta_i/2}(1-e^{-\lambda u})\,du
=\frac{\Delta_i}{2}-\frac{1-e^{-\lambda \Delta_i/2}}{\lambda}.
\end{align*}
The remaining interval $[t_n,1]$ is evaluated using the signal at $t_n$, so its loss is
\[
\int_{t_n}^{1}\frac{1-e^{-\lambda(t-t_n)}}{2}\,dt
=\frac{1-t_n}{2}-\frac{1-e^{-\lambda(1-t_n)}}{2\lambda}.
\]
Summing these terms and using $\sum_{i=1}^n\Delta_i=t_n$ gives
\begin{align*}
L(\tau)
&=\sum_{i=1}^n\left[\frac{\Delta_i}{2}-\frac{1-e^{-\lambda \Delta_i/2}}{\lambda}\right]
+\frac{1-t_n}{2}-\frac{1-e^{-\lambda(1-t_n)}}{2\lambda}
=\frac{1}{2}-\frac{2n+1}{2\lambda}
+\frac{1}{\lambda}\left[
\sum_{i=1}^n e^{-\lambda (t_i-t_{i-1})/2}
+\frac{1}{2}e^{-\lambda(1-t_n)}
\right],
\end{align*}
which is \eqref{inf_loss}.
\hfill $\square$

\subsection{Proof of \autoref{prop_policy}}
The information loss \eqref{inf_loss} in \autoref{lemma_policy} is strictly convex and symmetric in $\Delta t_i =t_i - t_{i-1}, \forall i = 1,..., n$.
Thus, for a given $t_n$, we should choose $t_1,..., t_{n-1}$ to equalize $\Delta t_i$.
Let $\Delta t$ be the equalized value with
$n \Delta t = t_n$.
Plugging $t_n = n\Delta t$ and $t_i - t_{i-1} = \Delta t$ into \eqref{inf_loss} and minimizing it with respect to $\Delta t$, we obtain 
$\Delta t^\dag = \frac{2}{2n + 1}$, or equivalently, $
t_i^\dag = \frac{2i}{2n + 1}, \forall i = 1, 2,..., n$. Then, the first-best information loss is $
L(\tau^\dag) = \frac{1}{2} + \frac{2n+1}{2\lambda}\left(e^{-\frac{\lambda}{2n + 1}} - 1\right)$. Denote $x^\dag \triangleq \lambda/(2n+1)$ and $y^\dag \triangleq e^{-x^\dag}<1$. The envelope theorem implies $\frac{\partial L(\tau^\dag)}{\partial n} = \frac{y^\dag-1}{\lambda}+\frac{y^\dag}{2n+1}<0$ and $\frac{\partial L(\tau^\dag)}{\partial \lambda} = \frac{(2n+1)(1-y^\dag)-\lambda y^\dag}{2\lambda^2}>0$. To verify the signs, rewrite the first derivative as $\{(1+x^\dag)e^{-x^\dag}-1\}/\lambda$ and the second derivative as $(2n+1)\{1-(1+x^\dag)e^{-x^\dag}\}/(2\lambda^2)$. Since $e^{x^\dag}>1+x^\dag$ for $x^\dag>0$, we have $(1+x^\dag)e^{-x^\dag}<1$, so the first derivative is negative and the second is positive.
\hfill $\square$

\subsection{Proof of \autoref{prop_task}}\label{appendixProp3}
\autoref{lemma_policy} and  \autoref{prop_inv} imply that the task allocation problem is
\begin{equation}\label{equationTask}
\min_{\tau} \frac{1}{2} - \frac{2n+1}{2\lambda} + \frac{1}{\lambda}\left[\sum_{i=1}^n e^{-\frac{\lambda \Delta t_i}{2}} + \frac{1}{2}e^{-\lambda (1-\sum_{i=1}^n\Delta t_i)}\right] + c\sum_{i=1}^n [1+e^{\lambda \Delta t_i}],
\end{equation}
which is strictly convex and symmetric in the choice variables, $\Delta t_1,...,\Delta t_n$.
As a result, there is a unique $\Delta t \in [0,1)$ such that $\Delta t_i = \Delta t$ for every $i$.
Compared to the first best, which minimizes the information loss, the principal's objective has the incentive cost, $c\sum_{i=1}^n [1+e^{\lambda \Delta t_i}] = cn [1+e^{\lambda \Delta t}]$.
This term is strictly increasing in $\Delta t$.
The argument of monotone comparative statics implies that the optimal $\Delta t$, denoted by $\Delta t^*$, is strictly less than the first-best solution, $\Delta t^\dag$.
\hfill $\square$

\subsection{Proof of \autoref{coro_task}
}\label{appendixCor2}

We consider two cases.
If $c<  \bar{c} \triangleq \frac{1-e^{-\lambda}}{2\lambda}$, then 
the optimal $\Delta t^*$ is strictly positive and uniquely solves the first-order condition
$x^{2(n-1)} + 2\lambda e^\lambda c -\frac{e^\lambda}{x^3}=0$,
where $x\triangleq e^{\frac{\lambda \Delta t^*}{2}}\geq 1$.
The left-hand side is strictly increasing in $n$ and $c$, so to maintain the equation, the optimal $\Delta t^*$ must decrease.
If $c\ge  \bar{c}$, then regardless of $n$, we have $\Delta t^* =0$ at the optimum. 
Therefore, the optimal $\Delta t^*$ is overall (weakly) decreasing in $c$ and $n$.

As for the comparative statics of $n\Delta t^*$,   we study its derivative with respect to $n$. We are to show that $\frac{\partial (n\Delta t^*)}{\partial n} = \frac{d\Delta t^*}{dn} \cdot n + \Delta t^* > 0$, which, by implicitly differentiating $\Delta t^*$ with respect to $n$ on the optimal condition, boils down to $
    3e^{\lambda (1-\frac{3}{2}\Delta t^*)} - 2e^{\lambda (n-1)\Delta t^*} > 0$. 
    This inequality always holds because $
    e^{\lambda (1-\frac{3}{2}\Delta t^*)} - e^{\lambda (n-1)\Delta t^*} = e^{\lambda (n-1)\Delta t^*}[e^{\lambda (1-\frac{3}{2}\Delta t^* + \Delta t^* - n\Delta t^*)} - 1] = e^{\lambda [1-(\frac{1}{2}+n)\Delta t^*]} - 1 > 0$. 
    For the rationale of the last inequality, recall that \autoref{prop_policy} implies $(\frac{1}{2}+n)\Delta t^\dag =(\frac{1}{2}+n) \cdot \frac{2}{2n+1} = 1$.
    By \autoref{prop_task}, $\Delta t^* < \Delta t^\dag$. Hence, the last inequality is true.
\hfill
$\square$

\subsection{Proof of  \autoref{random_eff2}}\label{appendix:random_task}
\label{app_random}
For each $p\in (0,1)$, define task allocation policy $(T^{p},\pi^{p})$ as follows: $T^p = \{\tau^\dag, \tau^\dag_1,..., \tau^\dag_n\}$, and 
\begin{equation}\label{random1}
\pi^p(\tau)\triangleq
\begin{cases}
1-p &\mbox{ if } \tau=\tau^{\dag}\\
\frac{p}{n} & \mbox{ if } \tau=\tau^{\dag}_i\triangleq(t_1^{\dag},...,t_{n-1}^{\dag}, t_i^{\dag}), i=1,2,...,n-1\\
\frac{p}{n} & \mbox{ if } \tau=\tau^{\dag}_n\triangleq(t_1^{\dag}, ..., t_{n-2}^{\dag}, t_n^{\dag},t_n^{\dag}).
\end{cases}
\end{equation}
This task allocation policy assigns all agents to their respective first-best locations, $t_1^{\dag},..., t_n^{\dag}$, with probability $1-p$.
With probability $p/n$ each, the allocation policy chooses $\tau^{\dag}_i$, which continues to assign agents $1,..., n-1$ to their first-best locations but now assigns agent $n$ to $t_i^{\dag}$ so that her signal is used to monitor agent $i$.
With the remaining probability $p/n$, the policy draws $\tau^{\dag}_n$, which assigns all agents but $n-1$ to their first-best locations and agent $n-1$ to $t_n^{\dag}$, so that agent $n-1$'s signal can be used to monitor agent $n$.
As a result, when agent $i\in \{1,..., n\}$ learns that her own assignment is $t^{\dag}_i$, she is uncertain whether the realized task allocation is $\tau^{\dag}$ or $\tau^{\dag}_i$, i.e., whether another agent is assigned to the same location as $i$.
When agent $i\in\{n-1,n\}$ is assigned to location $t^{\dag}_j\neq t^{\dag}_i$, she knows that she is privately appointed to monitor agent $j$ and her signal will be compared to the principal's or the left-neighboring agent's (risk-free) signal.

The following result presents a contract that, combined with the task allocation policy in \eqref{random1}, robustly implements the desired outcome.
\begin{lemma}\label{prop_random2}
Suppose that there are $n\ge 2$ agents. 
Assume that each agent observes the task allocation, the contract, and the realization of her assigned location, but not the realized locations of other agents.
Take any $p \in (0,1)$ and consider a stochastic task allocation policy $(T^{p},\pi^{p})$ defined by \eqref{random1}. For any $\epsilon>0$, the following contract $w^{p,\epsilon}$ is RIE:

For each agent $i=1,..., n-2$,
\begin{equation}\label{random_wage3}
w_i^{p,\epsilon}(\tau,\signals)=
\begin{cases}
\tfrac{2nc}{p}+\epsilon & \mbox{ if } \tau=\tau^{\dag}_i, \signal_i  = \signal_n,\\
0 &\mbox{ otherwise; }
\end{cases},
\end{equation}
for agent $n-1$,
\begin{equation}\label{random_wage4}
w_{n-1}^{p,\epsilon}(\tau, \signals)=
\begin{cases}
\tfrac{2nc}{p}+\epsilon & \mbox{ if } \tau=\tau^{\dag}_{n-1}, \signal_{n-1}=\signal_n,\\
2ce^{\lambda (t^{\dag}_n-t^{\dag}_{n-2})}+\epsilon &\mbox{ if } \tau=\tau^{\dag}_n, \signal_{n-1}=\signal_{n-2}\\
0 &\mbox{ otherwise;}
\end{cases},
\end{equation}
 and for agent $n$,
\begin{equation}\label{random_wage5}
w_n^{p,\epsilon}(\tau, \signals)=
\begin{cases}
\tfrac{2nc}{p}+\epsilon &\mbox{ if } \tau=\tau^{\dag}_n, \signal_{n}=\signal_{n-1}\\
2ce^{\lambda (t^{\dag}_i-t^{\dag}_{i-1})}+\epsilon &\mbox{ if } \tau=\tau^{\dag}_i, \signal_{n}=\signal_{i-1}, i<n\\
0 &\mbox{ otherwise. }
\end{cases}.
\end{equation}
The principal's incentive cost is $K(0) + \frac{p}{n}(2ce^{2\lambda t_1^\dag} + 2(n-1)ce^{\lambda t_1^\dag}-2nc)$, as if he can perfectly verify whether each agent's signal matches her assigned location state, except for the events in which he needs to provide strict incentive to the monitor.
\end{lemma}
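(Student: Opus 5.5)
The plan is to verify robust implementation directly by iterated elimination of strictly dominated strategies in the agent-normal form. I first list each agent's possible types together with the posterior each type induces over $T^p=\{\tau^\dag,\tau^\dag_1,\dots,\tau^\dag_n\}$.

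\begin{itemize}
\item Agents $1,\dots,n-2$ have the single type $t_i^\dag$. This type assigns probability $\frac{p/n}{1-p+p/n}$ to $\tau^\dag_i$, the only allocation in which $w_i$ is positive.
\item Agent $n-1$ has type $t^\dag_{n-1}$, which is consistent with $\tau^\dag$ and $\tau^\dag_i$ for $i\le n-1$. She also has type $t^\dag_n$, which reveals $\tau^\dag_n$.
\item Agent $n$ has type $t_i^\dag$ for $i<n$, which reveals $\tau^\dag_i$. She also has type $t^\dag_n$, which is consistent with $\tau^\dag$ and $\tau^\dag_n$.
\end{itemize}

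One fact is used at every step. If the comparison signal is truthful (its owner works), a shirker's signal $\eta_i(\tau)$ matches it with probability exactly $1/2$. This holds because, conditional on $\tau$ and on the comparison signal coming from a working agent, the local state is uniform, whatever the noise rule. Hence the gain from effort on a bonus event of size $b$ at monitoring distance $d$ is $\frac12 e^{-\lambda d}b$, exactly as in the proof of \autoref{lem_chain_contract}.

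The elimination order I would establish runs along an interleaved chain, from left to right:
\[
n@t_1^\dag,\ 1,\ n@t_2^\dag,\ 2,\ \dots,\ n@t^\dag_{n-1},\ n-1@t^\dag_{n-1},\ n-1@t^\dag_n,\ n@t^\dag_n.
\]
The first step is type $t^\dag_1$ of agent $n$. It knows $\tau^\dag_1$ and is paid $2ce^{\lambda t^\dag_1}+\epsilon$ on matching the principal's signal $s_0$. It therefore works as a strictly dominant strategy, exactly as in Step 1.1 of the proof of \autoref{prop-RIE}.

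Next comes agent $1$. She is paid only in $\tau^\dag_1$, where agent $n$ is colocated with her and has type $t^\dag_1$, which has already been eliminated from shirking. Effort therefore raises her match probability from $1/2$ to $1$. Her net gain is at least
\[
\tfrac{p/n}{1-p+p/n}\cdot\tfrac12\left(\tfrac{2nc}{p}+\epsilon\right)-c>\tfrac{p\epsilon}{2n}>0,
\]
mirroring the two-agent sketch in the text.

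Inductively, type $t_i^\dag$ of agent $n$ knows $\tau^\dag_i$ and is compared with $s_{i-1}$. Agent $i-1$ sits at $t^\dag_{i-1}$ and has already been eliminated from shirking, so this type works. Agent $i$ then works, because her monitor in $\tau^\dag_i$ is exactly that type. This reaches agent $n-1$ at type $t^\dag_{n-1}$; her monitor in $\tau^\dag_{n-1}$ is agent $n$ at $t^\dag_{n-1}$. Type $t^\dag_n$ of agent $n-1$ knows $\tau^\dag_n$ and is compared with agent $n-2$, who is already risk-free, so this type works. Finally, type $t^\dag_n$ of agent $n$ is paid only in $\tau^\dag_n$ on matching agent $n-1$, who is colocated and now risk-free. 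The same inequality as for agent $1$ applies.

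The main obstacle is bookkeeping rather than analysis. I must check three things:
\begin{itemize}
\item each type's bonus event involves only signals of types already eliminated, in the realized allocation that pays;
\item no type's posterior mixes in an allocation where its comparison partner is still risky;
\item the edge cases $n=2$ and $i=1$ (monitor $=0$) and agent $n-1$'s two roles fit the scheme.
\end{itemize}
For $n=2$ the ``$n-2$'' comparison is the principal's signal $s_0$.

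For the cost, I sum payments along the all-work outcome. Each big bonus $\frac{2nc}{p}+\epsilon$ is paid with probability $\frac{p}{n}\cdot 1$, because of colocation. These terms give $n\cdot 2c=K(0)$ plus an $O(p\epsilon)$ term. Each monitoring bonus $2ce^{\lambda d}$ is paid with probability $\frac pn\cdot\frac12(1+e^{-\lambda d})$. Here $d=t^\dag_1$ for the $n-1$ links $n@t_i^\dag\to i$, and $d=2t_1^\dag$ for the link $n-1@t^\dag_n\to n-2$, using the equal spacing from \autoref{prop_policy}. Collecting terms and letting $\epsilon\to0$ gives $K(0)+\frac pn\bigl(2ce^{2\lambda t^\dag_1}+2(n-1)ce^{\lambda t^\dag_1}-2nc\bigr)$ up to the $\frac12(1+e^{-\lambda d})$ factors. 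I would recheck those factors against the stated formula, since it is the one computation where I expect the constants to need adjusting.
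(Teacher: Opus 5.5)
Your elimination order ($n@t_1^\dag$, $1$, $n@t_2^\dag$, $2$, \dots, then agent $n-1$'s two types, then agent $n$'s type $t_n^\dag$) and the dominance check at each step are exactly the paper's argument, so the RIE part is correct.

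\textbf{Posterior slip.} Agents $1,\dots,n-2$ sit at $t_i^\dag$ under \emph{every} allocation in $T^p$, including $\tau_n^\dag$ when $n\ge 3$. Their type is therefore uninformative, and the posterior on $\tau_i^\dag$ is the prior $p/n$, not $\frac{p/n}{1-p+p/n}$. Your value is correct only for agent $n$'s type $t_n^\dag$; agent $n-1$'s type $t_{n-1}^\dag$ gives $\frac{p/n}{1-p/n}$. The bonus $\frac{2nc}{p}$ is calibrated to exactly this prior, so the net gain for agents $1,\dots,n-2$ is exactly $\frac{p\epsilon}{2n}$, not strictly more. It is still positive, so nothing breaks, but strictness rests entirely on $\epsilon$.

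\textbf{Cost.} Finish the computation you set up rather than hedging. As $\epsilon\to 0$ it gives
\[
K(0)+\frac{p}{n}\Bigl[c\bigl(1+e^{2\lambda t_1^\dag}\bigr)+(n-1)\,c\bigl(1+e^{\lambda t_1^\dag}\bigr)\Bigr].
\]
This is not the displayed expression, and your claim of agreement ``up to the factors'' is incorrect for two reasons.
\begin{itemize}
\item The match-probability factor turns each $2ce^{\lambda d}$ into $c(1+e^{\lambda d})$.
\item No payment in $w^{p,\epsilon}$ can produce the $-2nc$ term. All payments are nonnegative, and the $n$ colocation bonuses alone already sum to $K(0)+p\epsilon$.
\end{itemize}
The paper's own proof does reach the $c(1+e^{\lambda d})$ terms, but it then subtracts a $2pc$ with no counterpart in the contract. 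The stated cost formula therefore appears to be a slip in the paper, and you should not try to force your computation to match it. The discrepancy is harmless for Proposition \ref{random_eff2}, which only needs the excess over $K(0)$ to be $O(p)$.
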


\begin{proof}[Proof of \autoref{prop_random2}]
For agent $i$ with $i = 1, 2, ..., n-2$, her incentive cost is determined by the following linear programming. 
\[
\inf_{w_i(\tau_i^\dag), w_i(\tau \neq \tau_i^\dag)} \frac{p}{n}w_i(\tau_i^\dag) + \left(1-\frac{p}{n}\right)\frac{1}{2}(1+e^{-\lambda (t_i - t_{i-1})}) w_i(\tau \neq \tau_i^\dag)
\]
subject to \[
\frac{p}{n}w_i(\tau_i^\dag) + \left(1-\frac{p}{n}\right)\frac{1}{2}(1+e^{-\lambda (t_i - t_{i-1})}) w_i(\tau \neq \tau_i^\dag) - c > \frac{p}{n} \cdot \frac{1}{2}w_i(\tau_i^\dag) + \left(1-\frac{p}{n}\right)\frac{1}{2}w_i(\tau \neq \tau_i^\dag) 
\]
which leads to \eqref{random_wage3}. 
In this construction, agent 1's dominance incentive is established in the \emph{second} round of IESDS,\footnote{
We consider the agent normal form in the IESDS process, i.e., different types of an agent are treated as many agents.
} after agent $n$ with $\tau_1^\dag$. Moreover, each agent $i$'s dominance incentive is established in the $2i$-th round, immediately after agent $n$ with $\tau_i^\dag$, for all $i = 2, ..., n-2$.
Agent $n-1$ has two private types, the ``monitor'' type (when $\tau = \tau_{n}^\dag$) and the ``worker'' type (when $\tau \neq \tau_{n}^\dag$). By construction, when $\tau_{n}^\dag$ is realized, agent $n$'s dominance incentive is established \emph{after} $n-1$. Therefore, the monitor-type agent $n-1$ has a simple incentive cost under chain monitoring, and gets paid if and only if $\signal_{n-1} = \signal_{n-2}$ despite being located at $t_n$. The worker-type agent $n-1$ has the same incentive cost as agent $i = 1, 2, ..., n-2$. Combining both types, agent $n-1$'s wage scheme satisfies \eqref{random_wage4}. Finally, agent $n$ has $n-1$ (private) monitor types (when $\tau \neq \tau_n^\dag$) and one worker type (when $\tau = \tau_n^\dag$). In each monitor type with $\tau = \tau_i^\dag$, agent $n$ is paid if and only if her signal matches with agent $i-1$, whose dominance incentive is established one round before. Particularly, in the monitor type with $\tau = \tau_1^\dag$, agent $n$'s signal matches with the principal, establishing her dominance incentive in the \emph{first} round of IESDS. In the worker type, agent $n$ has the same incentive cost as agent $i = 1, 2, ..., n-2$. Combining all types, agent $n$'s wage scheme satisfies \eqref{random_wage5}. Thus, the total expected incentive cost is \begin{align*}
    &\lim_{\epsilon \rightarrow 0} \sum_{i=1}^{n} \frac{p}{n}\cdot (\frac{2nc}{p} + \epsilon) - 2pc + \frac{p}{n}\cdot [c(1+e^{2\lambda t_1^\dag}) + \epsilon] + \frac{n-1}{n}p \cdot [c(1+e^{\lambda t_1^\dag}) + \epsilon] \\
    &= K(0) + \frac{p}{n}(c(1+e^{2\lambda t_1^\dag}) + (n-1)c(1+e^{\lambda t_1^\dag})-2nc)
\end{align*}
The expression involving $t_1^\dag$ follows from $t_1^\dag = t_i^\dag - t_{i-1}^\dag$, $\forall i$.
\end{proof}

We now prove \autoref{random_eff2}. For an arbitrary constant $\delta \in (0, 1)$, define $p_k = \delta^k$, $k \in \mathbb{N}_+$, and choose any sequence $\epsilon_k>0$ with $\epsilon_k\rightarrow 0$. We obtain a sequence of probabilities $\{p_k\}_{k=1}^\infty$. Define $(T^{p_k}, \pi^{p_k})$ by \eqref{random1}.  \autoref{prop_random2} implies that, for each $k$, the contract $w^{p_k,\epsilon_k}$ is RIE. The resulting total loss differs by terms that vanish with $\epsilon_k$ from
\[
(1-p_k) L(\tau^\dag) + \frac{n-1}{n}p_k [L(\tau^\dag)+const]+ \frac{1}{n}p_k [L(\tau^\dag)+const] + K(0) + \frac{p_k}{n}(2ce^{2\lambda t_1^\dag} + 2(n-1)ce^{\lambda t_1^\dag}-2nc).
\]
As $k\rightarrow \infty$, we obtain the limit in \autoref{random_eff2}, i.e., $L(\tau^\dag) + K(0)$.
\hfill $\square$

\newpage

\section[Appendix: Omitted Details in Appendix A.1]{Appendix: Omitted Details in \hyperref[apx:proof-RIE]{Appendix \ref{apx:proof-RIE}}}\label{apx:omitted_details}
In {\bf Step 1.1} of \hyperref[apx:proof-RIE]{Appendix \ref{apx:proof-RIE}}, constraints \eqref{equationICA} and \eqref{equationICB} for agent $i_1$ can be rewritten as \begin{align}
        &\sum_\signals w_{i_1}(\signals) [\Pr(\signals|\action_{i_1} = 1, \actions_{-i_1}= \boldsymbol{0},\shirk_{i_1} = A, \shirks_{-i_1}) - \Pr(\signals|\action_{i_1} = 0,\actions_{-i_1}= \boldsymbol{0},\shirk_{i_1} = A, \shirks_{-i_1})] \geq c \label{ICA-int1}\\
        &\sum_\signals w_{i_1}(\signals) [\Pr(\signals|\action_{i_1} = 1, \actions_{-i_1}= \boldsymbol{0},\shirk_{i_1} = B, \shirks_{-i_1})-\Pr(\signals|\action_{i_1} = 0,\actions_{-i_1}= \boldsymbol{0},\shirk_{i_1} = B, \shirks_{-i_1})] \geq c \label{ICB-int1}
\end{align}
  $\forall \ \shirks_{-i_1} \in \{A,B\}^{n-1}$. Let $w_{i_1}(\signal_0\signal_{i_1}):=w_{i_1}(\signal_0\signal_{i_1}; \shirks_{-i_1})$ 
  for every fixed $\shirks_{-i_1}$. Fix $\shirk_{i_1} = A$. Enumerate $[\Pr(\signals|\action_{i_1} = 1, \actions_{-i_1}= \boldsymbol{0},\shirks) - \Pr(\signals|\action_{i_1} = 0,\actions_{-i_1}= \boldsymbol{0},\shirks)]$ for each combination of $\signal_0\signal_{i_1}$ as:
\begin{align*}
    &\Pr(AA; \signals_{-i_1}|\action_{i_1} = 1, \actions_{-i_1}= \boldsymbol{0},\shirk_{i_1} = A,\shirks_{-i_1}) - \Pr(AA; \signals_{-i_1}|\action_{i_1} = 0,\actions_{-i_1}= \boldsymbol{0},\shirk_{i_1} = A,\shirks_{-i_1}) \\
    &= \underbrace{\frac{1}{2}}_{\Pr(\signal_0 = A)}\times \underbrace{\frac{1}{2}(1+e^{-\lambda t_{i_1}})}_{\Pr(\state(0) = \state(t_{i_1}))} - \underbrace{\frac{1}{2}}_{\Pr(\signal_0 = A)}\times \underbrace{1}_{\Pr(\signal_{i_1}=A|\action_{i_1} = 0,\shirk_{i_1} = A)} = -\frac{1}{4}(1 - e^{-\lambda t_{i_1}})\\
    &\Pr(AB; \signals_{-i_1}|\action_{i_1} = 1, \actions_{-i_1}= \boldsymbol{0},\shirk_{i_1} = A,\shirks_{-i_1}) - \Pr(AB; \signals_{-i_1}|\action_{i_1} = 0,\actions_{-i_1}= \boldsymbol{0},\shirk_{i_1} = A,\shirks_{-i_1})  \\
    &= \frac{1}{2} \times \frac{1}{2}(1-e^{-\lambda t_{i_1}}) - \frac{1}{2}\times \underbrace{0}_{\Pr(\signal_{i_1} = B|\action_{i_1} = 0,\shirk_{i_1} = A)} =\frac{1}{4}(1-e^{-\lambda t_{i_1}})  \\
    &\Pr(BA; \signals_{-i_1}|\action_{i_1} = 1, \actions_{-i_1}= \boldsymbol{0},\shirk_{i_1} = A,\shirks_{-i_1}) - \Pr(BA; \signals_{-i_1}|\action_{i_1} = 0,\actions_{-i_1}= \boldsymbol{0},\shirk_{i_1} = A,\shirks_{-i_1}) 
    = -\frac{1}{4}(1+e^{-\lambda t_{i_1}})\\
    &\Pr(BB; \signals_{-i_1}|\action_{i_1} = 1, \actions_{-i_1}= \boldsymbol{0},\shirk_{i_1} = A,\shirks_{-i_1}) - \Pr(BB; \signals_{-i_1}|\action_{i_1} = 0,\actions_{-i_1}= \boldsymbol{0},\shirk_{i_1} = A,\shirks_{-i_1}) 
    =  \frac{1}{4}(1+e^{-\lambda t_{i_1}})
\end{align*}
Substituting in \eqref{ICA-int1} and rearranging, we get \eqref{ICA1.1}.
 The calculations for \eqref{ICB1.1} is symmetric.

\medskip

{
To see that the chain case of \textbf{Step 1.3} ($0 \leq t_{i_1} \leq t_{i_2}$) can reduce to \textbf{Step 1.1}, note that agent $i_2$'s contract must satisfy the following set of constraints \begin{multline*}
 \sum_\signals w_{i_2}(\signals) [\Pr(\signals|\action_{i_2} = 1,\action_{i_1} = 1, \actions_{O} = \mathbf{1}, \actions_{N}= \boldsymbol{0},\shirks) - \Pr(\signals|\action_{i_2} = 0,\action_{i_1} = 1, \actions_{O} = \mathbf{1}, \actions_{N}= \boldsymbol{0},\shirks)  ]
 \geq
 c 
\end{multline*}
for every $\shirk_{i_2}  \in \{A, B\}$ and every $\shirks_{-i_2} \in \{A,B\}^{n-1}$. Since the outer risky agents and agent $i_1$ are working, the only relevant constraints are indexed by  $\shirk_{i_2}  \in \{A, B\}$ and $\shirks_{N} \in \{A,B\}^{|N|}$. Denote $P^*(\signals) := \Pr(\signals|\actions = \mathbf{1})$.
Fix each combination of  $(\shirk_{i_2}, \shirks_{N})$. The likelihood difference is \begin{align*}
    &[\Pr(\signals|\action_{i_2} = 1,\action_{i_1} = 1, \actions_{O} = \mathbf{1}, \actions_{N}= \boldsymbol{0},\shirks) - \Pr(\signals|\action_{i_2} = 0,\action_{i_1} = 1, \actions_{O} = \mathbf{1}, \actions_{N}= \boldsymbol{0},\shirks)  ] \\
    &=  P^*(\signal_0\signals_O\signal_{i_1}) \cdot  \\
    &\qquad \qquad [\Pr(\signal_{i_1}\signal_{i_2}; \signals_{N}|\action_{i_2} = 1,\action_{i_1} = 1, \actions_{N}= \boldsymbol{0},\shirks_{N}, \signal_{i_1}) -\Pr(\signal_{i_1}\signal_{i_2}; \signals_{N}|\action_{i_2} = 0,\action_{i_1} = 1, \actions_{N}= \boldsymbol{0},\shirks_{N},\signal_{i_1})   ]
\end{align*}
with $s_{i_1}s_{i_2} \in \{AA, BB, AB, BA\}$ and ${\signal_0\signals_O} \in \{A,B\}^{|O|+1}$. Then, for each combination of $(\shirk_{i_2}, \shirks_{N})$, the IC becomes\begin{multline*}   
 \sum_{\signal_0\signals_O\signal_{i_1}}P^*(\signal_0\signals_O\signal_{i_1}) \cdot\\
 [\Pr(\signal_{i_1}\signal_{i_2}; \signals_{N}|\action_{i_2} = 1,\action_{i_1} = 1, \actions_{N}= \boldsymbol{0},\shirks_{N}, \signal_{i_1}) -\Pr(\signal_{i_1}\signal_{i_2}; \signals_{N}|\action_{i_2} = 0,\action_{i_1} = 1, \actions_{N}= \boldsymbol{0},\shirks_{N},\signal_{i_1})   ]w_{i_2}(\signals) \geq c.
\end{multline*}
By \textbf{Step 1.2}, for any fixed $w_{i_2}(\signals)$, we can define $\hat{w}_{i_2}(\signal_0\signals_{O}; \signal_{i_1},\signals_{N},\signal_{i_2})$ that is constant across $\signal_0\signals_{O} \in \{A,B\}^{|O|+1}$. Therefore, the left-hand side of IC simplifies into \begin{align*}
     &\sum_{\signal_{i_1}, \signal_{i_2}} [\Pr(\signal_{i_1}\signal_{i_2}; \signals_{N}|\action_{i_2} = 1,\action_{i_1} = 1, \actions_{N}= \boldsymbol{0},\shirks_{N}) -\Pr(\signal_{i_1}\signal_{i_2}; \signals_{N}|\action_{i_2} = 0,\action_{i_1} = 1, \actions_{N}= \boldsymbol{0},\shirks_{N})   ]\cdot\\
     &\qquad \qquad \hat{w}_{i_2}(\signal_{i_1}, \signal_{i_2}; \signals_{-(i_1, i_2)}) \quad \cdot \quad   \underbrace{\sum_{\signal_0\signals_O}P^*(\signal_0\signals_O\signal_{i_1}|\signal_{i_1})}_{=\Pr(\state_{i_1} = \signal_{i_1}) = \frac{1}{2}}.
\end{align*}
Then the set of IC exactly reduces to that in \textbf{Step 1.1}. Calculating using the same procedure, we show that in finding a lower bound of payment, the principal should set $w_{i_2}^*(\signal_{i_1} = \signal_{i_2}) = 2ce^{\lambda (t_{i_2}-t_{i_1})}> 0$ and $w_{i_2}^*(\signal_{i_1} \neq \signal_{i_2})  = 0$. The lower bound of payment for agent $i_2$ is then \begin{align*}
    \hat{K}_{i_2} &= \sum_{\signals \in \{A, B\}^{n+1}}w_{i_2}^*(\signals)P^*(\signals) = P^*(s_{i_1} = s_{i_2}) \cdot \contract_{i_2}^*(s_{i_1} = s_{i_2}) = c(1+e^{\lambda (t_{i_2} - t_{i_1})})
\end{align*}

Now, we turn to the sandwich case of \textbf{Step 1.3} ($0 \leq t_{i_2} \leq t_{i_1}$). Here, we work out the case with full generality (as in \textbf{Step 1.4}) by explicitly handling the set of non-neighbor risk-free agents, $M$ (which is empty  in the sandwich case of \textbf{Step 1.3}). According to the definitions in \textbf{Step 1.2}, the left-risk-free neighbor is the principal, i.e., $L = 0$, and the right-risk-free-neighbor is agent $i_1$, i.e., $R = i_1$. It proves helpful to divide the set of outer risky agents, $O$, into the \textit{left} outer risky agents, $O_l$, if they are located to the left of $L$;  and the \textit{right} outer risky agents, $O_r$, if they are located to the right of $R$. Similarly, define the set of left-risk-free agents (except for the neighbor) to be $M_l$, and the set for those on the right to be $M_r$.

First, we reduce agent $i_2$'s IC. For every fixed combination of $(\shirk_{i_2}, \shirks_N)$, the likelihood difference in the IC is 
\begin{align*}
    &[\Pr(\signals|\action_{i_2} = 1,\action_{i_1} = 1, \actions_{O} = \mathbf{1}, \actions_{N}= \boldsymbol{0},\shirks) - \Pr(\signals|\action_{i_2} = 0,\action_{i_1} = 1, \actions_{O} = \mathbf{1}, \actions_{N}= \boldsymbol{0},\shirks)  ] \\
    &=  P^*(\signals_{M_l},\signals_{O_l},\signal_{L}) \cdot  \\
    &\qquad [\Pr(\signal_{L}\signal_{i_2}\signal_{R}; \signals_{N}|\action_{i_2} = 1, \actions_{N}= \boldsymbol{0},\shirks_{N}, \signal_{L}, \signal_{R}) -\Pr(\signal_{L}\signal_{i_2}\signal_{R}; \signals_{N}|\action_{i_2} = 0, \actions_{N}= \boldsymbol{0},\shirks_{N},\signal_{L}, \signal_{R}) ]\cdot 
    \\
    &\qquad \qquad \qquad \qquad \qquad \qquad \qquad   P^*(\signal_R,\signals_{O_r},\signal_{M_r}) 
\end{align*}
By \textbf{Step 1.2}, for any fixed $w_{i_2}(\signals)$, we can define $\hat{w}_{i_2}(\signals_{M_l},\signals_{O_l}, \signals_{O_r},\signal_{M_r}; \signal_{L}, \signal_{R}, \signals_{N})$ that is constant across $(\signals_{M_l},\signals_{O_l}, \signals_{O_r},\signal_{M_r}) \in \{A,B\}^{|M| + |O|}$, given $(\signal_{L}, \signal_{R}, \signals_{N})$. Therefore, we can simplify the IC into 
\begin{align*}
&\sum_{\signal_{L}, \signal_{R}, \signal_{i_2}} [\Pr(\signal_{L}\signal_{i_2}\signal_{R}; \signals_{N}|\action_{i_2} = 1, \actions_{N}= \boldsymbol{0},\shirks_{N}) -\Pr(\signal_{L}\signal_{i_2}\signal_{R}; \signals_{N}|\action_{i_2} = 0, \actions_{N}= \boldsymbol{0},\shirks_{N}) ]
\hat{w}_{i_2}(\signal_{L}\signal_{i_2}\signal_{R}; \signals_{-(L,i_2,R)}) \ \cdot\\
&\qquad \qquad \underbrace{\sum_{\signals_{M_l},\signals_{O_l},\signals_{O_r},\signal_{M_r}} P^*(\signals_{M_l},\signals_{O_l}|\signal_{L}) \cdot  P^*(\signals_{O_r},\signal_{M_r}|\signal_R)}_{=\ \Pr(\state_{L} = \signal_{L}, \state_{R} = \signal_{R}) \ =\ \frac{1}{4}} \geq c 
\end{align*}
 for every $\shirk_{i_2}  \in \{A, B\}$ and every $\shirks_{N} \in \{A,B\}^{|N|}$. We can enumerate the likelihood difference \[
 [\Pr(\signal_{L}\signal_{i_2}\signal_{R}; \signals_{N}|\action_{i_2} = 1, \actions_{N}= \boldsymbol{0},\shirks_{N}) -\Pr(\signal_{L}\signal_{i_2}\signal_{R}; \signals_{N}|\action_{i_2} = 0, \actions_{N}= \boldsymbol{0},\shirks_{N}) ]
 \]
for each combination of $\signal_L\signal_{i_2}\signal_R$. When $\shirk_{i_2} = A$ and the inner risky agents' shirking signal profile is $\shirks_N$, the relevant likelihood differences are (we henceforth omit $\signals_N$ and $\actions_N = 0$):
\begin{align*}
       \Pr&(AAA|\action_{i_2} = 1,\shirk_{i_2} = A,\shirks_{N}) - \Pr(AAA|\action_{i_2} = 0,\shirk_{i_2} = A,\shirks_{N})\\
       &=\underbrace{\frac{1}{2}}_{\Pr(\signal_L = A)} \times 
       \underbrace{\frac{1}{2}(1+e^{-\lambda (t_{i_2}-t_L)})}_{\Pr(\state(t_L) = \state(t_{i_2}))} \times
       \underbrace{\frac{1}{2}(1+e^{-\lambda (t_{R} - t_{i_2})})}_{\Pr(\state(t_{i_2}) = \state(t_{R}))}  - \underbrace{\frac{1}{2}}_{\Pr(\signal_L = A)} \times \underbrace{1}_{\Pr(\signal_{i_2} = A|\shirk_{i_2} = A)} \times \underbrace{\frac{1}{2}(1+e^{-\lambda (t_{R} - t_L)})}_{\Pr(\state(t_L) = \state(t_{R}))}  
       = -\frac{1}{8} P_L^-P_R^-\\
       \Pr&(ABA|\action_{i_2} = 1,\shirk_{i_2} = A,\shirks_{N}) - \Pr(ABA|\action_{i_2} = 0,\shirk_{i_2} = A,\shirks_{N})\\
       &=\underbrace{\frac{1}{2}}_{\Pr(\signal_L = A)} \times 
       \underbrace{\frac{1}{2}(1-e^{-\lambda (t_{i_2}-t_L)})}_{\Pr(\state(t_L) \neq  \state(t_{i_2}))} \times
       \underbrace{\frac{1}{2}(1-e^{-\lambda (t_{R} - t_{i_2})})}_{\Pr(\state(t_{i_2}) \neq \state(t_{R}))}  - \underbrace{\frac{1}{2}}_{\Pr(\signal_L = A)} \times \underbrace{0}_{\Pr(\signal_{i_2} = B|\shirk_{i_2} = A)} \times \underbrace{\frac{1}{2}(1+e^{-\lambda (t_{R}-t_L)})}_{\Pr(\state(t_L) = \state(t_{R}))}  
       = \frac{1}{8} P_L^-P_R^-\\
        \Pr&(ABB|\action_{i_2} = 1,\shirk_{i_2} = A,\shirks_{N}) - \Pr(ABB|\action_{i_2} = 0,\shirk_{i_2} = A,\shirks_{N}) = \frac{1}{8} P_L^-P_R^+\\
       \Pr&(AAB|\action_{i_2} = 1,\shirk_{i_2} = A,\shirks_{N}) - \Pr(AAB|\action_{i_2} = 0,\shirk_{i_2} = A,\shirks_{N}) = -\frac{1}{8} P_L^-P_R^+\\
       \Pr&(BBA|\action_{i_2} = 1,\shirk_{i_2} = A,\shirks_{N}) - \Pr(BBA|\action_{i_2} = 0,\shirk_{i_2} = A,\shirks_{N}) = \frac{1}{8} P_L^+P_R^-\\
       \Pr&(BAA|\action_{i_2} = 1,\shirk_{i_2} = A,\shirks_{N}) - \Pr(BAA|\action_{i_2} = 0,\shirk_{i_2} = A,\shirks_{N}) = -\frac{1}{8} P_L^+P_R^-\\
       \Pr&(BAB|\action_{i_2} = 1,\shirk_{i_2} = A,\shirks_{N}) - \Pr(BAB|\action_{i_2} = 0,\shirk_{i_2} = A,\shirks_{N}) = -\frac{1}{8} P_L^+P_R^+\\
       \Pr&(BBB|\action_{i_2} = 1,\shirk_{i_2} = A,\shirks_{N}) - \Pr(BBB|\action_{i_2} = 0,\shirk_{i_2} = A,\shirks_{N}) = \frac{1}{8} P_L^+P_R^+
   \end{align*}
   where $ P_L^+ = 1 + e^{-\lambda (t_{i_2}-t_L)},  P_L^- = 1 - e^{-\lambda (t_{i_2}-t_L)},  P_R^+ = 1 + e^{-\lambda (t_R - t_{i_2})},  P_R^- = 1 - e^{-\lambda (t_R - t_{i_2})}$. Plugging the above into the constraint with $\shirk_{i_2} = A$, and substituting $t_L = 0$, $t_R = t_{i_1}$, 
   we get \eqref{ICA1.3}.
Similar calculations yield \eqref{ICB1.3}. 
Finally, we illustrate how to verify
\[
\frac{\Pr(AAA|\mathbf{a} = \mathbf{1})}{\frac{1}{8}[(P_L^+P_R^+)^2 - (P_L^-P_R^-)^2]} \leq \min\left\{\frac{\Pr(AAB|\mathbf{a} = \mathbf{1})}{\frac{1}{8}[P_R^+P_R^- ((P_L^+)^2 - (P_L^-)^2)]}, \frac{\Pr(BAA|\mathbf{a} = \mathbf{1})}{\frac{1}{8}[P_L^+P_L^- ((P_R^+)^2 - (P_R^-)^2)]}\right\}
\]
For the first term on the right-hand side, the inequality becomes
$\frac{\frac{1}{8}P_L^+P_R^+}{\frac{1}{2}(1+e^{-\lambda t_{i_1}})(e^{-\lambda (t_{i_1} - t_{i_2})} + e^{-\lambda t_{i_2}})} \leq \frac{\frac{1}{8}P_L^+P_R^-}{\frac{1}{2}P_R^+P_R^-e^{-\lambda t_{i_2}}}$, which simplifies to $2e^{-\lambda t_{i_1}} \leq e^{-\lambda (t_{i_1} - t_{i_2})} + e^{-\lambda (t_{i_1}+t_{i_2})}$. It holds (strictly) as long as $t_{i_1} \geq (>) t_{i_2}$. The other inequality is obtained by similar calculations.

}

\end{document}